
\documentclass[pdflatex,sn-mathphys,iicol]{sn-jnl}
\usepackage{xspace}
\usepackage{url}
\usepackage{amssymb}
\usepackage{caption}

\jyear{2023}%

\theoremstyle{thmstyleone}%
\newtheorem{theorem}{Theorem}
%
\newtheorem{lemma}[theorem]{Lemma}

\theoremstyle{thmstyletwo}%

\theoremstyle{thmstylethree}%
\newtheorem{definition}{Definition}%

\raggedbottom

\begin{document}

\title[ ]{Scheduling Periodic Messages on a Shared Link without Buffering}


\author*[1,2]{\fnm{Ma\"el} \sur{Guiraud}}\email{mguiraud@cesi.fr}
\equalcont{These authors contributed equally to this work.}
\author*[2]{\fnm{Yann} \sur{Strozecki}}\email{yann.strozecki@uvsq.fr}
\equalcont{These authors contributed equally to this work.}

\affil*[1]{\orgdiv{LINEACT}, \orgname{CESI Nanterre}, \orgaddress{\street{93 boulevard de la Seine }, \city{Nanterre}, \postcode{BP 602 Cedex 92006}, \country{France}}}

\affil[2]{\orgdiv{DAVID Laboratory}, \orgname{Université de Versailles-Saint-Quentin}, \orgaddress{\street{45 Avenue des Etats-Unis}, \city{Versailles}, \postcode{78000},  \country{France}}}

\newcommand{\Fo}{\textsf{FO}} 
\newcommand{\NP}{\textbf{NP}} 
\newcommand{\FPT}{\textbf{FPT}} 
\newcommand{\Fmo}{\textsf{FMO}}
\newcommand\pma{\textsc{pma}\xspace}
\newcommand\firstfit{\texttt{First Fit}\xspace}
\newcommand\compactpair{\texttt{Compact Pairs}\xspace}
\newcommand\metaoffset{\texttt{Meta Offset}\xspace}
\newcommand\greedyuniform{\texttt{Greedy Uniform}\xspace}
\newcommand\swapandmove{\texttt{Swap and Move}\xspace}
\newcommand\compactfit{\texttt{Compact Fit}\xspace}
\newcommand\greedypotential{\texttt{Greedy Potential}\xspace}
\newcommand{\todo}[1]{{\color{red} TODO: {#1}}}








\abstract{
Cloud-RAN, a novel architecture for modern mobile networks, relocates processing units from antenna to distant data centers. This shift introduces the challenge of ensuring low latency for the periodic messages exchanged between antennas and their respective processing units. In this study, we tackle the problem of devising an efficient periodic message assignment scheme under the constraints of fixed message size and period \emph{without contention nor buffering}.

We address this problem by modeling it on a common network topology, wherein contention arises from a single shared link servicing multiple antennas. While reminiscent of coupled-task scheduling, the introduction of periodicity adds a unique dimension to the problem. We study how the problem behaves with regard to the \emph{load} of the shared link, and we focus on proving that, for load as high as possible, a solution \emph{always} exists and it can be found in polynomial time.

The main contributions of this article are two polynomial-time algorithms,  which find a solution for messages of any size and load at most $2/5$ or for messages of size one and load at most $\phi - 1$, the golden ratio conjugate. We also prove that a randomized greedy algorithm finds a solution on almost all instances with high probability, shedding light on the effectiveness of greedy algorithms in practical applications.
}

\keywords{Periodic Scheduling, Greedy Algorithm, Randomized Algorithm, Experimental Algorithms, C-RAN} 

\maketitle

\section{Introduction}

The Radio Access Networks (RAN) architecture is the part of the mobile phone network which communicates with mobile handsets. It is composed of base stations managing radio emissions and multiple computations, connected to the core network~\cite{hossain2019recent}. 
An objective of 5G+ is to split a base station into two parts: The Remote Radio Head (RRH), in charge of the radio emissions, and the Baseband Unit (BBU\footnote{Others terminologies exist in the literature. The results of this work are fully compatible with any variation of the C-RAN architecture.}) in charge of the computations. In the Cloud RAN architecture (C-RAN), to reduce maintenance and energy consumption costs~\cite{gavrilovska2020cloud,mobile2011c,checko2014cloud}, the BBUs are gathered in one or several data centers and are connected to the RRHs via the \emph{fronthaul} network.
 The main challenge of C-RAN is to reach a latency compatible with transport protocols~\cite{ieeep802}, to support functions like  HARQ (Hybrid Automatic Repeat reQuest) in only $3$ms~\cite{bouguen2012lte}. The latency is measured between the sending of a message by an RRH and the reception of the answer, computed by a BBU in the cloud. In addition to the latency constraint, the specificity of C-RAN is the periodicity of the data transfer in the fronthaul network between RRHs and BBUs: messages need to be emitted and received each millisecond~\cite{dogra2020survey,3gpp5g,romano2019imt}.

Our approach is based on the URLLC (Ultra-Reliable Low-Latency Communication) context~\cite{siddiqui2023urllc}, aiming to ensure minimal packet loss and low latency in networks. In addition to minimum latency and $0\%$ packet loss, we guarantee no jitter in our streams. To achieve this, it is essential to have guarantees that the network can deliver messages on specific dates, using a central controller that activates output ports on given slots. These guarantees are based on the solutions proposed by the TSN (Time Sensitive Networking) group~\cite{ieee802}, in particular the 802.1 Qcc standard for central network control and the 802.1 Qbv standard for individual flow management. This desire to control the network using a software controller is the very essence of Software Defined Networking (SDN)~\cite{waseem2022software}.

The latest generation of Radio Access Networks (RANs), known as O-RAN~\cite{polese2023understanding}, merges the aspiration to relocate Baseband Units (BBUs) from Cloud RAN Remote Radio Heads (RRHs) with the necessity to manage networks through an additional software layer. A prototype already exists, substantiating the practical feasibility of our approach~\cite{guiraud2022experimental,leclerc2016transmission,marce2018Coordinated}.

Our aim is to operate a C-RAN on a low-cost shared switched network: several (tens of) antennas share a high-speed link to send their periodic messages to one (or several) data center. This shared link is the only contention point for a message going to the data center and it is also the contention point for the answer sent back by the data center to the antenna. This model with two contention points (one for the message and one for its answer) also captures other periodic systems such as processors communicating over a bus or sensors doing periodic radio transmissions on the same frequency.

We address the following question: \emph{is it possible to schedule periodic messages on a shared link without using buffers}? Eliminating this source of latency leaves us with more time budget for latency due to the physical length of the routes in the network, and thus allows for wider deployment areas. Our proposed solution is to compute beforehand a \emph{periodic and deterministic} sending scheme, which completely avoids contention.  While a sending scheme without buffering is simpler and less expensive to implement in C-RAN networks than one with buffering, it may not exist and we have previously investigated the case with a buffer~\cite{barth2018deterministic}. 

The algorithmic problem studied, called \emph{Periodic Message Assignment} or \pma, is as follows: Given a period, a message size, and, for each message a delay between the two contention points, set a departure time in the period for each message so that they go through both contention points without collision. All values are integers, a unit of time corresponds to the time to transmit a minimal quantity of data over a link. 

The load of a network with a link shared by all messages is the load of this link, defined as the ratio of the bandwidth used to the bandwidth available. In our context, it is the time used by messages on the link in a period, divided by the period. When the load is small, it is easier to design sending scheme without collision, as already noted in~\cite{barth2018deterministic,guiraud2021deterministic}.
The aim of this article is to \textbf{determine the largest load} under which it is \textbf{always possible to find a sending scheme} and to give \emph{polynomial time} algorithms to produce it. 
Knowing this load enables the link to be correctly sized. Increasing this load, we can have more antennas sharing the same link, which reduces the cost of deployment of the fronthaul network. Moreover, a method to mix the traffic of the antennas with random sources of traffic requires finding sending schemes for a load as high as possible~\cite{guiraud2021deterministic}. 

A pre-print version of this paper is available at \cite{DBLP:journals/corr/abs-2002-07606}.

\paragraph*{Related Works}

The model we study was recently introduced in~\cite{barth2018deterministic,guiraud2021deterministic} to find sending schedules for C-RAN messages, with \emph{buffering allowed}. This problem has also been studied for a cycle topology instead of a shared link~\cite{Guir1905:Deterministic}. In these articles, the main results are heuristics, using classical scheduling algorithms as subroutines, and fixed-parameter tractable algorithms which find a sending scheme with \emph{minimal latency}. The problem of finding sending schemes with \emph{no additional latency}, which is the subject of this article under the name \pma, is introduced for the first time and briefly studied in~\cite{bartharxiv2018deterministic}.

If we ignore periodicity in the Periodic Message Assignment problem, it can be compared to several classical scheduling problems. The Periodic Message Assignment problem is similar to a coupled-task scheduling problem~\cite{shapiro1980scheduling,khatami2020coupled,chen2021scheduling} in a two flow-shop environment. Scheduling coupled-tasks is $\NP$-complete in most settings~\cite{orman1997complexity}, while for identical jobs (all delays are the same) it is in polynomial time~\cite{baptiste2010note}. Applications cited in these works are often related to radar transmission, which is very similar to our problem since it involves sending and receiving a given quantity of data with a fixed interval of time betwen sending and receiving.

Alternatively, \pma can be interpreted as a two-machine flow shop scheduling problem~\cite{khatami2023flow,zhao2020two,johnson1954optimal,yu2004minimizing} with an exact delay between tasks. The problem with an exact delay has been investigated and found to be $\NP$-hard except when all delays are equal~\cite{leung2007scheduling}. The periodicity adds more constraints since the sending pattern for a single period must be repeated without creating collision at contention points. When considering a two-machine flow shop problem, the aim is usually to minimize the makespan, the schedule length, or the sum of job completion times. In our periodic variant, these quantities are irrelevant and we look for any feasible periodic schedule without buffering, that is respecting the exact delay between the two coupled tasks.

 To our knowledge, periodic scheduling problems studied in the literature are quite different from the problem studied in this article. The problem of scheduling periodic tasks dates back to the 70s~\cite{liu1973scheduling}, with an emphasis on the maximal load as in this article. In~\cite{liu1973scheduling} only a single contention point is modeled and the scheduling is preemptive, which makes the proposed algorithms irrelevant to our problem.

 Variations on the problem of minimizing latency of periodic messages in networks have been considered and practically solved, using mixed-integer programming~\cite{nayak2017incremental,steiner2018traffic} or an SMT solver~\cite{dos2019tsnsched}, but without theoretical guarantees on the quality of the produced solutions nor on the computation time. Recent work that aims to schedule periodic flows using TSN technologies employs a model very similar to ours in terms of architecture and the objective of eliminating jitter~\cite{9472838}. However, the studied model does not account for round-trip packets, and the authors provide only a greedy algorithm. Typical applications cited in these works (out of C-RAN) are sensor networks communicating periodically inside cars or planes, or logistic problems in production lines, which can also be captured in our model.

 In another line of work~\cite{korst1991periodic,hanen1993cyclic}, the aim is to minimize the number of processors on which the periodic tasks are scheduled, while our problem corresponds to two fixed and different processors. In cyclic scheduling~\cite{levner2010complexity}, the aim is to minimize the period of a schedule to maximize the throughput, while our period is fixed. 

 The train timetabling problem~\cite{lusby2011railway} and in particular, the periodic event scheduling problem~\cite{serafini1989mathematical} or cyclic train timetabling~\cite{zhang2019solving} are generalizations of our problem since they take the period as input and can express the fact that two trains (like two messages) should not cross. However, they are much more general: the trains can vary in size and speed, the network can be more complex than consecutive single tracks and there are precedence constraints. Hence, the numerous variants of train scheduling problems are very hard to solve. Therefore, some delay is allowed in different parts of the network to make these problems solvable and most of the research done~\cite{lusby2011railway} is devising practical algorithms using branch and bound, mixed-integer programming, genetic algorithms, etc. 

The approach of employing dynamic deterministic flow calculation is emerging. While~\cite{9234005} presents a genetic algorithm based on a job-shop scheduling model that deviates significantly from our model,~\cite{gartner2023fast} uses incremental approaches with a concept similar to the \swapandmove algorithm we introduce: minimizing the impact of scheduling choices on future packets. The authors do not try to optimize overall latency but focus solely on computation time, using topologies and flow modelization different from ours.

In this work, we do not compare our algorithms with classical methods used to solve similar problems such as mixed-integer linear programming, simulated annealing or genetic algorithm. Indeed, our objective is not design the best heuristic to solve \pma, but rather to understand for which load this problem has always a solution. While it is possible to analyze greedy algorithms to prove such a result, it seems hard to do the same for more complex heuristics. Besides, \pma is a constraint satisfaction problem and not an optimization problem, making some classical methods less relevant. The periodicity makes the constraints hard to state in a linear fashion, which makes the problem harder to cast as a linear programming problem (though not impossible, see~\cite{bartharxiv2018deterministic}). 

\paragraph*{Contributions}

Our primary objective is to establish that for relatively small loads, a scheduling for the \pma problem always exists and can be determined in polynomial time. 
This extends previous work of the authors~\cite{bartharxiv2018deterministic}, where it was proven that the greedy algorithm, denoted as \metaoffset, guarantees a scheduling when the load is less than $1/3$.
Our first contribution is the design and analysis of the \texttt{Compact k-tuples} algorithm. This sophisticated greedy algorithm schedules carefully chosen tuples of messages simultaneously.
The algorithm operates in polynomial time and ensures a scheduling for loads up to $2/5$.

Our second contribution is the design and analysis of the \swapandmove algorithm, in the context of messages of size $1$. This algorithm does local improvements to improve the packing of messages. It operates in polynomial time and always finds a scheduling when the load is less than $(\sqrt{5}-1)/2 \approx 0,618$. This is a significant improvement over existing greedy algorithms, which only guarantee a scheduling for loads less than $1/2$.

Our third contribution is a set of reductions, which show that solving \pma whith message of size one is sufficient to address similar problems with message of any size, buffering and a general fronthaul network, as opposed to a single link. This expands the applicability of our results, particularly in the context of C-RAN.

Our final contribution is an experimental study, comparing the quality and runtime performance of various algorithms in solving the \pma problem on random instances. Notably, our results show that two algorithms, \compactfit (a simplified version of \texttt{Compact k-tuples}) and \swapandmove, consistently outperform other algorithms. We have made the source code for these algorithms available on github\footnote{\url{https://github.com/Mael-Guiraud/GuiraudStrozecki2023Scheduling}}. Moreover, our experiments reveal that schedules can be found for significantly higher loads than guaranteed by our theoretical proofs. To explain this phenomenon, we prove that \greedyuniform, the simplest randomized greedy algorithm, is capable of finding a scheduling for nearly all inputs with loads strictly less than 1.  In fact, when compared to an ad-hoc exhaustive search algorithm proposed in~\cite{bartharxiv2018deterministic}, our best algorithm finds almost as many solutions.

For a summary of the performance of the algorithms presented in this paper, please refer to Table~\ref{tab1}. It is worth noting that all algorithms discussed in this article are novel, with the exception of \metaoffset.

\begin{table*}[h]
\begin{minipage}{400px}
\begin{center}
\caption{Summary of the main results of this paper. We give the maximum load for which an algorithm always finds a solution. Integer $n$ is the number of scheduled messages. For experimental results on random instances, see Sec.\ref{sec:perf}.}\label{tab1}%
\begin{tabular}{@{}|l|l|p{52mm}|l|@{}}
\hline
Algorithm & Message size & Maximum Load & Complexity\\
\hline
\firstfit    & 1 & 1/2  & $O(n^2)$ \\
\swapandmove & 1 & $(\sqrt{5}-1)/2 \approx 0,618$ & $O(n^3)$\\
\greedyuniform & 1 & $\forall \varepsilon > 0, \,1 - \varepsilon$ with high probability, \newline 
for large random instances  & $O(n^2)$  \\
\compactpair & 2 & 4/9  & $O(n^2)$ \\
\firstfit    & $\geq 1$ & 1/3  & $O(n^2)$ \\
\metaoffset  & $\geq 1$ & 1/3  & $O(n^2)$ \\
\compactpair & $\geq 1$ & 3/8  & $O(n^2)$ \\
\texttt{Compact 8-tuples} & $\geq 1$ & $\geq 2/5$ & $O(n^2)$\\
\hline
\end{tabular}
\end{center}
\end{minipage}
\end{table*}

\paragraph*{Organization of the Paper}

In Sec.~\ref{sec:model}, we explain how the network is modeled and we introduce the problem \pma. In Sec.~\ref{sec:large}, we present several greedy algorithms and prove they always find a solution to \pma for moderate loads. These algorithms rely on schemes to build compact enough solutions, to bound measures of the time wasted when scheduling messages. 

In Sec.~\ref{sec:small} we present deterministic and probabilistic algorithms for messages of size $1$, which work for much higher loads than the algorithms designed for arbitrary messages. The deterministic algorithm is not greedy, contrarily to algorithms of Sec.~\ref{sec:large}, since it uses a swap mechanism that moves already scheduled messages. 

In Sec.~\ref{sec:gen}, we prove that hypotheses on the period and the message size can be relaxed. We also show that \pma captures periodic message scheduling in networks with many contention points, as long as the routing is coherent.
Finally, we present the performance of all algorithms on random inputs, both on large messages in Sec.~\ref{sec:perf_large} and small messages in Sec.~\ref{sec:perf_small}.
It shows that the algorithms of this article work for random instances with a much higher load than what has been proved in the worst case.

\section{Modeling a C-RAN Network}\label{sec:model}
\begin{center}
\begin{figure}

\centering
\includegraphics[scale=0.22]{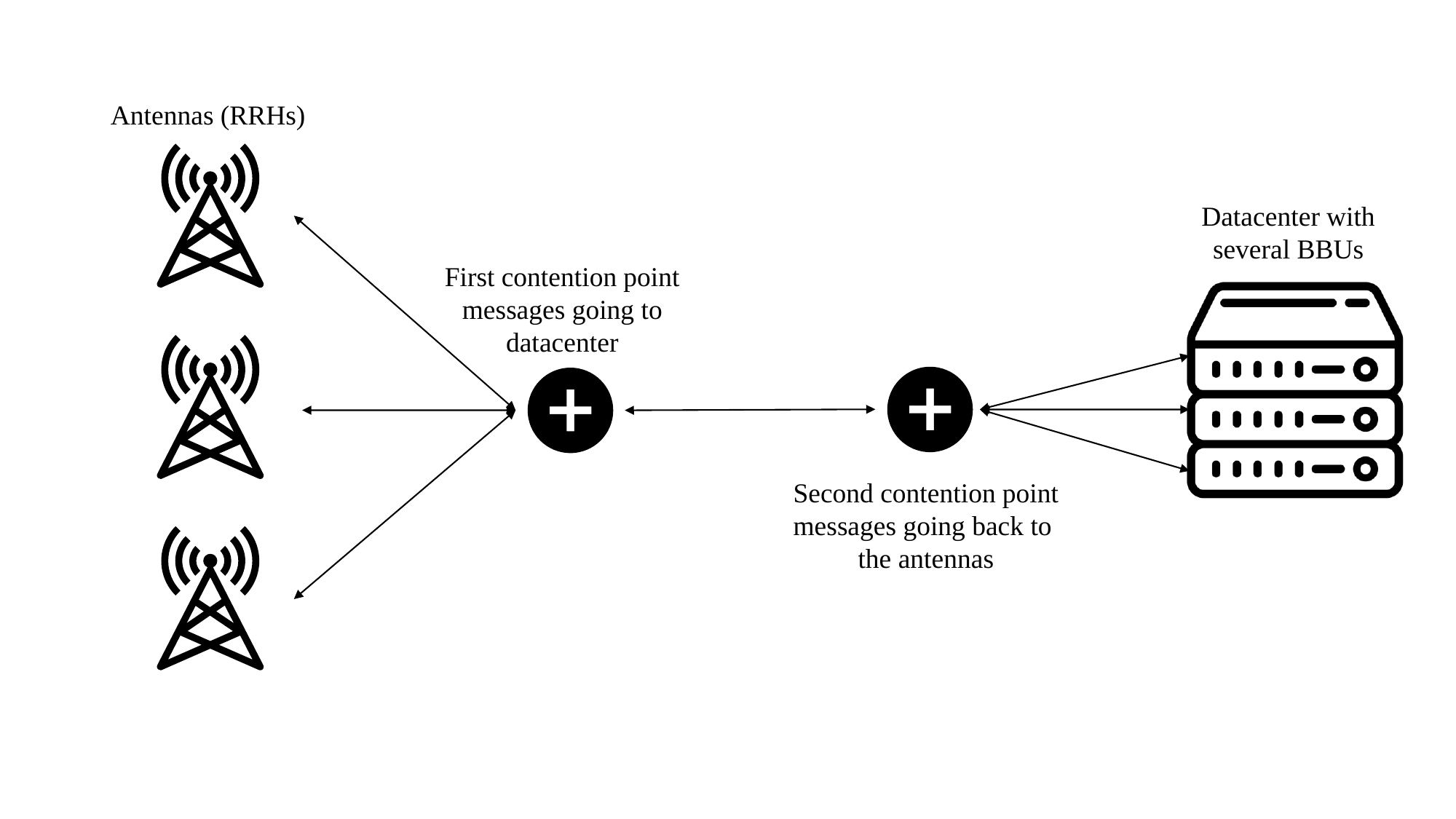} 

\scalebox{0.45}{
\includegraphics[]{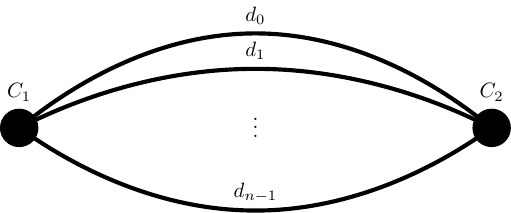}
}

\caption{C-RAN network with a single shared link modeled by two contention points and delays}
\label{fig:model}
\end{figure}
\end{center}

In this article, we model a simple network in which periodic messages flow through a single bidirectional link. Messages using the link in two different directions do not interact, since the link we model is full-duplex.
 Each RRH sends messages through the link to its BBU and two messages cannot go at once in the link: this is the first contention point, represented in Fig.~\ref{fig:model}. Upon receiving a message, a BBU sends an answer back to its RRH, which goes through the link in the other direction: this is the second contention point represented in Fig.~\ref{fig:model}. Since the answer must be sent back as soon as a message arrives, we see this process as a single message going from an RRH to its BBU and back to the RRH, while traversing two contention points.

The \textbf{size} is an integer representing the time needed to send a message through a contention point of the network, here the beginning of the link shared by all antennas. In the C-RAN context we consider, \emph{all messages are of the same kind}, hence they are all of the same size denoted by $\tau$. Indeed, the message sent by the RRH to the BBU is the raw electromagnetic signal it has captured during a millisecond. The message sent by the BBU to the RRH is the electromagnetic signal it must emit during a millisecond.

 We denote by $n$ the number of messages, which are numbered from $0$ to $n-1$. A message $i$ is characterized by its integer \textbf{delay} $d_i$: when message number $i$ arrives at the beginning of the shared link (first contention point) at time $t$, it returns to the other end of the link on its way back (second contention point) at time $t + d_i$. The delay represents the transmission time from the RRH to the BBU plus the processing time in the BBU. 

  The model and problem can easily be generalized to any topology, that is any directed acyclic multigraph with any number of contention points, see~\cite{bartharxiv2018deterministic}. We choose here to focus on a realistic network with a single shared link, which is simple enough to obtain theoretical results. It turns out that algorithms solving the problem for a single shared link can be used on networks with many contention points, as long as the routing is coherent, see Sec.~\ref{sec:coherent}. 

The time is \emph{discretized} and the process we consider is \emph{periodic} of fixed integer \textbf{period $P$}. We use the notation $[P]$ for the set $\{0,\dots,P-1\}$. A message is emitted an infinite number of times periodically, hence it is enough to consider any interval of $P$ units of time to completely represent the state of our system by giving the times, in this interval, at which each message goes through the two contention points. We call the representation of an interval of $P$ units of time in the first contention point the \textbf{first period} and the \textbf{second period} for the second contention point. Because the system is of period $P$, we may always assume that $d_i$, the delay of message $i$, is in $[P]$. 

An \textbf{offset} of a message is a choice of time at which it arrives at the first contention point (i.e. in the first period). Let us consider a message $i$ of offset $o_i$, it uses the interval of time $[i]_1 = \{ (o_i + t) \mod P \mid 0 \leq t < \tau \}$ in the first period and $[i]_2 = \{ (d_i + o_i + t) \mod P \mid 0 \leq t < \tau \}$ in the second period, as illustrated in Figure~\ref{fig:offset}. Two messages $i$ and $j$ \textbf{collide} if either $[i]_1 \cap [j]_1 \neq \emptyset $ or $[i]_2 \cap [j]_2 \neq \emptyset $. If $t \in [i]_1$ (resp. $t \in [i]_2$), we say that message $i$ uses time $t$ in the first period (resp. in the second period). Figure~\ref{fig:collide} shows an example of a collision.
 \begin{center}
\begin{figure}

\centering
\includegraphics[scale=0.28]{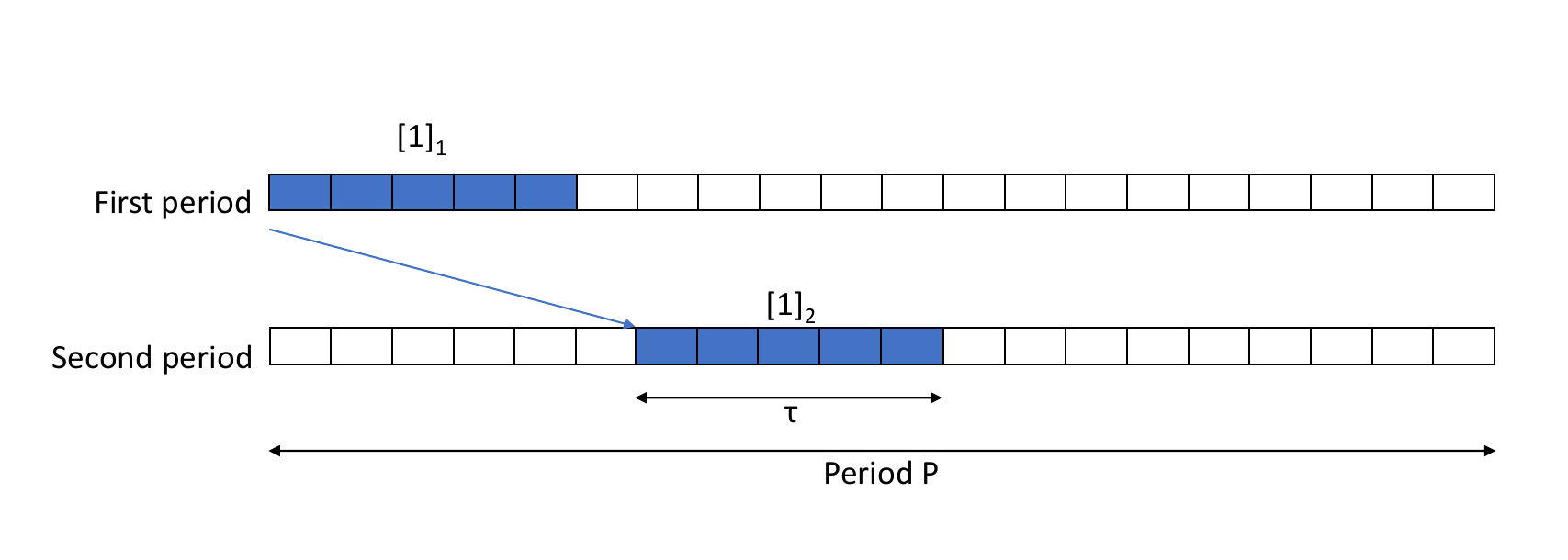} 

\caption{A message with offset $0$ and delay $6$.}
\label{fig:offset}
\end{figure}
\end{center}
 \begin{center}
\begin{figure}

\centering
\includegraphics[scale=0.28]{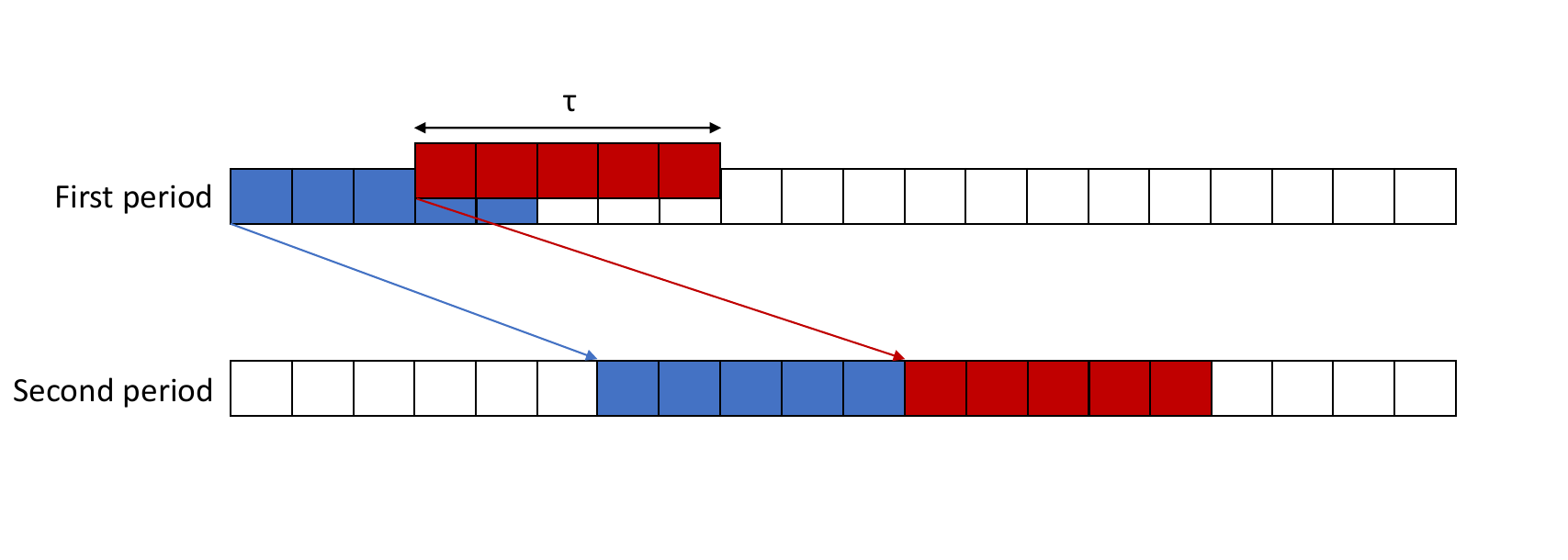} 

\caption{The messages $1$ and $2$ with offset $0$ and $3$ collide because $[1]_1 \cap [2]_1 = \{3,4\} \neq \emptyset$.}
\label{fig:collide}
\end{figure}
\end{center}
Our objective is to send all messages so that there is no collision in the shared link.
To do that, we can choose the offset of each message. An \textbf{assignment} $A$ is a function from $[n]$ to $[P]$. The value $A(i)$ is the offset of the message $i$. We say that an assignment is \textbf{valid} if \emph{no pair of messages collide}, as shown in Fig.~\ref{fig:assignment}.

Let \textbf{Periodic Message Assignment} or \pma be the following problem: given $n$ messages of delays $d_0,\dots,d_{n-1}$, a period $P$ and a size $\tau$, find a valid assignment or decide there is none. When a valid assignment is found, we say the problem is solved \textbf{positively}.

\begin{figure*}
\begin{center}
\includegraphics[scale=0.3]{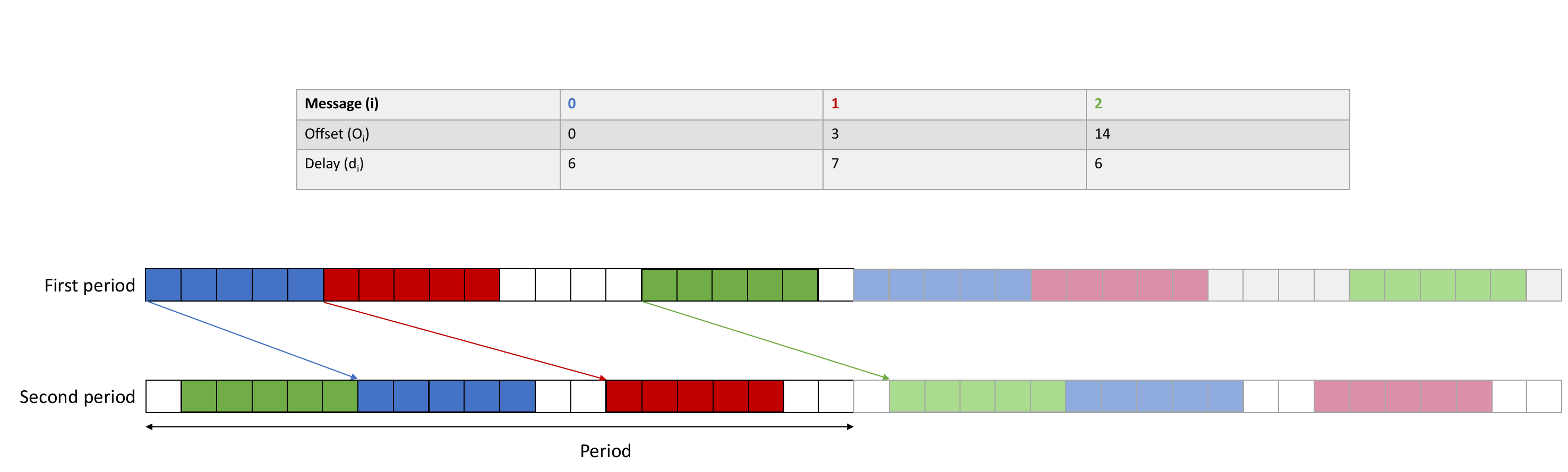}
\end{center}
\caption{An instance of \pma with $3$ messages, $P= 20$, $\tau = 5$, and one assignment}
\label{fig:assignment}
\end{figure*}

It is yet unknown whether \pma is $\NP$-hard. However, it has been proven that, when parameterized by $n$ the number of messages, the problem is \FPT~\cite{barth2018deterministic}. A slight generalization of \pma presented in Sec.~\ref{sec:coherent}, with more contention points, but each message only going through two of them as in \pma, is \NP-hard~\cite{barth2018deterministic}. If the shared link is not full-duplex, that is, there is a single contention point and each message goes through it twice, it is also \NP-hard, since we can encode a similar non-periodic problem~\cite{orman1997complexity}. Hence, we conjecture that \pma is \NP-hard.
	
Because we are interested in \pma when it can always be solved positively, we study it when the load of the system is small enough. The \textbf{load} is defined as the number of units of time used in a period by all messages divided by the period that is $n\tau /P$. There cannot be an assignment when the load is larger than one; we prove in this article that, for moderate loads, there is \emph{always} an assignment and that it can be found by polynomial time algorithms. This kind of result is helpful when solving the following optimization version of \pma: given a set of messages, find the largest subset which admits an assignment. A weighted version, where the messages have different values can also be considered. An optimal solution to the optimization problem is a set of messages corresponding to a load of at most $1$. Assume we have an algorithm that always finds an assignment for an instance of load $\lambda$. Then, such an algorithm finds an assignment for any subset of load $\lambda$ and is an approximation algorithm for the optimization problem with approximation ratio $\lambda$.

\section{Greedy Algorithms for Arbitrary Messages} \label{sec:large}

In this section, we study the case of arbitrary values for $\tau$. When modeling a C-RAN network, we choose the time granularity, and we could set it so $\tau = 1$ for simplicity. However, the length of a link and thus the delay of a message is typically of the same magnitude as $\tau$, therefore setting $\tau = 1$ is a too coarse granularity to faithfully model the network.

A \textbf{partial assignment} $A$ is a function defined from a subset $S$ of $[n]$ to $[P]$.
The cardinality of $S$ is the \textbf{size} of partial assignment $A$. A message in $S$ is \textbf{scheduled} (by $A$), and a message not in $S$ is \textbf{unscheduled}. We only build valid partial assignments: no pair of messages of $S$ collide. If $A$ has domain $S$, and $i \notin S$, we define the extension of $A$ to the message $i$ by the offset $o$, denoted by $A[i \rightarrow o]$, as $A$ on $S$ and $A[i \rightarrow o](i) = o$.

All presented algorithms build an assignment incrementally, by growing the size of a valid partial assignment. Moreover, algorithms in this section are \emph{greedy}: Once an offset is chosen for a message, it is never changed. In the rest of the paper, we sometimes compare the relative position of messages to detect collisions, but one should remember that the time is periodic and these are relative positions on a circle. 
In some remarks and computations, we may omit to write \emph{mod P}, in order to not overburden the presentation.

 In this section, we first present two simple greedy algorithms:
 \firstfit which produces compact assignments and \metaoffset which 
 relies on the absence of collision in the first period. We then propose \texttt{Compact k-tuples}, a family of algorithms that combine ideas from the two previous algorithms and work by scheduling tuples of messages.

\subsection{First Fit}

Consider some partial assignment $A$, in the first period, the message $i$ uses times in $[i]_1 = [A(i), A(i) + \tau -1 \mod P]$. If a message $j$ is scheduled by $A$, with $A(j) < A(i)$, then the last time it uses in the first period is $A(j)+\tau-1$ and it should be less than $A(i)$, which implies that $A(j) \leq A(i) - \tau$. Symmetrically, if $A(j) > A(i)$, to avoid collision between messages $j$ and $i$, we have $A(j) \geq A(i) + \tau$. Hence, message $i$ forbids the interval $[A(i) - \tau +1 \mod P, A(i) + \tau -1 \mod P]$ as offsets for messages still not scheduled, because of its use of time in the first period. The same reasoning shows that $2\tau -1$ offsets are also forbidden because of the times used in the second period. Hence, if $\lvert S\rvert $ messages are already scheduled, then at most $\lvert S\rvert(4\tau -2)$ offsets are forbidden for an unscheduled message. The real number of forbidden offsets may be smaller since the same offset can be forbidden both because of a message on the first and on the second period.

To formalize the idea of the previous paragraph, we introduce $\Fo(A)$, which is the number of \emph{offsets forbidden by $A$}. Let $A$ be a partial assignment defined over $S$ and $i\notin S$, $\Fo(A)$ is the maximum over all values of $d_i \in [P]$ of $\lvert\left\{ o \in [P] \mid A[i \rightarrow o] \text{ has a collision}\right\}\rvert$. In the previous paragraph, we have proved that $\Fo(A)$ is at most $(4 \tau -2)\lvert S\rvert$, as shown if Figure~\ref{fig:FO}. 

\begin{figure}
\begin{center}
\includegraphics[scale=0.39]{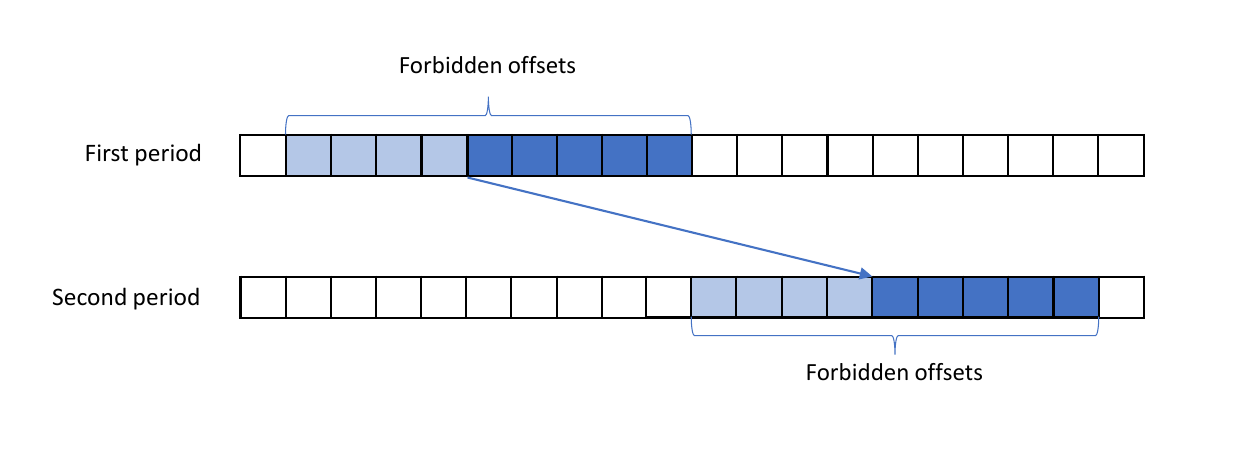}
\end{center}
\caption{An example with $\lvert S\rvert = 1$ and an unscheduled message of delay $0$, where $FO(A) = (4 \tau -2)\lvert S\rvert$. }
\label{fig:FO}
\end{figure}

Let \firstfit be the following algorithm:  for each unscheduled message (in the order they are given), try all offsets from $0$ to $P-1$ until one does not create a collision with the current assignment and use it to extend the assignment. 

When $\Fo(A) < P$, whatever the delay of the message we want to extend $A$ with, there is an offset to do so. Since $\Fo(A) \leq (4 \tau -2)\lvert S\rvert$ and $\lvert S\rvert < n$, \firstfit (or any greedy algorithm) always succeeds when $(4 \tau -2)n \leq P$, that is when the load $ n\tau /P$ is at most $1/4$.
It turns out that \firstfit always creates compact assignments (as defined in~\cite{bartharxiv2018deterministic}), that is a message is always next to another one in one of the two periods. Hence, we can prove a better bound on $\Fo(A)$, when $A$ is built by \firstfit, as stated in the following theorem.

\begin{theorem}
\firstfit solves \pma positively on instances of load at most $1/3$. 
\end{theorem}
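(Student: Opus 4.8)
The plan is to sharpen the crude estimate $Fo(A)\le(4\tau-2)|S|$ by exploiting the compactness of the partial assignments produced by \firstfit. Concretely, I would proceed in three steps: (i) prove a precise compactness statement; (ii) deduce from it a bound on the number of maximal free arcs in each of the two periods; (iii) re-derive $Fo(A)$ from that gap structure. The target is $Fo(A)<P$ for every partial assignment $A$ that \firstfit can reach with $|S|<n$ and load below $1/3$, since $Fo(A)<P$ is exactly what guarantees that \firstfit can extend $A$.

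Step (i). The first message processed gets offset $0$, and offset $0$ then collides (in the first period) with every later message; hence every subsequent message $j$ gets an offset $o_j\ge 1$, and the extension $A[j\to o_j-1]$ has a collision. Suppose this collision is in the first period, with a scheduled message $p$. Using that $A[j\to o_j]$ has no collision whereas $A[j\to o_j-1]$ does, the only point at which $[p]_1$ can meet the first-period interval that $j$ would use at offset $o_j-1$ is that interval's leftmost point $o_j-1$; this forces $[p]_1$ to end exactly at $o_j-1$, i.e.\ $A(p)=o_j-\tau$, so at offset $o_j$ the interval $[j]_1$ lies immediately to the right of $[p]_1$. Symmetrically, if the collision is in the second period, then $[p]_2$ ends exactly where $[j]_2$ starts. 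For each message $j\neq 0$ I record one such witness $p=p(j)$ together with a tag $\varepsilon(j)\in\{1,2\}$ for the period in which the gluing happens.

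Step (ii). Let $g_1,g_2$ be the numbers of maximal free arcs of $A$ in the first and second periods; since messages and free arcs alternate around the circle, $g_k$ equals the number of scheduled messages with nothing immediately to their right in period $k$. The map $j\mapsto(p(j),\varepsilon(j))$ is injective, because two messages with the same parent and tag would occupy the same $\tau$-interval, namely the one immediately to the right of $[p]_\varepsilon$, and hence collide. Its image is contained in the set of pairs $(m,k)$ such that $m$ has a right neighbour in period $k$, a set of size $(|S|-g_1)+(|S|-g_2)=2|S|-(g_1+g_2)$. Since the domain has size $|S|-1$, we get $2|S|-(g_1+g_2)\ge|S|-1$, i.e.\ $g_1+g_2\le|S|+1$.

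Step (iii). Fix a candidate new message $i$ with arbitrary delay $d_i$. An offset is forbidden iff it creates a collision in the first or in the second period. In the first period, an offset $o$ is \emph{allowed} iff the window $[o,o+\tau-1]$ lies inside a single free arc, so if the free arcs have lengths $\ell_1,\dots,\ell_{g_1}$ (summing to $P-|S|\tau$) the number of allowed offsets is $\sum_k\max(0,\ell_k-\tau+1)\ge (P-|S|\tau)-g_1(\tau-1)$; hence at most $|S|\tau+g_1(\tau-1)$ offsets are forbidden because of the first period, and likewise at most $|S|\tau+g_2(\tau-1)$ because of the second (the same picture shifted by $d_i$, independently of $d_i$). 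Adding these, and using $g_1+g_2\le|S|+1$ and $|S|\le n-1$, gives $Fo(A)\le 2|S|\tau+(|S|+1)(\tau-1)<3n\tau$, which is $<P$ whenever the load $n\tau/P$ is below $1/3$; so \firstfit never gets stuck and returns a full assignment. I expect the real work to be the bookkeeping of steps (i)–(ii) — the periodic wrap-around, the special status of message $0$ and of offset $0$, the trivial cases $|S|=0$ and $|S|\tau=P$, and the exact correspondence between free arcs and right-exposed messages — rather than any single delicate inequality.
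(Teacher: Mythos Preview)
Your argument is correct and in fact lands on the very same bound as the paper, $Fo(A)\le |S|(3\tau-1)+(\tau-1)$ (your $2|S|\tau+(|S|+1)(\tau-1)$ rewrites to this), but by a genuinely different route. The paper proves that bound by a one-line induction on $|S|$: when \firstfit selects offset $o$ for a new message, the offset $o-1$ already collided with some scheduled message, say in the first period; that message then occupies the interval $[o-\tau,o-1]$ there, so the $\tau-1$ offsets $o-\tau+1,\dots,o-1$ were forbidden before and are forbidden again by the new message, leaving only $3\tau-1$ genuinely new forbidden offsets. You instead convert the same gluing observation into a global structural invariant, $g_1+g_2\le |S|+1$, via the injective parent map, and then recompute $Fo(A)$ from the free-arc lengths. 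The paper's path is shorter and avoids the bookkeeping you anticipate in steps~(i)--(ii); your path costs that bookkeeping but yields a reusable intermediate statement (a bound on the number of gaps in any \firstfit partial assignment) that is strictly more informative than the bare bound on $Fo(A)$.
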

\begin{proof}
We show by induction on the size of $S$, that $\Fo(A) \leq \lvert S\rvert(3\tau -1) + \tau - 1$. For $\lvert S\rvert = 1$, it is clear since a single message forbid at most $(3\tau -1) + \tau -1 = 4\tau-2$ offsets, as explained before. Now, assume $\Fo(A) \leq \lvert S\rvert(3\tau -1) + \tau -1$ and consider a message $i \notin S$ such that \firstfit builds $A[i \rightarrow o]$ from $A$. By definition of \firstfit, choosing $o-1$ as offset creates a collision. W.l.o.g. say it is a collision in the first period. It means that there is a scheduled message between $o - \tau $ and $o-1$, hence all these offsets are forbidden by $A$. The same offsets are also forbidden by the choice of $o$ as offset for $i$, hence at most $3\tau -1$ new offsets are forbidden, that is $\Fo(A[i \rightarrow o]) \leq \Fo(A) + (3\tau -1)$, which proves the induction. 
The value of $\Fo(A)$ is increasing during \firstfit, hence it is maximal when the last element is scheduled. It is then bounded by $(n-1)(3\tau -1) + \tau - 1 < 3n\tau$.
Therefore, \firstfit succeeds when $\Fo(A) < P$, which happens when $3n\tau \leq P$, i.e. when the load is at most $1/3$.
\end{proof}

A naïve implementation of \firstfit is in time $O(nP)$, since for each of the $n$ messages $P$ offsets could be tried. However, it is not useful to consider every possible offset at each step. By maintaining a list of increasing positions of scheduled messages in the first and second period, 
we can skip all positions corresponding with a collision with the same message, and only $O(n)$ offsets must be considered. Hence, \firstfit can be implemented in time $O(n^2)$.

\subsection{Meta-Offset}

The method of this section is described in~\cite{bartharxiv2018deterministic} and it achieves the same bound on the load using a different method. It is recalled here as an introduction to the algorithms of the next section.
The idea is to restrict the possible offsets at which messages can be scheduled. It seems counter-intuitive since it decreases artificially the number of available offsets to schedule new messages. However, it allows reducing the number of forbidden offsets for unscheduled messages. A \textbf{meta-offset} is an offset of value $i\tau$,
with $i$ an integer from $0$ to $\lceil P / \tau \rceil - 1$. We call \metaoffset the greedy algorithm which works as \firstfit, but consider only meta-offsets when scheduling messages. 

To study \metaoffset, we introduce a variant of $\Fo(A)$ restricted to meta-offsets.
 Let $A$ be a partial assignment defined over $S$ and $i\notin S$, we let $\Fmo(A)$ be the maximum over $i \in [n]$ of $\lvert\left\{ j \in [\lceil P / \tau \rceil] \mid A[i \rightarrow j\tau] \text{ has a collision}\right\}\rvert$.
 By definition, two messages with a different meta-offset cannot collide in the first period. Hence, $\Fmo(A)$ can be bounded by $3\lvert S\rvert$ and we obtain the following theorem.

\begin{theorem}[Proposition 3 of~\cite{bartharxiv2018deterministic}]\label{th:metaoffset}
\metaoffset solves \pma positively on instances of load at most $1/3$.
\end{theorem}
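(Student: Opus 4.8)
The plan is to adapt the estimate already sketched just before the statement: show that whenever \metaoffset has built a partial assignment $A$ with domain $S$, the number $Fmo(A)$ of meta-offsets forbidden to any still-unscheduled message (of arbitrary delay) satisfies $Fmo(A)\le 3|S|$, and then observe that this quantity stays strictly below the total number of usable meta-offsets as long as the load is below $1/3$, so the greedy procedure can never get stuck.

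To avoid wrap-around technicalities I would restrict attention to the $\lfloor P/\tau\rfloor$ meta-offsets $0,\tau,2\tau,\dots,(\lfloor P/\tau\rfloor-1)\tau$, each of which leaves room for a full length-$\tau$ message without crossing the end of the period. The bound $Fmo(A)\le 3|S|$ is then obtained by fixing one scheduled message $i$ and one unscheduled message $j$ of arbitrary delay $d_j$, counting the meta-offsets of $j$ that $i$ alone forbids, and summing over $i\in S$. I would split this local count by the period in which a collision could occur. On the \emph{first} period, $[i]_1$ and the candidate $[j]_1$ are both length-$\tau$ intervals aligned to the grid of multiples of $\tau$ and not wrapping, hence equal or disjoint; so $i$ forbids at most one meta-offset of $j$ there. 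On the \emph{second} period, $[i]_2$ is a \emph{fixed} interval of length $\tau$, whereas the candidate intervals $[j]_2=\{\,d_j+m\tau+t \bmod P \mid 0\le t<\tau\,\}$, as $m$ ranges over the meta-offset indices, form pairwise disjoint consecutive blocks of length $\tau$ on the circle; a fixed length-$\tau$ interval meets at most two consecutive such blocks, so $i$ forbids at most two meta-offsets of $j$ there. Altogether each scheduled message kills at most $3$ meta-offsets, which gives $Fmo(A)\le 3|S|$.

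It then remains to close the counting: \metaoffset succeeds at every step provided the number of usable meta-offsets, $\lfloor P/\tau\rfloor$, exceeds $Fmo(A)$ for every partial assignment it encounters, i.e. exceeds $3(n-1)$. From the hypothesis $n\tau/P<1/3$ we get $3n<P/\tau$, and since $3n$ is an integer this forces $3n\le\lfloor P/\tau\rfloor$; hence $3(n-1)<3n\le\lfloor P/\tau\rfloor$, as needed, and the theorem follows. I expect the only delicate point to be the second-period count across the period boundary — making precise that the $\lfloor P/\tau\rfloor$ translates of $[j]_2$ really are pairwise disjoint on $[P]$ and that a fixed length-$\tau$ window overlaps at most two of them. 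Everything else is routine bookkeeping and is essentially the $3|S|$ estimate already announced in the text.
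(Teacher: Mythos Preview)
Your proposal is correct and follows essentially the same approach as the paper: the paper's entire argument is the one-line observation preceding the theorem, namely that distinct meta-offsets never collide in the first period, so each scheduled message forbids at most one meta-offset there and at most two in the second period, yielding $Fmo(A)\le 3|S|$. You have simply written out the details (including the wrap-around check and the final inequality $3(n-1)<\lfloor P/\tau\rfloor$) that the paper leaves implicit.
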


The complexity of \metaoffset is in $O(n^2)$, since for each of the $n$ messages at most $3n$ meta-offsets must be checked in constant time. 

\subsection{Compact Pairs}

We present in this section \texttt{Compact k-tuples}, a new family of greedy algorithms which solve \pma positively for larger loads. The idea is to schedule several messages at once, using meta-offsets, to maximize the compactness of the obtained solution. We first describe an algorithm that schedules pairs of messages, which is then extended to any tuple of messages. 

For clarity of exposition, we assume from now on that \emph{the period $P$ is a multiple of $\tau$}: we let $P = m\tau$. We show in Sec.~\ref{sec:gen} that this hypothesis can be relaxed to the price of a very small increase of the load. 

When all delays are multiple of $\tau$, \metaoffset schedules the messages compactly and solves $\pma$ for a load of $1/2$. Hence, we are interested in the remainder modulo $\tau$ of the delays. Let $d_i = d_{i}'\tau + r_i$ be the Euclidean division of $d_i$ by $\tau$. We now assume that \emph{messages are sorted by increasing $r_i$}.

The \textbf{gap} between message $i$ and message $j$, is defined as $g = d'_{i} + 1 - d'_{j} \mod m$.
A \textbf{Compact pair} is a pair of messages $(i,j)$, with $i < j$ and their gap is different from $0$. A compact pair $(i,j)$ \emph{is scheduled as a single message} using meta-offsets so that $A(i) + (d'_i+1)\tau = A(j) + d'_j\tau$, i.e. the beginning of $j$ is less than $\tau$ unit of times after the end of $i$ in the second period, see Fig.~\ref{fig:compactpair}. The gap is interpreted as the distance in meta-offsets between $i$ and $j$ in the first period, when they are scheduled as a compact pair.

\begin{figure}[h]
\begin{center}

\includegraphics[scale=0.7]{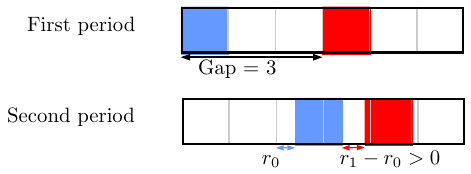}
\end{center}
\caption{The compact pair $(0,1)$ with $d'_0 = 2$ and $d'_1 = 0$}
\label{fig:compactpair}
\end{figure}

\begin{lemma}\label{lemma:pair_find}
Given three messages $(1,2,3)$ in order of increasing delay modulo $\tau$, then either $(1,2)$, $(1,3)$ or $(2,3)$ is a compact pair. 
\end{lemma}
\begin{proof}
If the first two messages or the first and the third message form a compact pair, we are done. If not, then by definition $d_{1}' = 1 + d_{2}' = 1 + d_{3}'$. Hence, messages $2$ and $3$ have the same delay divided by $\tau$ and form a compact pair of gap $1$.
\end{proof}

Let \compactpair be the following greedy algorithm:  A sequence of at least $n/3$ compact pairs is built by considering triples of messages in order of increasing $r_i$, and applying Lemma~\ref{lemma:pair_find} to each triple. Compact pairs are scheduled in the order they have been built at the first available meta-offset. If at some point all compact pairs are scheduled or the current one cannot be scheduled, the remaining messages are scheduled as in \metaoffset. 

The analysis of \compactpair relies on the evaluation of the number of forbidden meta-offsets. In the first phase of \compactpair, we evaluate the number of forbidden offsets when scheduling any compact pair, that we denote by $\Fmo_2(A)$. In the second phase, we need to evaluate $\Fmo(A)$. When scheduling a message in the second phase, a scheduled compact pair only forbids \emph{three} meta-offsets in the second period, while two messages scheduled independently forbid \emph{four} meta-offsets, which explains the improvement from \compactpair. We state the previous fact as Lemma~\ref{lemma:pair_forbid}, see an illustration in Fig.~\ref{fig:forbidenmeta}. 

\begin{lemma}\label{lemma:pair_forbid}
Let $C_1$ be compact pair. Let $C_2$ be a compact pair and let $i$ be a single message, both scheduled by \compactpair after $C_1$.
Then, because of collisions in the second period, $C_1$ forbids at most four meta-offsets to $C_2$ and three meta-offsets to $i$. 
\end{lemma}

\begin{figure}
\begin{center}
\includegraphics[scale=0.7]{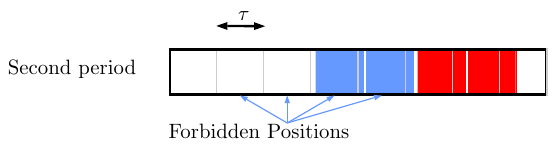}
\end{center}

\caption{Positions forbidden by a scheduled compact pair (in blue) when scheduling another compact pair (in red) with larger $r_i$'s} 
\label{fig:forbidenmeta}
\end{figure}

\begin{theorem}\label{th:pair}
\compactpair solves \pma positively on instances of load at most $3/8$.
\end{theorem}
\begin{proof}
Let $n$ be the number of messages to schedule.
In the first phase of \compactpair, $n_2$ compact pairs are scheduled to build a partial assignment $A$. Let us first assume that there are more compact pairs built using the method of Lemma~\ref{lemma:pair_find} than compact pairs which are scheduled.
When scheduling a new compact pair, the position of the first $n_2$ compact pairs on the first period forbid $4n_2$ offsets for a compact pair. In the second period, we use Lemma~\ref{lemma:pair_forbid} to bound the number of forbidden offsets by $4n_2$.   
Hence, $\Fmo_2(A) \leq 8n_2$. By definition of $\Fmo_2(A)$, there are offsets to schedule compact pairs while $\Fmo_2(A) < m$. Thus, $A$ can be extended by compact pairs if $8n_2 < m$.
Since exactly $n_2$ pairs are scheduled, $n_2 \geq m/8$.  

Let $n_1$ be the number of messages remaining to be scheduled by \compactpair using meta-offsets after the $n_2$ compact pairs have been scheduled. By Lemma~\ref{lemma:pair_forbid}, a compact pair forbids $3$ meta-offsets in the second period. It also forbids $2$ meta-offsets in the first period. The $n_1$ messages scheduled as in \metaoffset forbids $3$ meta-offsets each, as explained in the previous section.
Therefore, we have $\Fmo(A) \leq 5n_2 + 3n_1$. 

Assume now that \compactpair fails to schedule the $n_1$ messages of the second phase.
Since \compactpair can schedule new messages when $\Fmo(A) < m$, 
we obtain $5n_2 + 3(n_1-1) \geq m$. We have already proved $n_2 \geq  m/8$, by summing both inequations, we obtain $$6n_2 + 3n_1 > 9m/8 + 2.$$ By definition $n = 2n_2 + n_1$, hence $$3n > 9m/8.$$ Therefore, when \compactpair fails, $n > (3/8)m$, i.e. the load is larger than $3/8$.

Let us now assume that the first phase stops because the algorithm runs out of compact pairs. They are built using Lemma~\ref{lemma:pair_find}, for each triple of messages at least a compact pair is produced. Hence, we obtain at least $n/3$ compact pairs, which are all scheduled. We have $n_2 > n/3$ and $n= 2n_2 + n_1$, which implies that $n_1 + n_2 \leq 2n/3$.
Assume now that \compactpair fails to schedule all messages, then $5n_2 + 3n_1 > m$.
By substitution in the previous inequation, we have $ 8n/3  > m$, i.e the load is larger than $8/3$, which proves the theorem.
\end{proof}

\subsection{Compact Tuples}\label{sec:compact}

Algorithm \compactpair can be improved by forming \emph{compact tuples} instead of compact pairs. Recall that a delay $d_i$ is equal to  $d'_i\tau + r_i$, we call $d'_i$ the \textbf{meta-delay}. The algorithm we describe relies only on meta-delays and the fact that values $r_i$ are increasing.

\begin{definition}
Let $i_1 < \dots < i_k$ be a sequence of messages with $r_{i_1},\dots,r_{i_k}$ increasing. 
It is a compact $k$-tuple, if there is a valid partial assignment of these messages, such that messages in the second period are in order $i_1,\dots,i_k$ and for all $l$, $A(i_l) + (d'_{i_l} + 1)\tau \mod P = A(i_{l+1}) + d'_{i_{l+1}}\tau \mod P$.
\end{definition}

 Scheduling a compact $k$-tuple is choosing a meta-offset for the first message of the tuple, the offsets of the other messages are also fixed by this choice. The algorithm \texttt{Compact k-tuples} works by scheduling compact $k$-tuples using meta-offsets while possible, then scheduling compact $(k-1)$-tuples and so on until $k=1$.

\begin{lemma}\label{lemma:uple_find}
Given $k + k(k-1)(2k-1)/6$ messages, $k$ of them always form a compact $k$-tuple and we can find them in time $O(k^3)$. 
\end{lemma}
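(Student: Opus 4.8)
The plan is to prove this by induction on $k$, in the same spirit as Lemma~\ref{lemma:pair_find} (which is essentially the case $k=2$), arranging the bookkeeping so that the per-step error terms $1+(k-1)^2$ telescope to $k(k-1)(2k-1)/6+(k-1)$.

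First I would rephrase the geometric definition of a compact $k$-tuple combinatorially. Writing $o_l$ for the meta-offset index of $i_l$, the equations $A(i_l)+(d'_{i_l}+1)\tau=A(i_{l+1})+d'_{i_{l+1}}\tau$ force $o_{l+1}-o_l=d'_{i_l}+1-d'_{i_{l+1}}$, hence $o_l-o_1=(l-1)+d'_{i_1}-d'_{i_l}$. Therefore the first-period positions are pairwise distinct exactly when the residues $\rho_l:=(d'_{i_l}-l)\bmod m$ are pairwise distinct, and in the second period the messages automatically occupy the consecutive meta-blocks $c,c+1,\dots,c+k-1$ with $c=o_1+d'_{i_1}$; the only second-period collisions that could arise are between consecutive members $i_l,i_{l+1}$ inside the meta-block they share, and these are ruled out precisely because $r_{i_l}\le r_{i_{l+1}}$ (the messages being sorted by increasing $r_i$ and $i_l<i_{l+1}$; here, as always, $k\le m=P/\tau$). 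In short, a compact $k$-tuple is nothing but an index-increasing subsequence $i_1<\dots<i_k$ whose residues are pairwise distinct modulo $m$; write $R(i_1,\dots,i_k)$ for this set of $k$ residues.

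Two elementary facts then carry the whole argument. (a) Any $k$ messages sharing a common value $d'$ form a compact $k$-tuple, since their residues are $d'-1,d'-2,\dots,d'-k$, pairwise distinct modulo $m$. (b) If $i_2<\dots<i_k$ is a compact $(k-1)$-tuple with residue set $R$ and $i_1<i_2$ is a message with $d'_{i_1}\notin R$, then $i_1,i_2,\dots,i_k$ is a compact $k$-tuple: promoting each $i_l$ from position $l-1$ to position $l$ shifts every residue in $R$ by $-1$, and the new residue $d'_{i_1}-1$ differs from all of those exactly when $d'_{i_1}\notin R$. Now the induction: the case $k=1$ is trivial. For the step, a direct computation with $\sum j^{2}$ gives $N_k-N_{k-1}=1+(k-1)^2=:M$, so from the $N_k$ messages I would set the first $M$ aside and apply the induction hypothesis to the remaining $N_{k-1}$, obtaining a compact $(k-1)$-tuple $T$ with $|R(T)|=k-1$ all of whose members come strictly after those first $M$ messages. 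If some message among the first $M$ has its $d'$-value outside $R(T)$, fact (b) prepends it to $T$ and we are done. Otherwise all $M$ of them have $d'$-value in the $(k-1)$-element set $R(T)$, so since $M=(k-1)^2+1$ the pigeonhole principle yields a value shared by at least $k$ of them, and fact (a) finishes. This dichotomy is already an algorithm --- recurse on the $N_{k-1}$-message suffix, then make one pass over the $O(k^2)$ prefix messages (compute $R(T)$, test membership, or tally $d'$-frequencies) --- so with recursion depth $k$ the whole procedure runs in polynomial time.

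The main obstacle is the reformulation in the second paragraph: one must check carefully that ``residues pairwise distinct mod $m$'' together with ``$r_i$ monotone along the subsequence'' is exactly what is needed to realize a legal, collision-free compact placement, and in particular that the meta-offsets can always be chosen once these conditions hold. Once that dictionary is established, the rest is precisely the pigeonhole-and-prepend argument of Lemma~\ref{lemma:pair_find} iterated, and balancing $M=1+(k-1)^2$ against $|R(T)|=k-1$ is exactly what forces the message count $k+k(k-1)(2k-1)/6$.
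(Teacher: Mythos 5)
Your proof is correct and follows essentially the same route as the paper's: the same induction with per-step increment $N_k - N_{k-1} = (k-1)^2 + 1$, and the same pigeonhole dichotomy between $k$ messages sharing a $d'$-value (which always form a compact $k$-tuple) and a candidate whose $d'$-value avoids the $k-1$ values forbidden by the inductively obtained $(k-1)$-tuple. The only harmless differences are that you prepend to a tuple found in the suffix rather than append to one found in the prefix, and that you make explicit the residue-based characterization of compact tuples that the paper leaves implicit.
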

\begin{proof}
We assume the messages are sorted by increasing $r_i$'s. We prove the lemma by induction on $k$. Lemma~\ref{lemma:pair_find} already proves the lemma for $k=2$.
Now assume that we have found $C$ a compact $(k-1)$-tuple in the first $k-1 + (k-1)(k-2)(2k-3)/6$ messages. Consider the next $(k-1)^2 + 1$ messages: if $k$ of them have the same meta-delay, then they form a compact $k$-tuple and we are done. Otherwise, there are at least $k$ different meta-delays in those $(k-1)^2 + 1$ messages. When we want to add an additional $k$-th message in $C$, to obtain a compact $k$-tuples, the $k-1$ elements of $C$ forbid each one possible meta-delay. By pigeonhole principle, one of the $k$ messages with distinct meta-delays can be used to extend $C$. We can thus build a compact $k$-tuple from at most $(k-1) + (k-1)(k-2)(2k-3)/6 + (k-1)^2 + 1$ messages, that is $k + k(k-1)(2k-1)/6$ messages which proves the induction.
\end{proof}

In \compactpair, the compact pairs are created and scheduled in order of their $r_i$. In \texttt{Compact k-tuples}, the compact tuples are produced thanks to~\ref{lemma:uple_find} by going over the set of messages several times, hence inside a compact tuple the $r_{i_1}$'s are increasing but not
between successive tuples.

\begin{theorem}\label{th:k-tuples}
\texttt{Compact 8-tuples} always solves \pma positively on instances of load at most $2/5$ and with more than $205$ messages.
\end{theorem}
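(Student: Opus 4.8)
The plan is to generalise the argument behind the load-$3/8$ bound of \compactpair from pairs to $k$-tuples, and then instantiate it at $k=8$.

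\emph{Set-up.} By Lemma~\ref{lemma:multiple} assume $P=m\tau$, and sort the messages by non-decreasing remainder $r_i$. \texttt{Compact 8-tuples} runs in phases $j=8,7,\dots,1$; in phase $j$ it takes the next $g_j:=j+j(j-1)(2j-1)/6$ still-available messages (in sorted order), extracts a compact $j$-tuple from them with Lemma~\ref{lemma:uple_find}, and schedules it using meta-offsets, while handing every rejected message — and every message of a tuple that could not be scheduled — down to phase $j-1$. Let $n_j$ be the number of compact $j$-tuples actually scheduled in phase $j$, and $N_j$ the number of messages entering phase $j$, so $N_8=n$ and $N_{j-1}=N_j-j\,n_j$.

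\emph{Forbidden meta-offsets.} The core estimate, generalising Lemma~\ref{lemma:pair_forbid} and the second-phase count in the proof of \compactpair, is that a scheduled compact $l$-tuple forbids at most $f(l,j):=lj+l+j$ meta-offsets for a new compact $j$-tuple: the $lj$ comes from the first period, where the $l$ distinct meta-offsets of the scheduled tuple each rule out exactly one anchor per one of the $j$ rigidly positioned messages of the new tuple; the $l+j$ comes from the second period, where each compact tuple occupies a contiguous "staircase" of $l+1$ (resp.\ $j+1$) meta-offsets, two staircases overlapping for at most $(l+1)+(j+1)-1$ anchors, of which processing by remainder always spares one extreme position. (This gives $f(2,2)=8$, $f(2,1)=5$, $f(1,1)=3$, consistent with \compactpair.) Hence phase $j$ keeps running as long as $\sum_{l>j}n_l\,f(l,j)+n_j'\,f(j,j)<m$, where $n_j'$ counts the $j$-tuples placed so far in the phase; together with the message bound $n_j\le\lfloor N_j/g_j\rfloor$ this pins down the $n_j$ by a recursion. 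The algorithm succeeds exactly when the last phase $j=1$ can still place every message reaching it, i.e.\ when $\sum_{l\ge2}n_l(2l+1)+3(n_1-1)<m$; eliminating $n_1=n-\sum_{l\ge2}l\,n_l$ turns this into $\sum_{l\ge2}(l-1)n_l>(3\lambda-1)m-3$, with $\lambda=n/m$ the load. Because the algorithm greedily prefers larger tuples it makes $\sum_{l\ge2}(l-1)n_l$ as large as the budgets allow; feeding the $k=8$ values of $g_j$ and $f(l,j)$ into the recursion and setting $\lambda=4/10$, one verifies this inequality with a positive (though slim) margin, which is the theorem.

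\emph{The bound $n\ge220$, and the hard part.} A hypothesis like $n\ge220$ is unavoidable: phase $j$ cannot produce a single $j$-tuple before $g_j$ messages are present ($g_8=148$), and the floors $\lfloor N_j/g_j\rfloor$ plus the per-phase rounding cost an additive $O(1)$ in $\sum_{l\ge2}(l-1)n_l$, which must stay under the margin found above; $n\ge220$ is enough for both. I expect the genuinely delicate point to be the cross-phase bookkeeping behind the one-position saving in the second-period count: it is clear within a phase (the tuples are built from consecutive blocks of the sorted list, so an older tuple has smaller remainders than the one being inserted), but to keep it valid between phases one must control where rejected messages re-enter the order so that any scheduled tuple and any newly inserted tuple still have essentially disjoint — hence comparable — remainder ranges, and then check that the numerical margin for $k=8$, $\lambda=4/10$ really survives once all floor and rounding losses are subtracted from it.
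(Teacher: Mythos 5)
Your proposal follows the paper's proof almost exactly: the same generalization of the pair analysis to per-phase bounds $Fmo_i(A)\le\sum_{j\ge i} n_j\,f(j,i)$, the same use of Lemma~\ref{lemma:uple_find} to guarantee that enough compact tuples exist, and the same deferral of the final step to a numerical computation of the $n_i$ for $k=8$ and $n\ge 220$. The only divergence is your forbidden-offset count $f(l,j)=lj+l+j$, which is one less than the paper's $lj+l+j+1$ for $j\ge 2$ (the two agree at $j=1$ and for pairs); since the paper's computation already reaches load $4/10$ with the weaker constant, your sharper count --- whose cross-phase justification you rightly flag as the delicate point --- is not actually needed.
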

\begin{proof}
We use the following fact, which generalizes Lemma~\ref{lemma:pair_forbid}: A $k$-tuple forbids $k+j+1$ offsets in the second period when scheduling a $j$-tuple. Remark that one more offset is forbidden compared to Lemma~\ref{lemma:pair_forbid} when $k = j = 2$, because it is not true anymore that all
$r_i$ in the $j$-tuple are larger than those of the already scheduled $k$-tuple.

Let us denote by $n_i$ the number of compact $i$-tuples scheduled by the algorithm. We now compute a lower bound on the $n_i$ for $i$ equal $k$ down to $1$ by bounding $\Fmo_i(A)$, the number of forbidden meta-offsets when scheduling compact $i$-tuples in the algorithm. 
We have the following equation:  $$ \Fmo_i(A) \leq \displaystyle{\sum_{j=i}^k n_j(j+1)*(i+1)}.$$
The equation for $n_1$ is slightly better: 
$$ \Fmo(A) \leq \displaystyle{\sum_{j=1}^k n_j(2j + 1)}.$$
A lower bound on $n_i$ can be computed, using the fact that $A$ can be extended while $\Fmo_i(A) < m$ and assuming we know the value of the $n_j$'s with $j > i$. 
Lemma~\ref{lemma:uple_find} ensures that enough compact $i$-tuples can be built, when $n + n_i - \sum_{i \leq j \leq 8} j*n_j$ is larger than $i + i(i-1)(2i-1)/6$. 
A numerical computation of the $n_i$'s shows that \texttt{Compact 8-tuples} always finds an assignment when the load is at most $2/5$ and for $n \geq 205$.
\end{proof}

The code computing the $n_i$'s and thus the bound on the load can be found on github\footnote{\url{https://github.com/Mael-Guiraud/GuiraudStrozecki2023Scheduling}}. Th.~\ref{th:k-tuples} is obtained for $k=8$. Taking arbitrary large $k$ and using refined bounds on $\Fmo_i(A)$ is not enough to get an algorithm working for a load of $41/100$ (and it only works from larger $n$). To produce a compact $8$-tuples by Lemma~\ref{lemma:uple_find}, there must be $148$ messages, hence the restriction of $n \geq 205$ to be able to produce enough compact $8$-tuples.
The bound of Lemma~\ref{lemma:uple_find} can be improved, by using more complex algorithms to construct $k$-tuples, e.g. a simple case analysis shows that, in the worst case, $7$ messages are necessary to construct a compact $3$-tuple and not $8$. However, the restriction on $n$ is not relevant in practice, since on random instances, the probability that $k$ messages do not form a compact $k$-tuples is low, and thus we can build the $k$-tuples greedily. For instance, with $P=50\tau$ and thus $n \leq 50$, there is a probability larger than $55\%$ that $8$ random messages form a compact $8$-tuples, $86\%$ for $9$ messages and $96\%$ for $10$ messages.

\section{Messages of Size One} \label{sec:small}

We consider in this section the special case $\tau = 1$. While $\tau > 1$ for a C-RAN application, other applications such as sensors communicating with a base station through a low bandwidth channel may be modeled with $\tau = 1$. Moreover, in Sec.~\ref{sec:reduction}, we prove that any instance with $\tau >1$ can be transformed into an instance with $\tau = 1$, by increasing the load or the latency of the system.

When $\tau = 1$ and the load is less than $1/2$, \emph{any greedy algorithm} solves \pma positively since $\Fo(A) \leq (4\tau -2)\lvert S\rvert = 2\lvert S\rvert$ where $S$ is the set of scheduled messages. In this section, we give a polynomial time algorithm that always finds a valid assignment when the load is less than $1/2 + (\sqrt{5}/2 -1)$. We also show that a simple randomized greedy algorithm
works for loads arbitrarily close to one on random instances.

\subsection{Deterministic Algorithm}

We define the notion of \emph{potential} of a partial assignment, which indirectly measures how many offsets are left available by this assignment for all messages of the instance. To go above $1/2$ of load, we introduce the the \swapandmove algorithm: it schedules message greedily, and when it fails, the potential is optimized by local operations on the partial assignment, to get more available offsets. 

\begin{definition}
Let $i$ be a message of delay $d$ and let $A$ be a partial assignment.
The potential of $i$ for $A$, denoted by $pot_{msg}(i)$, is the number of integers $p \in [P]$ such that $p$ is used in the first period and $p+d \mod P$ is used in the second period. 
\end{definition}

The computation of the potential of a message of delay $3$, is illustrated in Fig.~\ref{fig:messagepotential}. The potential of a message counts how many forbidden offsets are avoided by the message given a partial assignment $A$.
Indeed, when $p$ is used in the first period and $p+d \mod P$ is used in the second period, then the same offset is forbidden \emph{twice} for a message of delay $d$. Hence, the potential of a message is related to the number of possible offsets as stated in the following lemma. 
\begin{figure}
\begin{center}
\includegraphics[scale=1]{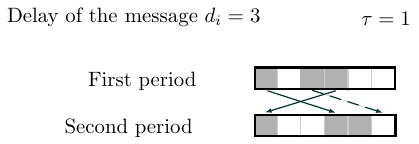}
\end{center}
\caption{A partial assignment $A$ with scheduled messages in gray. Message $i$ of delay $3$ has potential $2$ because of positions $0$ and $3$ in the first period.}
\label{fig:messagepotential}
\end{figure}

\begin{lemma}\label{lemma:pot_msg}
Given a partial assignment $A$ of size $s$, and a message $i$, then the set $\{o \mid A[i \rightarrow o] \text{ has no collision}\}$ is of size $P - 2s + pot_{msg}(i)$.
\end{lemma}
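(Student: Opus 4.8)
The plan is to count directly the offsets $o \in [P]$ for which placing message $i$ at offset $o$ creates no collision, and to show that the forbidden offsets number exactly $2s - v$. First I would recall the two ways an offset can be forbidden: $o$ is forbidden in the first period if the interval $[o]_1 = \{o\}$ (since $\tau = 1$) hits a time used by some scheduled message in the first period, and forbidden in the second period if $o + d \bmod P$ hits a time used in the second period, where $d = d_i$. Since $A$ has size $s$ and $\tau = 1$, there are exactly $s$ times used in the first period and exactly $s$ times used in the second period (no two scheduled messages collide, so these are $s$ distinct times in each period). Hence there are exactly $s$ offsets forbidden "by the first period" and exactly $s$ offsets forbidden "by the second period".

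The key point is the inclusion–exclusion correction: an offset $o$ is both forbidden by the first period and forbidden by the second period precisely when $o$ is a time used in the first period \emph{and} $o + d \bmod P$ is a time used in the second period. By the definition of potential, the number of such $o$ is exactly $v$. Therefore the total number of forbidden offsets is $s + s - v = 2s - v$, and the number of valid offsets is $P - (2s - v) = P - 2s + v$, which is the claimed bound. I would phrase this as: let $F_1$ be the set of offsets forbidden due to the first period and $F_2$ those forbidden due to the second period; then $|F_1| = |F_2| = s$, and $|F_1 \cap F_2| = v$ by definition of the potential, so $|\{o \mid A[i \to o] \text{ has a collision}\}| = |F_1 \cup F_2| = 2s - v$.

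There is no real obstacle here; the only thing to be careful about is the bookkeeping that, with $\tau = 1$, "uses time $t$ in the first period" for a scheduled message $j$ means simply $t = A(j)$, so the set of used times in the first period has size exactly $s$ (and likewise $s$ distinct used times in the second period), and that an offset counted in $F_1 \cap F_2$ is counted once in the union, matching one unit of potential. One should also note that the map $o \mapsto o + d \bmod P$ is a bijection of $[P]$, so $|F_2| = s$ indeed equals the number of used times in the second period. This gives the lemma directly.
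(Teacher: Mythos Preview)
Your proof is correct and follows exactly the idea the paper sketches in the sentence preceding the lemma: the potential counts those offsets that are forbidden twice (once by the first period, once by the second), so inclusion--exclusion gives $2s - v$ forbidden offsets and hence $P - 2s + v$ valid ones. The paper does not spell out a formal proof beyond that remark, and your write-up fills in precisely the details (that $|F_1| = |F_2| = s$ because $\tau = 1$, and that $|F_1 \cap F_2| = v$ by the definition of potential) needed to make the claim rigorous.
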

\begin{proof}
 Each of the $s$ messages scheduled by $A$ forbids at most two offsets for $i$, that is $2s$ in total. In these $2s$ forbidden offsets, exactly $pot_{msg}(i)$ are counted twice by definition of the potential of the message $i$. Since there are $P$ possible values for the offset $o$ of the message $i$, there are $P - 2s + pot_{msg}(i)$ of these values which do not create a collision when scheduling $i$.\end{proof}

We define a global measure of the quality of a partial assignment. 
Given a partial assignment $A$, the sum of potentials of all messages in the instance is called \textbf{the potential} of the assignment $A$ and is denoted by $Pot(A)$.

\begin{definition}
Let $p \in [P]$ be a position in the first period, and let $A$ be a valid partial assignment. The potential of $p$, denoted by $pot_{pos}(p)$, is the number of messages $i \in [n]$, such that there is a collision in $A[i \rightarrow p]$. 
\end{definition}

\begin{figure}
\begin{center}
\includegraphics[scale=1]{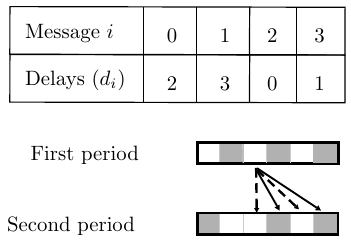}
\end{center}
\caption{A partial assignment $A$ with scheduled messages in gray. Position $p=2$ is of potential $2$ because there is a collision in $A[1 \rightarrow p]$ and  $A[3 \rightarrow p]$.}
\label{fig:positionpotential}
\end{figure}

The potential of a position is illustrated in Fig.~\ref{fig:positionpotential}. Instead of decomposing the global potential as a sum over messages, it can be interpreted as a sum over positions, as stated in the next lemma.

\begin{lemma}\label{lemma:pot_pos}
The sum of potentials of all positions used in the first period by messages scheduled by $A$ is equal to $Pot(A)$.  
\end{lemma}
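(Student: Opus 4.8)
The plan is to prove the identity by a double-counting argument. Both $Pot(A)$ (viewed as a sum over messages) and the sum in the statement (over positions used in the first period) count the cardinality of the same set of pairs, so the identity follows from showing both expressions equal $|T|$ for
\[
T = \{(i,k) \mid k \in [n],\ i \text{ used in the first period by } A,\ i + d_k \bmod P \text{ used in the second period by } A\}.
\]
Here $k$ ranges over \emph{all} messages of the instance (scheduled or not), matching the convention that $Pot(A)$ sums the potential of every message.

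First I would recall the definition: the potential of message $k$ (of delay $d_k$) is exactly $|\{i \in [P] \mid i \text{ used in the first period},\ i+d_k \bmod P \text{ used in the second period}\}|$, i.e.\ the number of $i$ with $(i,k)\in T$. Summing over all messages $k$ gives $Pot(A) = |T|$ by definition. Next, for the right-hand side, I would group $T$ by its first coordinate: for a fixed position $i$, the number of $k$ with $(i,k)\in T$ is, by definition, the potential of the position $i$ — \emph{provided} $i$ is used in the first period; if $i$ is not used in the first period, then no pair $(i,k)$ lies in $T$ regardless of $k$, contributing $0$. Hence $|T| = \sum_{i \text{ used in the first period}} (\text{potential of position } i)$, which is precisely the sum in the statement. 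Combining the two counts yields $Pot(A) = \sum_{i \text{ used in first period}} \mathrm{potpos}(i)$.

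There is essentially no obstacle here; the only point requiring a little care is bookkeeping the quantifier "used in the first period." The definition of position potential does not itself restrict to positions used in the first period — it counts messages $k$ with $i+d_k$ used in the second period for \emph{any} $i$ — so the summation in the statement deliberately restricts to positions used in the first period, and one must check that dropping the unused positions loses nothing, which is exactly the observation that such positions contribute no pair to $T$. One should also note the argument is symmetric in spirit to Lemma~\ref{lemma:multiple}'s counting but genuinely elementary: it is just Fubini/associativity of a finite sum over $T$, first fibering over $k$, then over $i$. No periodicity subtlety beyond interpreting $i+d_k$ as $i+d_k \bmod P$, which is already built into the definitions.
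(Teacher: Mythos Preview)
Your proposal is correct: the double-counting of the set $T$ is exactly the natural argument, and your care about restricting to positions used in the first period is the only point worth noting. The paper itself gives no explicit proof of this lemma---it is stated as an immediate reformulation of $Pot(A)$ (``instead of decomposing the global potential as a sum over messages, it can be understood as a sum over positions'')---so your argument is precisely the intended one written out in full. (Your parenthetical reference to Lemma~\ref{lemma:multiple} is a slip; that lemma concerns period reduction and is unrelated, but this does not affect the proof.)
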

\begin{proof}
 Let us denote by $\chi_{o,d}(A)$ the indicator function which is equal to one if and only if for $A$, $o$ is used in the first period and $o+d$ is used in the second period.

By definition of potential of an assignment, 
$$Pot(A) = \sum_{i \in [n]} pot_{mes}(i).$$
By definition of the potential of a message, 
$$\displaystyle{Pot(A) = \sum_{i \in [n]} \sum_{\substack{o \in [P]}} \chi_{o,d_i}(A) }.$$
By exchanging the summation order, we obtain
$$\displaystyle{Pot(A) = \sum_{o \in [P] }\sum_{i \in [n]} \chi_{o,d_i}(A)}.$$

Let $O$ be the set of offsets used by $A$.
When $o$ is not in $O$, it contributes nothing to the sum, thus
$$\displaystyle{Pot(A) = \sum_{o \in O} \sum_{i \in [n]} \chi_{o,d_i}(A) }.$$
Then, by definition of potential of a position we obtain

$$\displaystyle{Pot(A) = \sum_{o \in O} pot_{pos}(o)}.$$
\end{proof}

The sum of the potentials of \emph{all} positions can easily be computed and only depends on the size of the partial assignment.

\begin{lemma}\label{lemma:inv}
The sum of potentials of all positions of a partial assignment of size $k$ is $nk$.  
\end{lemma}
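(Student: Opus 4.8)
\textbf{Proof plan for Lemma~\ref{lemma:inv}.}
The plan is to prove the identity by a double-counting (Fubini) argument, exchanging a sum over positions with a sum over messages, exactly in the spirit of Lemma~\ref{lemma:pot_pos} but without restricting to the positions used in the first period. First I would fix notation: let $A$ be a partial assignment with $k$ scheduled messages, and let $U_2 \subseteq [P]$ be the set of positions used in the second period by the messages scheduled by $A$. Since $\tau = 1$ and no two scheduled messages collide, each scheduled message occupies exactly one position in the second period and these positions are pairwise distinct, so $|U_2| = k$. By the definition of the potential of a position, $\mathrm{pot}(i) = |\{\, m \in [n] : (i + d_m) \bmod P \in U_2 \,\}|$, where the count ranges over \emph{all} $n$ messages of the instance, scheduled or not.

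Then I would write
\[
\sum_{i \in [P]} \mathrm{pot}(i) \;=\; \sum_{i \in [P]} \sum_{m \in [n]} \bigl[\, (i + d_m)\bmod P \in U_2 \,\bigr] \;=\; \sum_{m \in [n]} \sum_{i \in [P]} \bigl[\, (i + d_m)\bmod P \in U_2 \,\bigr],
\]
using Iverson-bracket notation for the indicator. For each fixed message $m$, the map $i \mapsto (i + d_m) \bmod P$ is a bijection of $[P]$ onto itself, so the inner sum equals $|U_2| = k$. Summing over the $n$ messages gives $\sum_{i \in [P]} \mathrm{pot}(i) = n k$, which is the claim.

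There is essentially no hard step here: the only points requiring care are the observation that $\tau = 1$ together with collision-freeness forces $|U_2| = k$ exactly (a scheduled message neither spans several positions nor shares a second-period position with another), and the bijectivity of translation by $d_m$ modulo $P$. I would also remark, for consistency, that combining this identity with Lemma~\ref{lemma:pot_pos} shows that the potential concentrated on the first-period positions, $Pot(A)$, is at most $nk$, with the deficit measuring how much room remains — which is the quantity the Swap operation of the deterministic algorithm will be designed to push up.
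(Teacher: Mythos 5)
Your double-counting argument is correct, and it is the natural formalization of what the paper treats as immediate (the paper gives no proof, stating only that the invariant follows ``by definition of the potential of a position''). The two points you flag — that $\tau=1$ and collision-freeness give exactly $k$ used second-period positions, and that translation by $d_m$ is a bijection of $[P]$ — are precisely what makes the identity exact, so nothing is missing.
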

\begin{proof}
We want to compute $\sum_{o \in [P]} pot_{pos}(o)$ which is by definition
$$\sum_{o \in [P]} \sum_{\substack {i \in [n],\, o + d_i \text{ is used} \\ \text{in the second period}}} 1.$$
Remark that for one position $p$ used in the second period, and one message $i$, there is exactly one offset $o$ such that $p = o+d_i \mod P$.
Hence, each used position of the second period contributes $1$ for each message in the double sum, that is $n$ in total.
Since $A$ is of size $k$, there are $k$ used positions and the sum is equal to $kn$.
\end{proof}

 As a consequence of this lemma, $Pot(A) \leq nk$. Let us define a \textbf{Swap operation}, which guarantees to obtain at least half the maximal value of the potential. Let $A$ be some partial assignment of size $s$ and let $i$ be an unscheduled message. Assume that $i$ cannot be used to extend $A$. The Swap operation is the following: 
select a free position $p$ in the first period, remove the message which uses the position $p+d_i$ in the second period from $A$, and extend $A$ by $i$ with offset $p$. We denote by $Swap(i,p,A)$ the partial assignment obtained by this operation.

\begin{lemma}\label{lemma:swap}
Let $A$ be some partial assignment of size $k$ and let $i$ be an unscheduled message. If $i$ cannot be used to extend $A$, then either $Pot(A) \geq kn/2$ or there is $p \in [P]$ such that $Pot(Swap(i,p,A)) > Pot(A)$.
\end{lemma}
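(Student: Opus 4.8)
\textbf{Proof plan.}

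The plan is to compute exactly how $Pot$ changes under a single $Swap$ and show this change is governed by the position-potentials of the two first-period slots involved, then use the conservation law of Lemma~\ref{lemma:inv} to force the dichotomy. First I would unpack the failure hypothesis. Since $\tau=1$, the assignment $A$ of size $k$ occupies exactly $k$ positions in the first period, forming a set $F$ with $|F|=k$, and exactly $k$ positions in the second period, forming a set $Q$ with $|Q|=k$. Let $d$ be the delay of $i$ and let $T=\{o\in[P] : (o+d)\bmod P\in Q\}$, so $|T|=k$. Placing $i$ at offset $o$ collides with $A$ iff $o\in F$ or $o\in T$; hence ``$i$ cannot extend $A$'' is equivalent to $F\cup T=[P]$. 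In particular every free first-period position $o\notin F$ lies in $T$, so there is a \emph{unique} scheduled message $j=j(o)$ whose second-period slot equals $(o+d)\bmod P$; this is precisely the message removed by $Swap(i,o,A)$, and I will write $p_{j(o)}=A(j(o))$ for its first-period slot. The key structural observation, which makes the bookkeeping clean, is that the second period is \emph{unchanged} by the swap: message $j(o)$ vacated the second-period slot $(o+d)\bmod P$ and message $i$ immediately reoccupies exactly that slot (its delay is also $d$). So $A'=Swap(i,o,A)$ differs from $A$ only in the first period, where $F$ becomes $F'=(F\setminus\{p_{j(o)}\})\cup\{o\}$.

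Next I would compute $Pot(A')-Pot(A)$ via the message decomposition $Pot(A)=\sum_{m}\mathrm{pot}_A(m)$ over all $n$ messages of the instance. For a message $m$ of delay $d_m$, $\mathrm{pot}_A(m)=|\{p\in F : (p+d_m)\bmod P\in Q\}|$, and since $F$ and $F'$ differ only by swapping $p_{j(o)}$ out and $o$ in (and $o\notin F$), we get $\mathrm{pot}_{A'}(m)-\mathrm{pot}_A(m)=[\,(o+d_m)\bmod P\in Q\,]-[\,(p_{j(o)}+d_m)\bmod P\in Q\,]$. Summing over all $n$ messages and recognizing the two resulting sums as the position-potentials of $o$ and of $p_{j(o)}$ (by the very definition of the potential of a position), I obtain
\[
Pot(Swap(i,o,A))-Pot(A)=\mathrm{pot}^{\mathrm{pos}}_A(o)-\mathrm{pot}^{\mathrm{pos}}_A(p_{j(o)}).
\]
Thus a swap strictly improves the potential precisely when the free slot $o$ has larger position-potential than the first-period slot of the message it evicts.

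Finally I would argue by contradiction: suppose no free $o$ improves $Pot$, i.e.\ $\mathrm{pot}^{\mathrm{pos}}_A(o)\le \mathrm{pot}^{\mathrm{pos}}_A(p_{j(o)})$ for every $o\in[P]\setminus F$. The map $o\mapsto j(o)$ from free positions into the $k$ scheduled messages is injective, because distinct $o$ give distinct values $(o+d)\bmod P$ and hence distinct second-period slots. Summing the inequalities over all $P-k$ free positions, the left side is $\sum_{o\notin F}\mathrm{pot}^{\mathrm{pos}}_A(o)=nk-Pot(A)$ by Lemma~\ref{lemma:inv} together with Lemma~\ref{lemma:pot_pos} (which identifies $\sum_{o\in F}\mathrm{pot}^{\mathrm{pos}}_A(o)$ with $Pot(A)$), while the right side is at most $\sum_{\ell=1}^{k}\mathrm{pot}^{\mathrm{pos}}_A(p_\ell)=Pot(A)$ by injectivity, nonnegativity of position-potentials, and Lemma~\ref{lemma:pot_pos} again. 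Hence $nk-Pot(A)\le Pot(A)$, i.e.\ $Pot(A)\ge nk/2$, which is the other alternative. The main thing to get right is the two combinatorial identifications — that the swap leaves the second period intact, and that $o\mapsto j(o)$ is injective on free positions — since everything else is then a mechanical application of the two conservation lemmas; the inequality direction (bounding the evicted-slot potentials \emph{above} by $Pot(A)$) is exactly what the injectivity buys us.
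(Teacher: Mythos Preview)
Your proof is correct and follows essentially the same route as the paper: you identify that a swap leaves the second period (hence all position-potentials) unchanged, so $Pot(Swap(i,o,A))-Pot(A)$ equals the difference of position-potentials at $o$ and at the evicted message's first-period slot, and then you use the injectivity of $o\mapsto j(o)$ together with Lemmas~\ref{lemma:pot_pos} and~\ref{lemma:inv} to conclude $nk-Pot(A)\le Pot(A)$ when no swap improves. The only cosmetic difference is that you derive the potential-change formula by summing the message decomposition explicitly, whereas the paper appeals directly to Lemma~\ref{lemma:pot_pos}.
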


\begin{proof}
The positions in the first period can be partitioned into $P_{u}$ the positions used by some scheduled message and $P_{f}$ the free positions.
Let $V_f$ be the sum of the potentials of the positions in $P_f$ and let $V_u$ be the sum of the potentials of the positions in $P_u$. By Lemma~\ref{lemma:inv}, since $P_f$ and $P_u$ partition the positions, we have $V_f + V_u = kn$. Moreover, by Lemma~\ref{lemma:pot_pos}, $Pot(A) = V_u$, then $V_f + Pot(A) = kn$.

By hypothesis, $i$ cannot be scheduled, then, for all $p \in P_{f}$, $p+d_i$ is used in the second period. Let $F$ be the function which associates to $p \in P_{f}$ the position $A(j)$ such that there is $j$ a scheduled message which uses $p+d_i$ in the second period, that is $A(j) + d_j = p + d_i \mod P$. The function $F$ is an injection from $P_{f}$ to $P_u$. Remark that, in both $Swap(i,p,A)$ and $A$, the same positions are used in the second period. Hence, the potential of each position remains the same after the swap. As a consequence, doing the operation $Swap(i,p,A)$ adds to $Pot(A)$ the potential of the position $p$ and removes the potential of the position $F(p)$. 

Assume now, to prove our lemma, that for all $p$, $Pot(Swap(i,p,A)) \leq Pot(A)$. It implies that for all $p$, the potential of $p$ is smaller than the potential of $F(p)$. Since $F$ is an injection from $P_f$ to $P_u$, we have that $V_f \leq V_u = Pot(A)$. Since $V_f + Pot(A) = kn$, we have that $Pot(A) \geq kn/2$.
\end{proof}

Let us now define algorithm \swapandmove. 
It schedules messages using \firstfit while possible. Then, it applies the Swap operation while it increases the potential. When the potential cannot be improved by a Swap anymore, \swapandmove try to schedule a new message at each position. When scheduling the message at some position, if it conflicts with one or two already scheduled messages, they are moved to another offset if possible. If \swapandmove fails to schedule the message it stops, otherwise the whole procedure is repeated. 

Algorithm \swapandmove is not greedy, since we allow to change the offset of a message, either to improve the potential or to free an offset for scheduling a new message. However, the number of scheduled messages during the algorithm increases and a message cannot be unscheduled, it only has its offset changed.
While computing the potential requires knowing all delays in advance, \swapandmove can be adapted to work online by considering the potential of a partial assignment to be the sum of potentials of the scheduled messages.

\begin{theorem}
\swapandmove solves \pma positively, in time $O(n^3)$, for instances with $\tau =1$ and load at most $(\sqrt{5}-1)/2 \approx 0,618$.
\end{theorem}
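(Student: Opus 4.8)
The plan is to prove that \swapandmove never halts before all messages are scheduled. Since $Pot(A)$ is a non-negative integer bounded by $nk\le n^2$ (Lemma~\ref{lemma:inv}) and every useful Swap strictly increases it, the Swap phase always terminates; and the outer loop is re-entered only after $|A|$ has grown, so there are at most $n$ outer rounds, each performing polynomially many elementary operations (one \firstfit pass, at most $n^2$ Swaps, and a search over unscheduled messages, offsets, and pairs of scheduled messages for the final step). This yields the polynomial running time. Rewriting the load hypothesis, we may assume throughout that $P>\phi n$, equivalently $n<(\phi-1)P$.

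The heart of the argument is the claim: if $A$ has size $k<n$ and $Pot(A)\ge kn/2$ — the value guaranteed by Lemma~\ref{lemma:swap} once no Swap helps — then the final step extends $A$ to size $k+1$ by relocating at most two scheduled messages. If some unscheduled message can still be placed directly we are done with zero moves, so I would assume this is impossible, which by the offset--potential lemma forces $k\ge P/2$. First I would extract the rigidity of such a ``stuck'' configuration: unschedulability of a message $i$ means $F_1+d_i\subseteq U_2$ (writing $F_1$ for the free positions of the first period and $U_2$ for the used positions of the second), so every unscheduled message has potential exactly $2k-P$, the minimum value allowed by the offset--potential lemma, and the same computation shows every scheduled message has potential at least $2k-P$. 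Call a scheduled message \emph{low} if its potential equals $2k-P$ — equivalently, it has no free alternative slot — and \emph{high} otherwise, and let $h$ be the number of high messages.

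Next I would analyse how one- and two-move extensions can fail. A one-move extension placing $i$ at a free position $p$ fails only if the message occupying $p+d_i$ in the second period is low; letting $p$ range over $F_1$ produces $P-k$ distinct low messages, so already $h\le 2k-P$. Sharpening this with the second move — for a low blocker $j_2$ one shifts $j_2$ onto a half-free position $o_2$ and uses the second move to evacuate the occupant $j_3$ of $o_2+d_{j_2}$, and symmetrically one may attack a used first-period position and relocate both of its occupants — one should be able to show that stuck-ness forces at least $P/2$ scheduled messages to be low, i.e.\ $h\le(2k-P)/2$. Substituting $v_j=2k-P$ for every message except the high ones and $v_j\le k$ for those into $Pot(A)=\sum_j v_j$ gives $Pot(A)\le(2k-P)n+h(P-k)\le(2k-P)\bigl(n+(P-k)/2\bigr)$. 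Combined with $Pot(A)\ge kn/2$ and the substitution $n=\lambda P$, $k=\alpha P$, this reads $\alpha\lambda/2\le(2\alpha-1)\bigl(\lambda+(1-\alpha)/2\bigr)$. An elementary estimate shows the difference of the two sides is increasing in $\alpha$ on $[1/2,\lambda]$ and equals $(\lambda^2+\lambda-1)/2$ at $\alpha=\lambda$, so the inequality fails throughout $1/2\le\alpha<\lambda$ exactly when $\lambda<\phi-1$, a contradiction; hence the final step never fails and the algorithm schedules all messages.

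The hard part will be the two-move analysis. One must enumerate all ways of inserting a new message after moving two others — attacking a free first-period position and chaining $j_2\mapsto j_3$, attacking a used first-period position and relocating its two occupants, plus the boundary cases where a relocation target coincides with a position the insertion or the other relocation already claims (these are exactly the configurations in which a high message can still resist a given move) — and to bound the bookkeeping tightly enough to force $h\le(2k-P)/2$. A looser count is not merely cosmetic: using only one-move extensions one reaches the weaker threshold $2-\sqrt2\approx0.586$ rather than $\phi-1\approx0.618$, so it is precisely the second move that buys the golden ratio.
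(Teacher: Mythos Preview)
Your overall architecture matches the paper's --- polynomial running time via the potential bound on Swaps, and correctness by showing that once $Pot(A)\ge kn/2$ a new message can always be inserted after relocating at most two scheduled ones --- but the crucial combinatorial step is left open. You assert that failure of all two-move extensions forces $h\le(2k-P)/2$ (equivalently, at least $P/2$ scheduled messages are low), and you say so yourself: ``one should be able to show'', ``the hard part will be the two-move analysis''. That bound is the entire content of the theorem. One-move failures alone identify only $P-k$ low messages (the blockers in the second period over free first-period slots), which is below $P/2$ precisely in the regime $k>P/2$ where the argument is needed; and your sketched chain move (shift the low blocker $j_2$ to a half-free slot $o_2$ and evacuate $j_3$) does not obviously manufacture the missing $P/2-(P-k)=k-P/2$ additional low messages. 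Without that count, the upper bound on $Pot(A)$ does not follow and the comparison with $kn/2$ never gets off the ground.

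The paper does not count high messages. It fixes the unscheduled message of delay $d$ and sums potentials over all $o\in[P]$, each scheduled message being hit exactly twice (once at $o=A(j)$ via the first period, once at $o=A(j)+d_j-d$ via the second). At a position where both $o$ and $o+d$ are occupied --- there are $2(\epsilon P-1)$ of these --- the only two-move used is to relocate those two occupants; its failure forces their potentials to \emph{sum} below $2(\epsilon P-1)+n$, with no statement about either one individually. Summing the three cases gives $Pot(A)\le(\epsilon P-1)(P+n)$ directly, which combined with $Pot(A)\ge n(n-1)/2$ yields $1/4-2\epsilon-\epsilon^2\ge0$ and hence $\epsilon\le\sqrt5/2-1$. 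This is the same upper bound your inequality $h\le(2k-P)/2$ would produce at $k=n-1$, but reached by a per-position pairing of potentials rather than by the global high/low dichotomy you propose. If you want to rescue your route, the natural move is to drop the attempt to bound $h$ and instead bound $\sum_j v_j$ via the same double-counting over $o$.
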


\begin{proof}
We determine for which value of the load \swapandmove always finds an assignment.
We consider the situation when $n-1$ messages are scheduled by $A$ out of $n$ and \swapandmove tries to schedule the last one. The proof that the algorithm schedules the previous messages is the same. We let $n - 1 = (1/2 + \varepsilon)P$ be the number of messages, hence the load we achieve is $1/2 + \varepsilon$. 

Let $d$ be the delay of the last unscheduled message, w.l.o.g. we assume that $d = 0$. 
Let $P_f$ be the set of $p\in [P]$ which are free in the first period. Let $P^1_{u}$ be the set of $p\in [P]$, such that $p$ is used in the first period but $p$ is free in the second period. Let $P^2_{u}$ be the set of $p\in [P]$, such that $p$ is used in the first period and $p$ is used in the second period. 

Since there are $n-1$ messages scheduled by $A$, considering the positions used in the first period, we have $n-1 = \lvert P^1_u\rvert + \lvert P^2_u\rvert$.
To prove a lower bound on the load, we assume that the last message cannot be scheduled greedily by \swapandmove. Hence, if $p\in P_f$, then $p$ is used in the second period. By considering the used positions in the second period, we have $n - 1 = \lvert P^2_u\rvert + \lvert P_f\rvert$.
Hence, because the three sets partition $[P]$, $\lvert P^2_u\rvert = 2(n-1) - P = 2\varepsilon P$ and $\lvert P^1_u\rvert = \lvert P_f\rvert = (1/2 - \varepsilon)P$.

Consider a position $p \in P^1_u$, then $p$ is used in the first period by some message $i$.
If the offset of $i$ can be changed in $A$ to obtain a valid assignment, then \swapandmove does it and succeeds because the last message can now be scheduled at offset $p$.
By Lemma~\ref{lemma:pot_msg}, the number of offsets which can be used by some message $i$ is $P- 2(n-1) + pot_{msg}(i)$. Since the offset of $i$ cannot be changed, this number is zero and we have $pot_{msg}(i) = 2(n-1) - P = 2\varepsilon P$. For the same reason, a message $i$ with $(A(i) + d_i \mod P) \in P_f$ has the same potential.

Consider a position $p \in P^2_u$, then $p$ is used by a message $i$ in the first period and $p$ is used by a message $j$ in the second period. If both $i$ and $j$ can be scheduled elsewhere, then \swapandmove moves them and succeeds. 
By Lemma~\ref{lemma:pot_msg}, both messages can be scheduled when one is of potential at least $2\varepsilon P + 1$ and the other at least $2\varepsilon P + 2$. 
Since a message is of potential at most $n-1$, both messages can always be rescheduled when the sum of the two potentials is at least $2\varepsilon P + n$. Hence, we may assume that their sum is less than $2 \varepsilon P + n$. 

By definition, $Pot(A)$ is the sum of the potential of the messages, that we can write as 
$$Pot(A) = \sum_{\substack{i \in [n] \\ A(i) \in P^1_u} } pot_{msg}(i) + \sum_{\substack{i \in [n] \\ A(i) \in P^2_u} } pot_{msg}(i).$$

We may also divide the sum according to the positions in the second period,

$$Pot(A) = \sum_{\substack{i \in [n] \\ A(i) + d_i \in P_f} } pot_{msg}(i) + \sum_{\substack{i \in [n] \\ A(i) + d_i \in P^2_u} } pot_{msg}(i).$$

By summing both equalities, using $\lvert P^1_u\rvert = \lvert P_f\rvert = (1/2 - \varepsilon)P$ and 
$pot_{msg}(i) = 2\varepsilon P$ for $A(i) \in P^1_u$ and  $A(i) + d(i) \in P_f$, we obtain  

\begin{equation*}
  \begin{aligned}
    2Pot(A) & = 2(1/2 - \varepsilon)P \times 2\varepsilon P + \sum_{\substack{i \in [n] \\ A(i) \in P^2_u} } pot_{msg}(i) \\
		  & + \sum_{\substack{i \in [n] \\ A(i) + d_i \in P^2_u} } pot_{msg}(i).
  \end{aligned}
\end{equation*}

We have proved that $$pot_{msg}(i) + pot_{msg}(j) < 2\varepsilon P + n$$ when $A(i) \in P^2_u$ and $A(i) = (A(j) + d_j \mod P)$. Since  $\lvert P^2_u\rvert = 2\varepsilon P$, we have $$\sum_{\substack{i \in [n] \\ A(i) \in P^2_u} } pot_{msg}(i) + \sum_{\substack{i \in [n] \\ A(i) + d_i \in P^2_u} } pot_{msg}(i) < 2\varepsilon P(2\varepsilon P + n).$$
Using this inequality and simplifying the expression, we obtain

$$  Pot(A) < \varepsilon P (P + n).$$

This bound is obtained when \swapandmove fails to schedule the last message.
On the other hand, by Lemma~\ref{lemma:swap}, we know that $Pot(A) \geq n(n-1)/2$, hence 
\swapandmove must succeed when
$$n(n-1)/2 \geq  \varepsilon P (P + n -1).$$
By expanding and simplifying the previous inequation, we obtain a second-degree inequation in $\varepsilon$, $1/4 - 2\varepsilon - \varepsilon ^2 \geq  0$.
Solving this inequation yields $\varepsilon \leq \sqrt{5}/2 -1$.

Let us prove that \swapandmove is in polynomial time. All Swap operations 
strictly increase the potential. Moreover, when one or two messages are moved, the potential may decrease but a message is added to the partial assignment. The potential is bounded by $O(n^2)$ and the move operations all together can only remove $O(n^2)$ to the potential, hence there are at most $O(n^2)$ Swap operations during \swapandmove. A Swap operation can be performed in time $O(n)$, since, for a given message, all free offsets must be tested and the potential is evaluated in time $O(1)$ (by maintaining the potential of each position). This proves that Swap and Move is in $O(n^3)$.  
\end{proof}

Consider a partial assignment of size $n-1 = (1/2 + \varepsilon)P$, and a last message of delay $d$ to schedule.
We have seen that if a scheduled message cannot be rescheduled, its potential is equal to $2\varepsilon P$, it is larger otherwise.
Hence, the best possible upper bound on the potential of the assignment is $2\varepsilon P n$. On the other hand, Lemma~\ref{lemma:swap} guarantees that the potential of an assignment is at least $n(n-1)/2$. Therefore, improving the analysis at best yields $\varepsilon = 1/6$ and load $2/3$.

To go further, the analysis in Lemma~\ref{lemma:swap} may be improved: $2\varepsilon P$ positions in $P_{u}$ are not taken into account in the proof. When the delays are distinct we can indeed use this remark to improve the result. However, there is an instance (found by a bruteforce search) with $8$ messages and $P=10$ for which there is no assignment, hence the largest $\lambda$ for which $\pma$ has always a solution is strictly less than $8/10$. 

\subsection{Randomized Algorithm for Random Instances}

We would like to better understand the behavior of greedy algorithms on instances drawn uniformly at random. To this aim, we analyze the algorithm \greedyuniform, defined as follows: for each message in the order of the input, choose one of the offsets, which does not create a collision with the current partial assignment, uniformly at random. 

We analyze \greedyuniform over random instances:  all messages have their delays drawn independently and uniformly in $[P]$. We compute the probability of success of \greedyuniform over all random choices by the algorithm \emph{and all possible instances}. 
It turns out that this probability, for a fixed load strictly less than one, goes to one when $P$ grows. To simplify the analysis of \greedyuniform, we introduce the notion of \textbf{trace} of an assignment. 

\begin{definition}
Let $A$ be a partial assignment of size $k$, its trace is a pair of subsets $(S_1,S_2)$ of $[P]$ of size $k$ such that $S_1$ are the time used by $A$ in the first period and $S_2$ the time used by $A$ in the second period.
\end{definition}

 We now prove that traces are produced uniformly by \greedyuniform.

\begin{theorem}\label{theorem:uniform}
The distribution of traces of assignments produced by \greedyuniform when it succeeds, from instances drawn uniformly at random, is also uniform.
\end{theorem}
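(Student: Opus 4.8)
The plan is to establish, by induction on the number $k$ of messages \greedyuniform has placed, the following strengthening of the theorem: conditioned on the event $E_k$ that the first $k$ messages were all scheduled successfully — the probability being over the uniformly random delays and the algorithm's coin tosses — the trace after step $k$, namely the pair $(U_1,U_2)$ of the sets of times occupied in the first and in the second period, is uniform over all pairs of $k$-element subsets of $[m]$. The theorem is the case $k=n$; the base case $k=0$ is trivial.

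For the inductive step, fix a target pair $(V_1,V_2)$ of $(k{+}1)$-subsets. Because $\Pr[E_{k+1}]$ makes no reference to the target, it suffices to show that $\Pr[\mathrm{trace}_{k+1}=(V_1,V_2)\text{ and }E_{k+1}]$ is the same for all targets. The delay $d=d_{k+1}$ of the new message is uniform in $[m]$ and independent of the first $k$ steps, since the algorithm has not yet inspected it. For the trace to become $(V_1,V_2)$ the algorithm must have entered step $k{+}1$ at a trace of the form $(V_1\setminus\{a\},V_2\setminus\{b\})$ with $a\in V_1$, $b\in V_2$, must have drawn $d\equiv b-a\pmod m$, and must have chosen offset $a$. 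Let $N(U_1,U_2,d)=|\{o\in[m]: o\notin U_1,\ (o+d)\bmod m\notin U_2\}|$ be the number of legal offsets at that moment; it is at least $1$ because $a$ is itself legal. Using the induction hypothesis, which says every size-$k$ trace has probability $\Pr[E_k]/\binom{m}{k}^2$ on the event $E_k$, the balance equation becomes
$$\Pr[\mathrm{trace}_{k+1}=(V_1,V_2),E_{k+1}]=\frac{\Pr[E_k]}{\binom{m}{k}^2}\cdot\frac1m\sum_{a\in V_1,\,b\in V_2}\frac1{N(V_1\setminus\{a\},V_2\setminus\{b\},b-a)},$$
so the whole inductive step reduces to the claim that $\sum_{a\in V_1,\,b\in V_2}N(V_1\setminus\{a\},V_2\setminus\{b\},b-a)^{-1}$ does not depend on $(V_1,V_2)$.

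I would then turn this sum into a purely combinatorial quantity. A short count gives $N(V_1\setminus\{a\},V_2\setminus\{b\},b-a)=m-2k-1+r(b-a)$, where $r(c)=|\{(u,v)\in V_1\times V_2: v-u\equiv c\pmod m\}|$ counts the pairs of difference $c$; grouping by $c=b-a$ (there are $r(c)$ of them) rewrites the sum as $\sum_{c\in[m]}r(c)\big/(m-2k-1+r(c))$, subject only to $\sum_c r(c)=(k{+}1)^2$. Two symmetries are available for free and I would use them to cut the problem down: the diagonal action of the affine group $x\mapsto ax+b$ of $\mathbb{Z}/m\mathbb{Z}$ (translating or invertibly scaling the time axis is a symmetry of the whole random process, hence preserves the probability of every trace), and the measure-preserving involution $(V_1,V_2)\mapsto(V_2,V_1)$ realised on runs by replacing each $d_i$ by $-d_i$ and each $o_i$ by $o_i+d_i$ — one checks that the legal-offset count at each step is merely translated by $d_i$, hence unchanged.

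The main obstacle is the remaining step: the affine orbits together with the period-swap do not cover all pairs of $(k{+}1)$-subsets, so one must still prove that $\sum_c r(c)/(m-2k-1+r(c))$ is insensitive to how the $(k{+}1)^2$ difference-pairs are distributed over $[m]$. My preferred attack is to construct, run by run, a measure-preserving bijection between the executions producing one target trace and those producing another, matching all per-step branching factors; failing an exact bijection, one evaluates the sum term by term and checks that the dependence on $(V_1,V_2)$ cancels. Since $x\mapsto x/(C+x)$ is strictly concave, the sum really does move with the spread of the $r(c)$ unless such a cancellation takes place, so this is the delicate point; if the exact identity turns out to fail, the serviceable fallback is the quantitative version — for fixed load the trace distribution is within a $1+O(n^2/m)$ factor of the uniform one — which is what is actually needed afterwards for the downstream estimate of $\Pr[E_n]$ by comparison with a uniformly random pair of $n$-subsets.
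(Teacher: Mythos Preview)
Your reduction is correct and clean: the inductive step really does come down to whether
\[
\Sigma(V_1,V_2)\;=\;\sum_{c\in[m]}\frac{r(c)}{m-2k-1+r(c)},\qquad r(c)=\bigl|\{(u,v)\in V_1\times V_2: v-u\equiv c\}\bigr|,
\]
is independent of the target pair $(V_1,V_2)$. The gap you flag is genuine, and it cannot be closed, because the identity is \emph{false}. Take $m=4$, $k=1$. For $V_1=V_2=\{0,1\}$ the difference multiset is $(r(0),r(1),r(2),r(3))=(2,1,0,1)$, so $\Sigma=\tfrac{2}{3}+\tfrac12+0+\tfrac12=\tfrac53$. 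For $V_1=\{0,1\}$, $V_2=\{0,2\}$ all four differences are distinct, $(1,1,1,1)$, and $\Sigma=4\cdot\tfrac12=2$. Plugging back into your balance equation gives $\Pr[\text{trace}=(\{0,1\},\{0,1\})]=5/192$ versus $\Pr[\text{trace}=(\{0,1\},\{0,2\})]=6/192$; since the algorithm never fails at step~$2$ when $m=4$, there is no conditioning to hide behind. So the theorem as stated does not hold, and the strict concavity you noticed in $x\mapsto x/(C+x)$ is exactly what breaks it.

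The paper's own argument slips at the sentence asserting that drawing the offset uniformly in all of $[m]$ and then discarding collisions yields the same trace distribution as \greedyuniform. For a \emph{fixed} starting trace and delay the two procedures do agree conditionally on success, but the success probability of the first procedure is $N/m$ and therefore reweights different $(\text{trace},d)$ pairs when one conditions globally; under \greedyuniform the weight of each legal extension $(a,b)$ from a given trace is $\tfrac1m\cdot\tfrac1{N(T,b-a)}$, which depends on $b-a$, whereas under the rejection procedure every legal extension has the same weight. The subsequent ``swap the order of $o$ and $d$'' step is a valid analysis of the rejection procedure, not of \greedyuniform. Your quantitative fallback --- that the trace law is within a $1+o(1)$ factor of uniform for fixed load below~$1$ --- is the statement one can actually prove, and it is what the downstream computation of the success probability really needs.
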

\begin{proof}
The proof is by induction on $n$, the number of messages. It is clear for $n=1$,
since the delay of the first message is uniformly drawn and all offsets can be used.
Assume now the theorem is true for some $n>1$. By induction hypothesis, \greedyuniform has produced
uniform traces from the first $n$ messages. Hence, we should prove that, if we draw the delay
of the $n+1^{th}$ message randomly, extending the trace by a random possible offset produces a random distribution on the traces of size $n+1$. 

 If we draw an offset uniformly at random (among all $P$ offsets) and then extend the trace by scheduling the last message at this offset or fail, the distribution over the traces of size $n+1$ is the same as what produces \greedyuniform. Indeed, all offsets which can be used to extend the trace have the same probability to be drawn. Since all delays are drawn independently, we can assume that, given a trace, we first draw an offset uniformly, then draw uniformly the delay of the added message and add it to the trace if it is possible. This proves that all extensions of a given trace are equiprobable. Thus, all traces of size $n+1$ are equiprobable, since they each can be formed from $(n+1)^2$ traces of size $n$ by removing one used time from the first and second period. This proves the induction and the theorem.
\end{proof}

Since \greedyuniform can be seen as a simple random process on traces by Th.~\ref{theorem:uniform}, it is easy to analyze its probability of success.

\begin{theorem}
The probability over all instances with $n$ messages and period $P$ that \greedyuniform solves $\pma$ positively is $\displaystyle{\prod_{i \geq P/2}^{n-1}\left(1 - \frac{\binom{i}{2i-P}}{\binom{P}{i}}\right)}$.
\end{theorem}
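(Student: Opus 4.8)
The plan is to leverage the preceding theorem: conditioned on \greedyuniform scheduling its first $i$ messages, the trace it has built --- the pair $(U_1,U_2)$, where $U_1$ (resp.\ $U_2$) is the set of times used in the first (resp.\ second) period --- is uniform over all pairs of $i$-element subsets of $[m]$. This applies to the partial run because the first $i$ messages of a uniformly random $n$-message instance are themselves a uniformly random $i$-message instance, and \greedyuniform treats them exactly as it would on their own. Writing $p_i$ for the probability that message $i+1$ can be scheduled given that the first $i$ were, and using that the delay $d$ of message $i+1$ is independent of everything that determines the size-$i$ trace, the chain rule gives $\Pr[\text{\greedyuniform succeeds}] = \prod_{i=0}^{n-1} p_i$, where $p_i$ is the average --- over a uniform trace $(U_1,U_2)$ of size $i$ and over $d$ uniform in $[m]$ --- of the probability that message $i+1$ is schedulable given the trace.

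First I would count, for a fixed trace $(U_1,U_2)$ with $|U_1|=|U_2|=i$ and a fixed delay $d$, the offsets $o$ that are legal for the new message, i.e.\ those with $o\notin U_1$ and $o+d \bmod m\notin U_2$. Writing $U_2-d$ for the translate $\{u-d \bmod m : u\in U_2\}$, the legal offsets are exactly $[m]\setminus\bigl(U_1\cup(U_2-d)\bigr)$, so their number is $m-2i+|U_1\cap(U_2-d)|$. Since $|U_1\cap(U_2-d)|\ge |U_1|+|U_2-d|-m=2i-m$ in all cases, this count is at least $\max(0,\,m-2i)$; in particular, for $i<m/2$ the new message is always schedulable, so $p_i=1$ and the product effectively starts at $i=\lceil m/2\rceil$. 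For $i\ge m/2$, the new message cannot be scheduled precisely when $|U_1\cap(U_2-d)|=2i-m$, equivalently when $U_1\cup(U_2-d)=[m]$, equivalently when $([m]\setminus U_1)+d\subseteq U_2$.

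Then a double count finishes the evaluation of $p_i$. The triples $(U_1,U_2,d)$ for which message $i+1$ cannot be scheduled are enumerated by choosing $U_1$ ($\binom{m}{i}$ ways) and $d$ ($m$ ways), which fixes the $(m-i)$-element set $S=([m]\setminus U_1)+d$, and then choosing $U_2\supseteq S$ with $|U_2|=i$ ($\binom{i}{2i-m}$ ways, since the remaining $2i-m$ elements of $U_2$ lie among the $i$ elements of $[m]\setminus S$). Dividing by the total number $m\binom{m}{i}^2$ of triples gives $1-p_i=\binom{i}{2i-m}/\binom{m}{i}$, and multiplying the $p_i$ over $i=\lceil m/2\rceil,\dots,n-1$ yields the claimed product.

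The step I expect to require the most care is the first one: making rigorous that the size-$i$ trace, conditioned on the first $i$ scheduling steps succeeding, is \emph{exactly} the uniform object supplied by the preceding theorem, and that the $(i{+}1)$-st delay is independent of this trace --- only then is $p_i$ the announced average and the factorization $\Pr[\text{success}]=\prod_i p_i$ legitimate. Everything after that (counting legal offsets, identifying the ``no legal offset'' event, and the binomial bookkeeping) should be routine.
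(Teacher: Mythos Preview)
Your proposal is correct and follows essentially the same route as the paper: both invoke the uniformity of traces from the preceding theorem, identify the failure event at step $i{+}1$ as the shifted complement $([m]\setminus U_1)+d$ being contained in $U_2$, and reduce to a binomial count; the only cosmetic difference is that the paper uses a translation symmetry to fix $d=0$ where you instead average over all $d$. Note that your computation gives $\binom{i}{2i-m}$ in the numerator rather than the $\binom{n}{2i-m}$ printed in the statement --- your version is the correct one (the paper's own argument, ``a set of size $n-1$ which contains $m-n+1$ fixed elements,'' also yields $\binom{n-1}{2(n-1)-m}=\binom{i}{2i-m}$), so the discrepancy is a typo in the paper, not an error on your part.
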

\begin{proof}
We evaluate $\Pr(P,i)$ the probability that \greedyuniform fails after succeeding to assign the first $i$ messages, that is when it is not possible
to assign the $(i+1)^{th}$ message. This probability is independent of the delay of the $(i+1)^{th}$ message. Indeed, the operation which adds one to all times used in the second period is a bijection on the set of traces of size $i$. It is equivalent to removing one to the delay of the $(i+1)^{th}$ message. We can thus assume that the delay is zero.

Let $S_1$ be the set of times used in the first period by the $i$ first messages and $S_2$ the set of times used in the second period. We can assume that $S_1$ is fixed, since all subsets of the first period are equiprobable and because $S_2$ is independent of $S_1$ by Th.~\ref{theorem:uniform}. There is no possible offset for the $(i+1)^{th}$ message, if and only if $S_1 \cup S_2 = [P]$. It means that $S_2$ has been drawn such that it contains $[P] \setminus S_1$. By Th.\ref{theorem:uniform}, $S_2$ is uniformly distributed over all sets of size $i$. Hence, the probability that  $[P]  \setminus S_1 \subseteq S_2$  is the probability to draw a set of size $i$ which contains $P- i$ fixed elements. This proves $\Pr(P,i) = \frac{\binom{i}{2i - P}}{\binom{P}{i}}$.

From the previous expression, we can derive the probability of success of \greedyuniform by a simple product of the probabilities of success $(1 - \Pr(P,i))$ at step $i$, for $i \leq n$. The product is over $ P/2 \leq i < n$, because for $i < P/2$, the probability of success of \greedyuniform is one.  
\end{proof}

Let us fix the load $\lambda = n/P$. 
If we express $\Pr(P,n)$ as a function of $P$ and $\lambda$, we obtain
$$\Pr(P,n) = \frac{(\lambda P)!^2}{P!(2(\lambda -1)P)!}.$$

Using Stirling approximation, there are two positive constants $C_1$ and $C_2$ such that $$C_1 n^{1/2}\left(\frac{n}{e}\right)^n< n! < C_2n^{1/2}\left(\frac{n}{e}\right)^n.$$
Using the previous approximation, a computation yields a constant $C$ independent from $\lambda$ such that $$\Pr(P,n) \leq C \left(\frac{\lambda^{2\lambda}}{(2\lambda -1)^{2\lambda -1}}\right)^P.$$
As an illustration, with $\lambda = 2/3$, $\Pr(P,n) < 1,16 \times (0,84)^P$.

We let $f(\lambda) = \frac{\lambda^{2\lambda}}{(2\lambda -1)^{2\lambda -1}}$.
The derivative of $f$ is strictly positive for $1/2 < \lambda < 1$ and $f(1) = 1$, hence $f(\lambda) < 1$ when $\lambda < 1$. By a union bound, the probability that \greedyuniform fails is bounded by the sum of probabilities that it fails at step $i$ for $P/2 \leq i < n$. We have $\Pr(P,i) < \Pr(P,j)$
if $i < j$, hence we bound the probability that \greedyuniform fails by $\frac{C \lambda P f(\lambda)^P}{2}$. For any fixed $\lambda$, the previous expression goes to zero, exponentially quickly, when $P$ goes to infinity. It explains why \greedyuniform is good in practice for large $P$, even when the load is large. 
For instance, with $n = 8$ and $P=12$ (thus $\lambda = 2/3$), the probability of success of \greedyuniform is larger than $0.92$ (see Sec.~\ref{sec:perf_small} for more values).

\section{Generalizations}\label{sec:gen}

In this section, we prove that several hypotheses made in the previous sections can be relaxed and that the problem \pma captures periodic message scheduling in more complex networks.

\subsection{Period Multiple of the Message Size}

We prove that we can assume that $P$ is a multiple of $\tau$ in Lemma~\ref{lemma:multiple}. This hypothesis was done to make the analysis of algorithms based on meta-offsets simpler and tighter. 

\begin{lemma}\label{lemma:multiple}
Let $I$ be an instance of \pma with $n$ messages of size $\tau$, period $P$ and $m = \lfloor P / \tau \rfloor$. There is an integer $\tau'$ and an instance $J$ with $n$ messages of size $\tau'$ and period $P'= m\tau'$ such that any assignment of $J$ can be transformed into an assignment of $I$ in polynomial time.
\end{lemma}
\begin{proof}
Fig.~\ref{fig:multipleperiod} illustrates the reductions we define in this proof on a small instance.
Let $P = m \tau + r$ with $r < \tau$. We define the instance $I'$ as follows: $P' = mP$, $d_{i}' = m d_i$ and $\tau' = m \tau + r$. With this choice, we have $P' = m(m \tau + r) = m \tau'$.
Consider an assignment $A'$ of the instance $I'$. We let $\tau'' = m\tau$, then $A'$ is also an assignment for $I'' = (P',\tau'',(d_{0}',\dots,d_{n-1}'))$. Indeed, the size of each message, thus the intervals of time used in the first and second period begin at the same position but are shorter, which cannot create collisions. 

We consider the assignment $A'$ seen as an assignment of $I''$ and denote it by $A_0'$.
We describe a compactification procedure that produces a sequence of assignments from $A_0'$, such that, in the last assignment, any message has a position multiple of $m$ in the first and second period. See Th.4 of~\cite{bartharxiv2018deterministic} for a similar method, used to design an exponential time algorithm to solve \pma. 

W.l.o.g., we assume that message $0$ is at offset zero in $A_0'$. The first time message $0$ uses in the second period is a multiple of $m$ since its delay is by construction a multiple of $m$. Consider the following shift of $A_0'$: $A_1'(0) = 0$ and for $i>0, A_1'(i) = A_0'(i) - s$. We let $s$ be a non negative integer such that $A_0'$ shifted by $s$ is a valid assignment, while $A_0'$ shifted by $s+1$ has a collision involving message $0$. By construction of $A_1'$, because of the choice of $s$, there is a message $j$ which is next to message $0$ in the first or second period. It implies that either $A_1'(j)$ or $A_1'(j)+d_j \mod P'$ is a multiple of $m$ and since $d_j$ is a multiple of $m$, then both $A_1'(j)$ and $A_1'(j)+d_j \mod P'$ are multiples of $m$. The procedure is repeated, we obtain $A_{i+1}$ from $A_i$ by fixing $i$ messages of $A_i$ with an offset multiple of $m$ and shifting the other messages as previously. 

By construction, $A_n'$ is a valid assignment of $I''$, and all positions of messages in the first and second period are multiples of $m$. Finally, we let $A$ be the assignment of $I$ defined as, for all $i \in [n]$, $A(i) = A_n'(i)/m$. 
\end{proof}

Notice that, the transformation of Lemma~\ref{lemma:multiple} does not give a bijection between assignments of both instances but only an injection, which is enough for our purpose. 
We obtain an instance with a period $P'= m\tau'$, which is slightly smaller than the original period $P$. The load increases from $\lambda = n \tau / P$ to at most $\lambda (1 + 1/m)$: the difference is less than $1/m < 1/n$, thus very small for most instances. It corresponds to at most one meta-offset in the computation of $\Fmo$ and $\Fmo_i$. When using these functions in Sec.~\ref{sec:large}, to prove that \metaoffset, \compactpair and \texttt{Compact k-tuples} always work for some load, the computations are not tight by an additive factor of two or three. It compensates for the cost of Lemma~\ref{lemma:multiple}, thus Theorem~\ref{th:metaoffset}, Theorem~\ref{th:pair} and Theorem~\ref{th:k-tuples} still hold when $P$ is not a multiple of $\tau$.

\begin{figure*}
 \begin{center}
\includegraphics[scale=0.75]{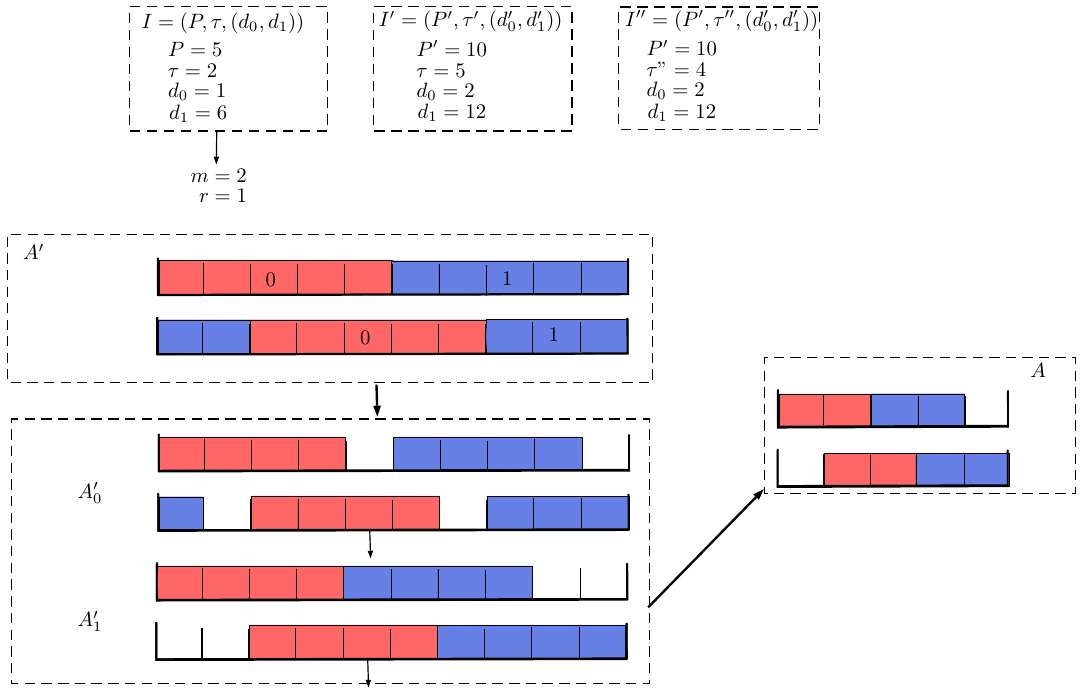}
\end{center}
\caption{Transformation of $A'$ assignment of $I'$ into $A$ assignment of $I$}
\label{fig:multipleperiod}
\end{figure*}

\subsection{From Large to Small Messages}\label{sec:reduction}

In this section, we give methods to reduce the size of messages up to $\tau = 1$, at the cost of increasing the load or allowing buffering in the network. This further justifies the interest of Sec.~\ref{sec:small}, where specific algorithms for $\tau = 1$ are given.

\paragraph*{Doubling the Load}

We describe here a reduction from an instance of \pma to another one with the same period and the same number of messages but the size of a message is doubled. This instance is equivalent to an instance with $\tau = 1$, by dividing everything by the message size. Thus, we can always assume that $\tau = 1$, if we are willing to double the load. In the following Theorem, for simplicity, we make the hypothesis that $P$ is a multiple of $2\tau$, but it can be removed using Lemma~\ref{lemma:multiple}.

\begin{theorem}\label{th:double_load}
Let $I$ be an instance of \pma with $n$ messages, load $\lambda$ and $P=m2\tau$ with $m$ an integer. There is an instance $J$ with $n$ messages of size $1$ and load $2\lambda$ such that an assignment of $J$ can be transformed into an assignment of $I$ in polynomial time.
\end{theorem}
\begin{proof}
From $I = (P,\tau,(d_{0},\dots,d_{n-1}))$, we build $I' = (P, 2\tau, (d_{0}',\dots,d_{n-1}'))$, where $d_i' = d_{i} - (d_{i} \mod 2\tau)$. The instance $I'$ has a load twice as large as $I$.
By construction, all delays of $I'$ are multiples of $2\tau$ and $P = m2\tau$.
Hence, solving \pma on $I'$ is equivalent to solving it on $J = (P/2\tau, 1,(d_{0}/ 2\tau,\dots,d_{n-1} /2\tau))$, as already explained in the proof of Lemma~\ref{lemma:multiple}. 

Let us prove that an assignment $A'$ of $I'$ can be transformed into an assignment $A$ of $I$. 
Consider the message $i$ with offset $A'(i)$, it uses all times between $A'(i)$ and $A'(i) + 2\tau -1$ in the first period and all times between $A'(i) + d_{i} - (d_{i} \mod 2\tau)$ to $A'(i) + 2\tau -1+ d_{i} - (d_{i} \mod 2\tau)$ in the second period. 
If $d_{i} \mod 2\tau < \tau $, we set $A(i) = A'(i)$, and the message $i$ of $I$ is scheduled ``inside'' the 
message $i$ of $I'$, see Fig.~\ref{fig:transf_2tau}. If $\tau \leq d_{i} \mod 2\tau < 2\tau$, then we set 
$A(i) = A'(i) - \tau$. There is no collision in the assignment $A$, since all messages in the second period use times which are used by the same message in $A'$. In the first period, the messages scheduled by $A$ use either the first half of the same message in $A'$ or the position $\tau$ before, which is either free in $A'$ or the second half of the times used by another message in $A'$ and thus not used in $A$. 
\end{proof}
\begin{figure*}[h]
\begin{center}

\includegraphics[scale=0.7]{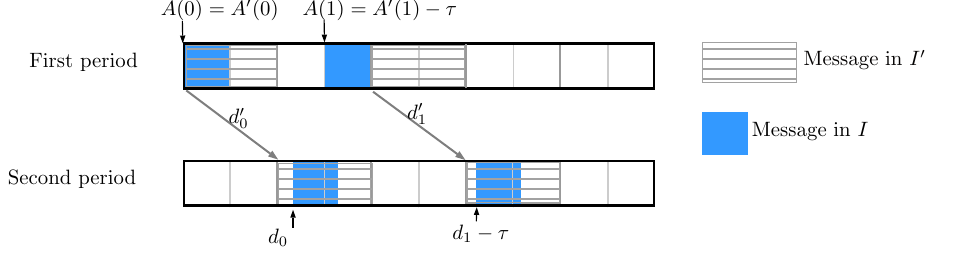}
\end{center}
\caption{Building assignment $A$ of $I$ from assignment $A'$ of $I'$ with messages twice larger}
\label{fig:transf_2tau}
\end{figure*}

By combining \greedyuniform and the transformation of Th.~\ref{th:double_load} we obtain a randomized algorithm to solve $\pma$. On random instances, it solves \pma positively with probability one when the number of messages goes to infinity and the load is strictly less than $1/2$. This is why we have not presented nor analyzed an algorithm designed for arbitrary $\tau$ on random instances, since any greedy algorithm, relying on minimizing $\Fo(A)$, cannot guarantee anything for load larger than $1/2$. However, in Sec.~\ref{sec:perf_large}, we present \compactfit, a simple greedy algorithm that exhibits good performance on random instances.

\paragraph*{Trade-off between Latency and Message Size}

The problem \pma is a simplified version of the practical problem we address, with only a single degree of freedom for each message: its offset. We may relax it slightly to be closer to what is studied in~\cite{barth2018deterministic}: we allow buffering a message $i$ during a time $b_i$ between the two contention points, which corresponds here to changing $d_i$ into $d_i + b_i$. The quality of a solution obtained for a modified instance of \pma is worse since the buffering adds \emph{latency} to the messages. This section aims to transform a given instance of \pma into a new instance with smaller messages, while minimizing the latency.

The transformation is the following: let $b_i$ be such that 
 $b_i + d_i = 0 \mod \tau$, hence the delays are multiples of $\tau$. Assuming $P = m\tau$, we have an easy reduction to the case of $\tau = 1$, by dividing all values by $\tau$, as explained in the proof of Lemma.~\ref{lemma:multiple}. 

If the $b_i$ are taken as small as possible, the largest $b_i$ may be equal to $\tau -1$, which is not so good in practice, since algorithms optimizing the latency do better on random instances, see~\cite{barth2018deterministic}. However, it is much better than buffering for a time $P$, the only value for which we are guaranteed to find an assignment, whatever the instance.

We can do the same transformation by buffering all messages so that $d_i$ is a multiple of $\tau / k$. The cost in terms of latency is then at most $\tau / k - 1$ but the reduction yields messages of size $k$. For small size of messages, it is easy to get better algorithm for \pma, in particular for $\tau = 1$ as we have shown in Sec.~\ref{sec:small}. Here, we show how to adapt \compactpair to the case of $\tau = 2$, to get an algorithm working with a higher load.

\begin{theorem}
\compactpair on instances with $\tau =2$ always solves \pma positively on instances of load at most $4/9$.
\end{theorem}
\begin{proof}
We assume w.l.o.g that there are less message with even $d_i$ than odd $d_i$.
We schedule compact pairs of messages with even $d_i$, then we schedule single messages with odd $d_i$. The worst case is when there is the same number of the two types of messages. In the first phase, if we schedule $n/2$ messages, the number of forbidden offsets is $(2 + 3/2)n/2 = 7n/4$. In the second phase, if we schedule $n/2$ additional messages, the number of forbidden offsets is bounded by $ (1 + 3/2) n/2  + (1 + 1)n/2 = 9n/4$. Hence, both conditions are satisfied and we can always schedule messages when $n \leq (4/9)m$.
\end{proof}


Alternatively, we may minimize the average latency rather than the worst latency. We show how to do the previous transformation yielding $\tau=1$ while optimizing the average latency.
In the transformation, we have chosen $b_i$ so that $b_i + d_i = 0 \mod \tau$.
However, we can also fix $t$ such that $b_i + d_i = t \mod \tau$.
By subtracting $t$ to all delays, we obtain an equivalent instance where 
all delays are multiple of $\tau$ and we can conlude as before.

 For each $i$ and $t$, we let $b_{i,t}$ be the minimal integer satisfying $d_i + b_{i,t} = t \mod \tau$. We let $L(t)$ be the sum of buffering times used for the messages when $t$ is chosen as remainder, that is $L(t) = \sum_{i=0}^{n-1} b_{i,t}$. To minimize the average latency, we must minimize $L(t)$.

 For $t \in \tau$, $b_{i,t}$ takes all possible values in $[\tau]$,
 thus, for all $i$, $\sum_{t=0}^{\tau-1} b_{i,t} = \sum_{j=0}^{\tau-1} j$, that is $\tau (\tau-1)/2$. Since there are $n$ messages, $\sum_{t=0}^{\tau -1} L(t) = n \tau (\tau-1)/2$. There is at least one term of the sum less than its average, hence there is a $t_0$ such that $L(t_0) \leq n (\tau-1)/2$. Hence, if we choose $t= t_0$ in the transformation, the average buffering of a message is less than $(\tau -1)/2$.

\subsection{Coherent Routing}\label{sec:coherent}

In this section, we explain how algorithms solving \pma may be used on more complex networks with more than two contention points.
We consider networks with \textbf{coherent routing}, a common property of telecommunication networks (see e.g.~\cite{Schwiebert1996ANA}).
Each message follows a directed path from an antenna to the data center.
The coherent routing property implies that two routes share either nothing or a single path (i.e. a sequence of contiguous links) in the network. See Fig.~\ref{fig:coherent} for a network with and without coherent routing.

\begin{figure}
\begin{center}

\scalebox{0.5}{
\includegraphics[]{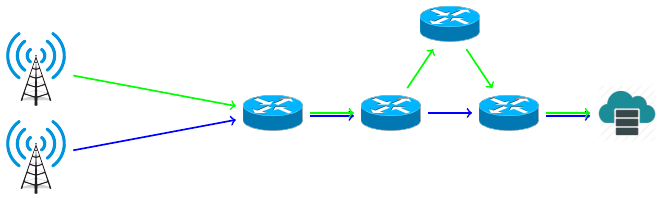}
}

\scalebox{0.5}{
\includegraphics[]{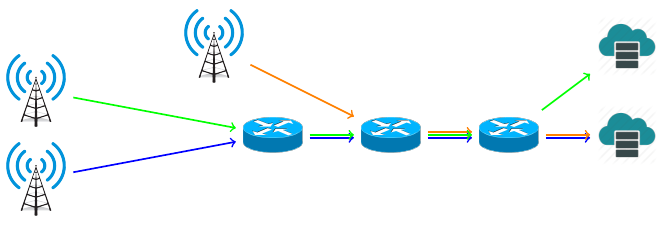}
}

 \caption{Top, a C-RAN network with a routing not coherent, and bottom a C-RAN network with coherent routing. The edges with multiple colors represent a single physical link used by several messages.}

\label{fig:coherent}
\end{center}
\end{figure}

The problem $\pma$ can be generalized to any network: the problem is to find an assignment (an offset for each message) such that there is no collision in the network. The problem $\pma$ for general networks (under the name \textsc{PAZL}) is proven to be $\NP$-hard in~\cite{bartharxiv2018deterministic} and the instances used for the reduction have a coherent routing. 

When the routing is coherent, for each pair of messages, there is either no common contention point or a single contention point which is the beginning of their common path. Indeed, if there is no collision at this contention point, there is no collision between these two messages further in the network.
In our model, the routing is coherent from the antennas to the data centers, and then it is coherent from the data centers to the antenna. Hence, the collision between two messages can be characterized by their two contention points (on the way forward and on the way back) and the delay between these two points. 

We now prove that we can transform a network with coherent routing into a network with a single contention point such that an assignment of the latter is also an assignment of the former. Therefore, the algorithms proposed in this article for the single shared link case can be transferred to the coherent routing case. 

We now describe a transformation removing a contention point. Let us consider two contention points $c_1$ and $c_2$ in the network $N$, such that there is no contention point before $c_1$ nor between $c_1$ and $c_2$ and there is an arc from $c_1$ to $c_2$ in the path of one message. Since $N$ is with coherent routing, there is a single arc between $c_1$ and $c_2$, and we denote its length by $l$. 
Let $S$ be the set of messages going through going through $c_2$ but not $c_1$. We crete a new network $N'$ by modifying the paths followed by each message $m \in S$. If $m$ follows the arc $(c,c_2)$ of length $l_m$, then we replace it by an arc $(c,c_1)$ of length $l_m - l \mod P$ followed by the arc $(c_1,c_2)$ of length $l$.

Consider a valid assignment $A$ for $N'$. By construction, the constraints to satisfy because of collisions are the same as in $N$, except on vertex $c_1$, where they are strictly stronger. Hence, the assignment $A$ is also valid for $N$. Moreover, if the network is with coherent routing, then the modified network is also with coherent routing.    

Since $N'$ is with coherent routing, there is a single arc between $c_1$ and $c_2$ of length $l$. By construction, no path arrives in $c_2$. For each message $m \in S$, with in its path the arcs $(c_1,c_2)$ and $(c_2,c_3)$ of length $l_m$, we replace the two arcs by $(c_1,c_3)$ of length $ l + l_m \mod \tau$. 
 The contention point $c_2$ is then removed from the network, as in Fig.~\ref{fig:transformation} and we obtain a network with coherent routing and the same satisfying assignments as the original one.

If the network is not with coherent routing, then the transformation would fail at the previous step. Indeed, we could have $c_1$ and $c_2$ with several
arcs of different lengths inbetween, whith no way to replace paths of size two
through $c_1$ by an arc and obtain an equivalent network. 

\begin{figure}
\begin{center}

\scalebox{0.6}{
\includegraphics[]{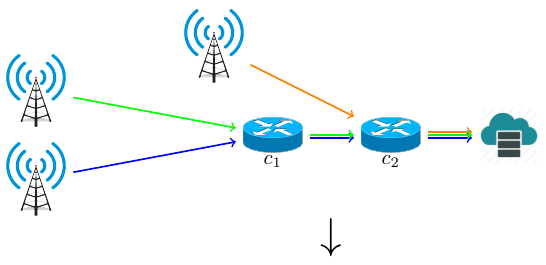}
}

\scalebox{0.6}{
\hspace{2cm}
\includegraphics[]{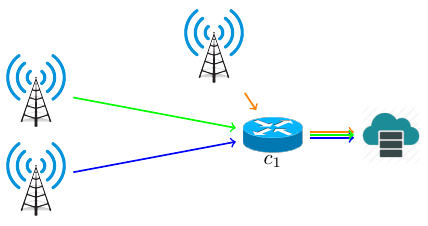}
}

 \caption{Transformation of a network $N$ with coherent routing into $N'$ with one less contention point.}

\label{fig:transformation}
\end{center}
\end{figure}

 We can apply the algorithms presented in this article to find an assignment for $N'$, which can then be turned back into an assignment of $N$. Let us define the load of a general network as $n\tau /P$, then the load of $N'$ is the same as the load of $N$. Hence, we have proved that \pma for general networks with coherent routing can be solved positively when the load is less than $2/5$ or less than $(\sqrt{5}-1)/2$ and $\tau = 1$. 

However, there is a more relevant way to define the load for general networks. Let $n_c$ be the number of messages going through the contention point $c$, then the load at $c$ is $n_c\tau/P$. The \textbf{local load} of the network is then defined as the maximum of the load of the contention points. It is always less than the load defined as $n\tau /P$ and it can be significantly so. The algorithms of this article do not seem to work for a bounded local load, since the bounds on $\Fo$ or $\Fmo$ do not hold anymore. 

\section{Experimental Results}\label{sec:perf}

In this section, we evaluate all presented algorithms on random instances. We also introduce several other algorithms, for which we have no formal bounds, to understand them empirically. 
For most algorithms, it is extremely difficult to theoretically analyze the fraction of positive instances for a given load as we did for \greedyuniform in Sec.~\ref{sec:small}. By doing these experiments, we get an idea of the difference between the bounds on the load we have proved for worst-case instances and random instances. Moreover, it gives us insights on how well our algorithm perform when they are not in a load regime where they are guaranteed to find a solution.

Our instances are randomly generated, and not taken from a real dataset. The CRAN application we propose
is still at the prototype phase~\cite{guiraud2022experimental} and has not been used in the field, thus no data exists on the typical delays that we should consider. However, from actual telecommunication protocols and technologies, we know that the value of $P$ is at most $100.000$ and the number of messages is from a few tens to a few hundreds at most~\cite{bartharxiv2018deterministic}. All experiments are done within this range.

\subsection{Experimental Results for Large Messages} \label{sec:perf_large}

In this section, the performance on random instances of the algorithms presented in Sec.~\ref{sec:large} is experimentally characterized. The implementation in C of these algorithms can be found on github\footnote{\url{https://github.com/Mael-Guiraud/GuiraudStrozecki2023Scheduling}}. We experiment with several periods and message sizes. For each set of parameters, we compute the success rate of each algorithm for all possible loads by changing the number of messages. The success rate is measured on $10,000$ instances of \pma generated by drawing uniformly and independently the delays of each message in $[P]$.

We consider the following algorithms:
\begin{itemize}
  \item \firstfit
  \item \metaoffset
  \item \compactpair
  \item \compactfit
  \item \greedyuniform, the algorithm introduced and analyzed in Sec.~\ref{sec:small}, used for arbitrary $\tau$
  \item \texttt{Exact Resolution} which always finds a valid assignment if there is one, using an algorithm from~\cite{bartharxiv2018deterministic}  
\end{itemize}

\begin{figure}[h]
 \begin{center}
\includegraphics[scale=0.8]{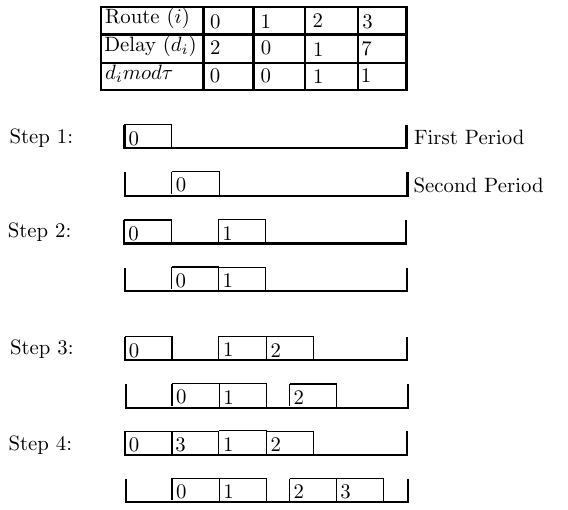}
\end{center}
\caption{A run of \compactfit with $\tau = 2$ and $P=10$, which creates two compact pairs}
\label{fig:compactfit}
\end{figure}

The only algorithm we have yet to describe is \compactfit. The idea is, as for \compactpair, to combine the absence of collision on the first period of \metaoffset and the compactness of assignments given by \firstfit.
The messages are sorted in increasing order of their delay modulo $\tau$, and each message is scheduled so that it extends an already scheduled compact tuple. 
In other words, it is scheduled using meta-offsets such that using one less as a meta-offset creates a collision on \emph{the second period}. If it is not possible to schedule the message in that way, the first possible meta-offset is chosen. See Fig.~\ref{fig:compactfit} for an example run of \compactfit. This algorithm is designed to work well on random instances. Indeed, it is not hard to evaluate the average size of the created compact tuples, and from that, to prove that \compactfit works with high probability when the load is strictly less than $1/2$.

On a regular $2017$ laptop, all algorithms terminate in less than a second when solving $10,000$ instances with $100$ messages except \texttt{Exact Resolution}, whose complexity is exponential in the number of messages (but polynomial in the other parameters). Hence, the exact value of the success rate given by \texttt{Exact Resolution} is only available in the experiment with at most $10$ messages (the algorithm cannot compute a solution in less than an hour for twenty messages and a high load).

\begin{center}
\includegraphics[scale=0.275]{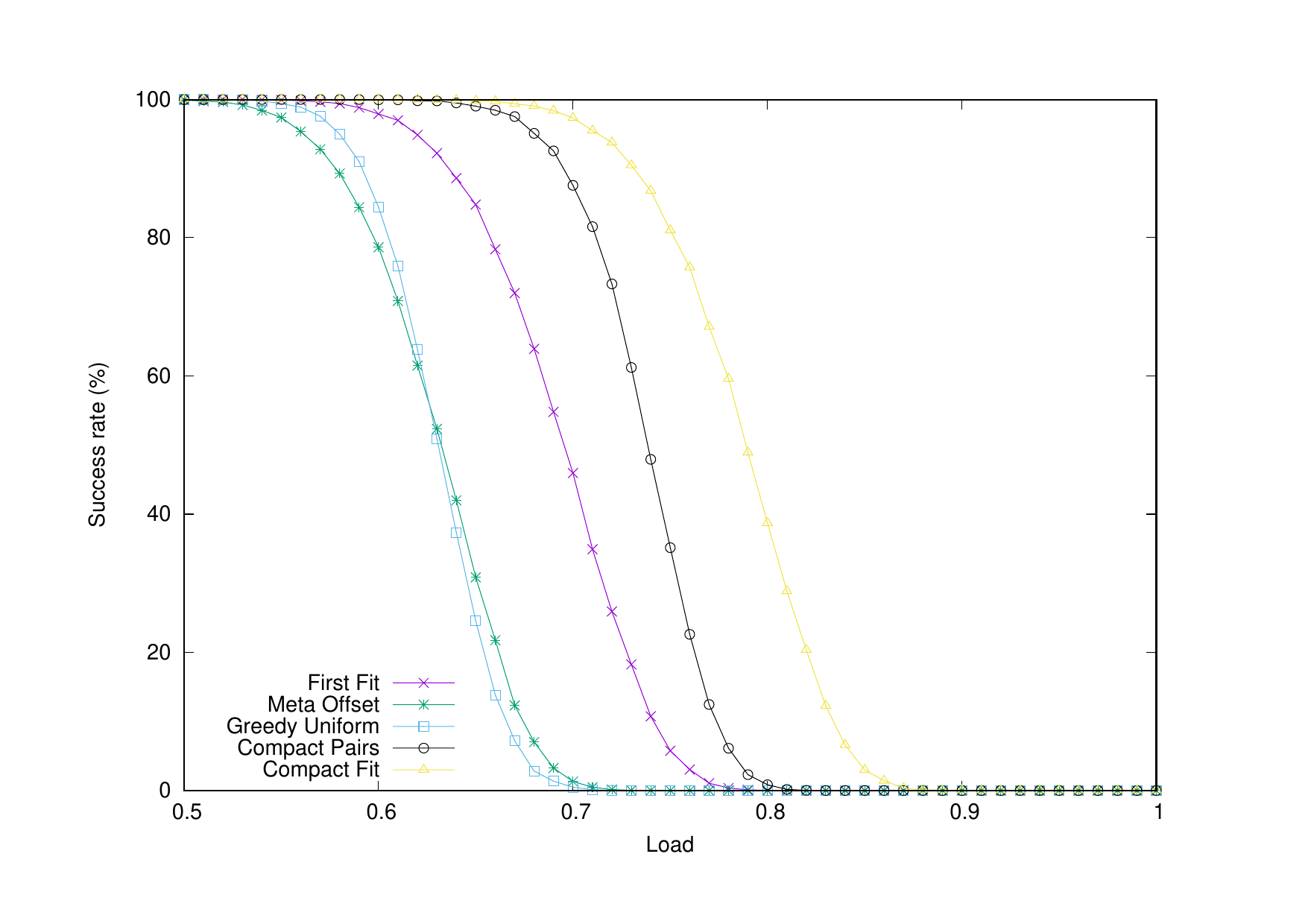}

\captionof{figure}{Success rates of all algorithms for increasing loads, $\tau = 1000$, $P=100,000$}
\label{fig:100messBig}
\end{center} 

\begin{center}  
\includegraphics[scale=0.275]{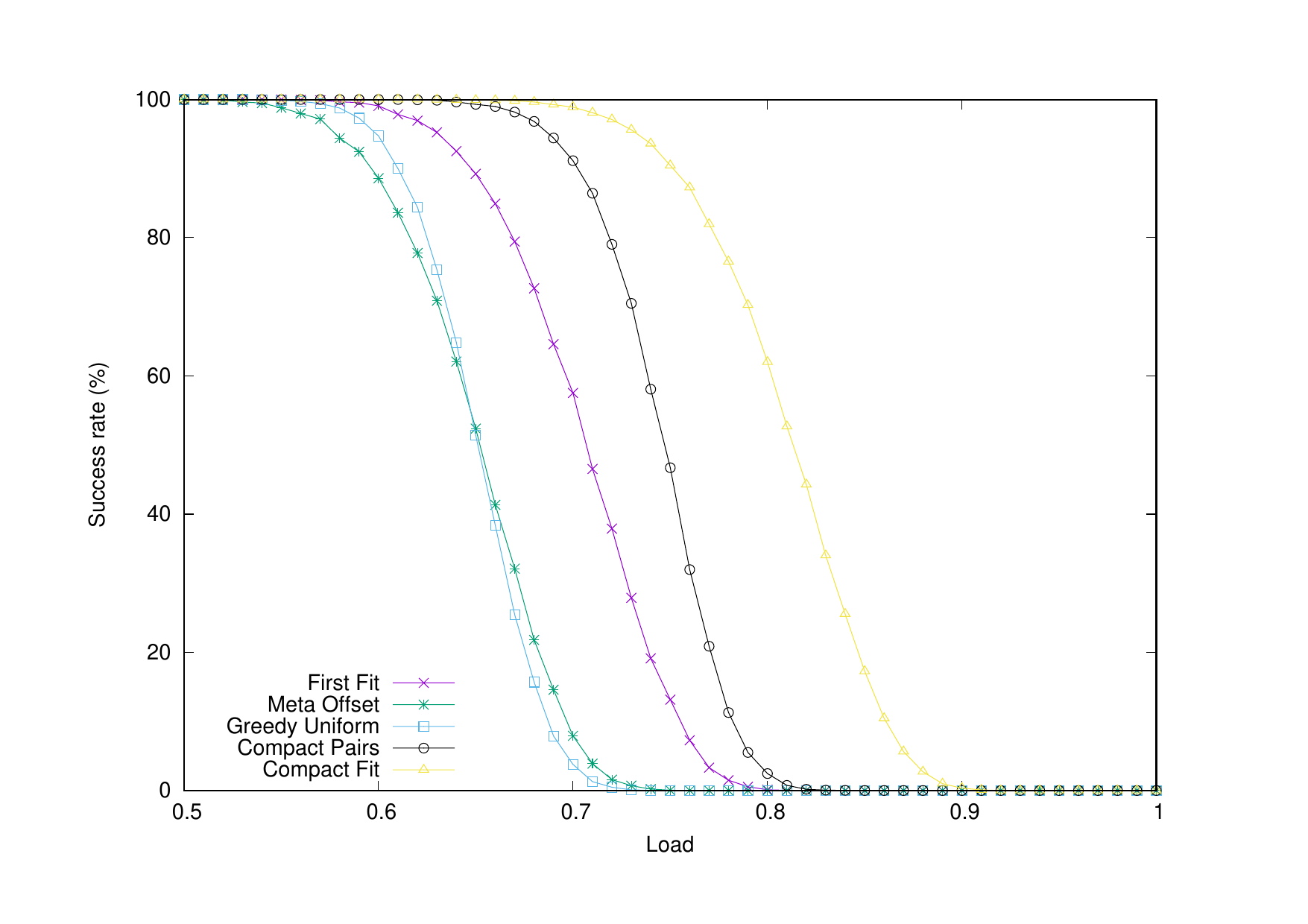}
\captionof{figure}{Success rates of all algorithms for increasing loads, $\tau = 10$, $P=1000$}
\label{fig:100messSmall}
\end{center}

\begin{center}
\includegraphics[scale=0.275]{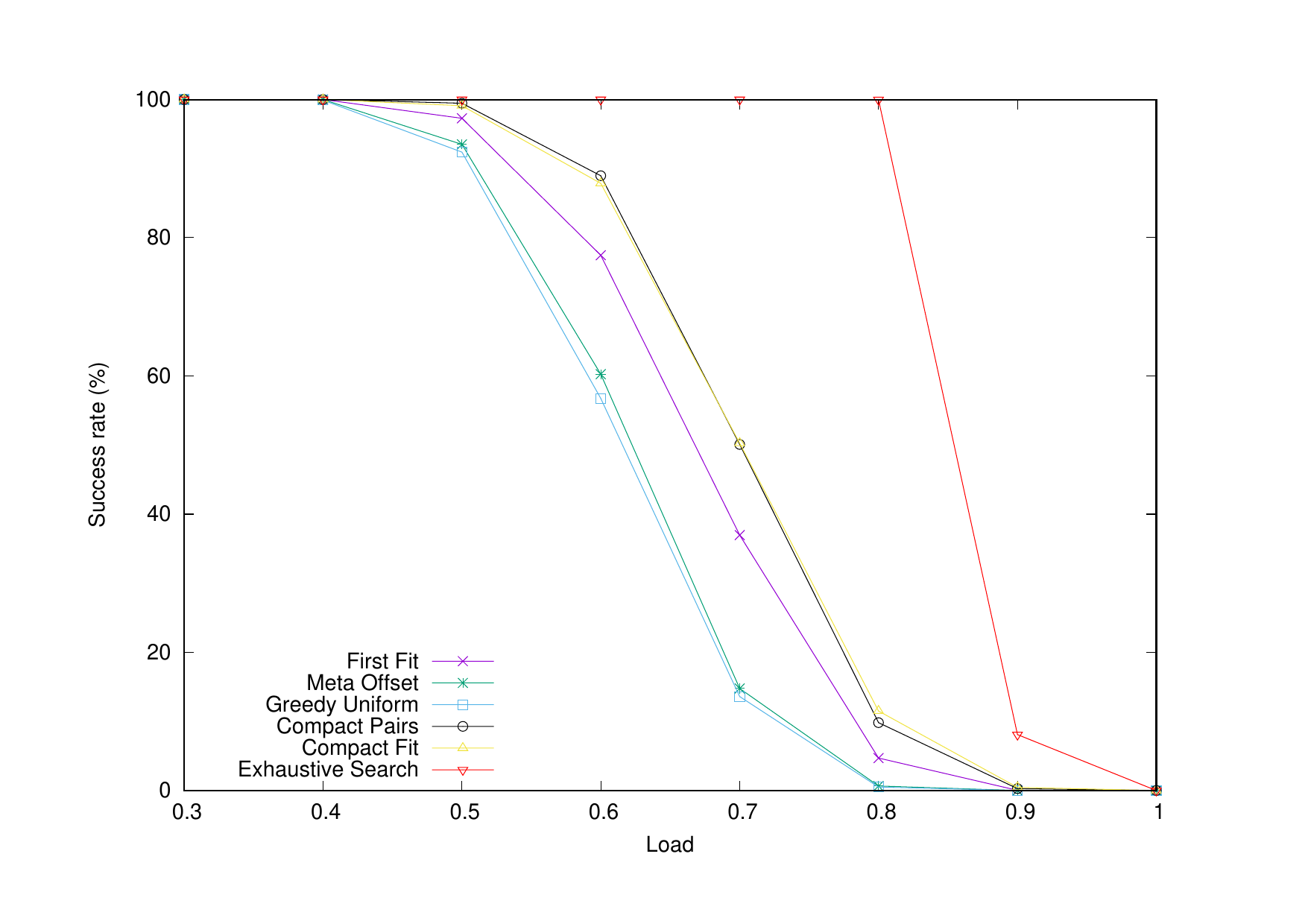}
\end{center}
\captionof{figure}{Success rates of all algorithms for increasing loads, $\tau = 1000$, $P=10,000$}
\label{fig:10mess}

\begin{center}
\includegraphics[scale=0.275]{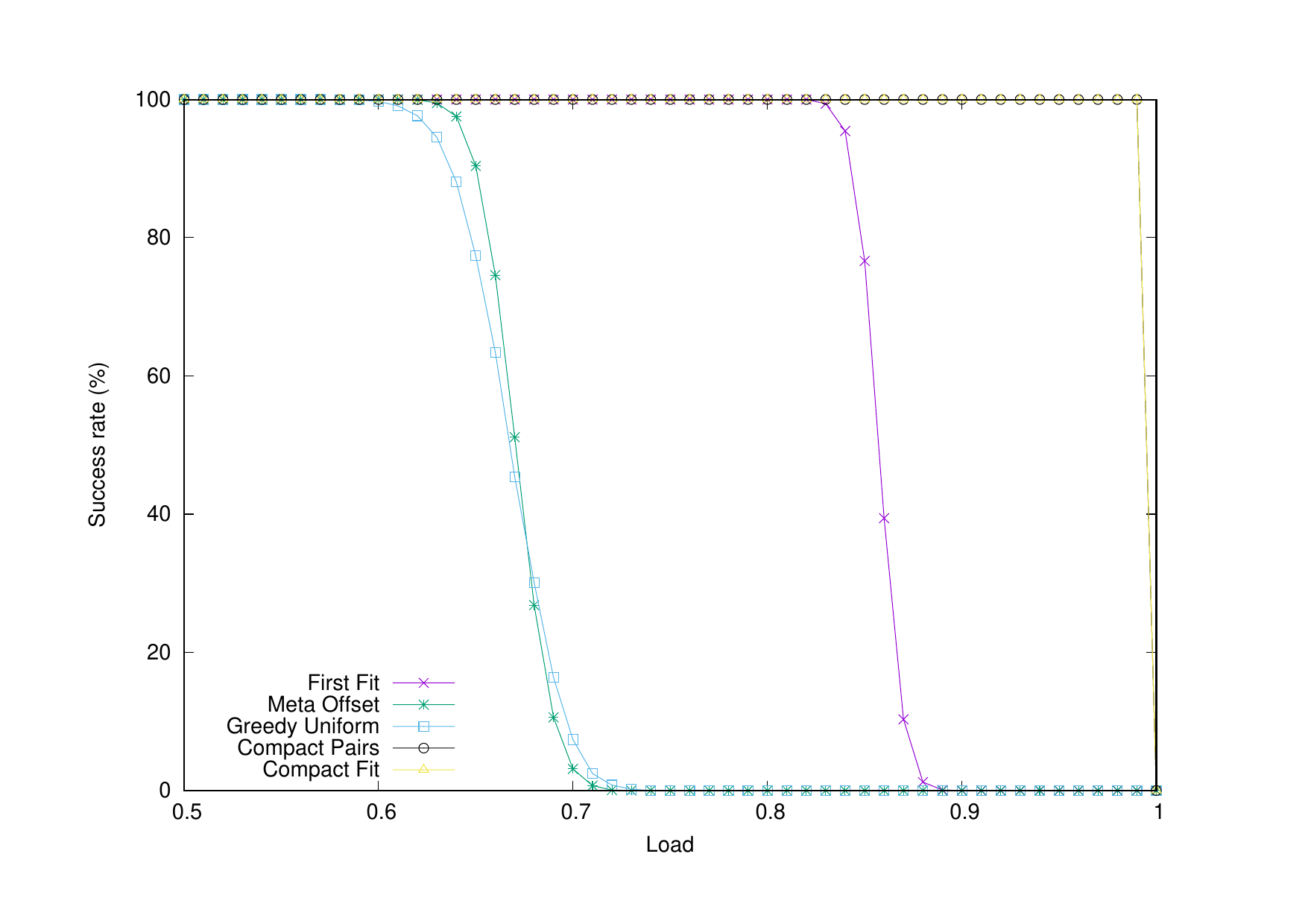}
\end{center}
\captionof{figure}{Same parameters as in Fig.~\ref{fig:100messBig}, delays uniformly drawn in $[\tau]$}
\label{fig:shortroutes}

\medskip

For all sets of parameters, the algorithms have the same relative performances. \metaoffset and \greedyuniform perform the worst and have almost equal success rates. Remark that they have a $100\%$ success rate for load less than $1/2$, while it is easy to build an instance of \pma of load $1/3 +\varepsilon$ which makes them fail. 

\firstfit performs better than \metaoffset on random instances, while we have proved that
they always find a valid assignment for load at most $1/3$ but not above.
 \compactpair, for which we have proved a better bound on the load also performs well in the experiments, always finding assignments for a load of $0.6$. \compactfit is similar in spirit to \compactpair but is designed to have a good success rate on random instances is indeed better than \compactpair, when there are enough messages.

As demonstrated by Fig.~\ref{fig:100messBig} and Fig.~\ref{fig:100messSmall}, the size of the messages has little impact on the success rate of the algorithms, when the number of messages and the load are kept the same. Comparing Fig.~\ref{fig:10mess} and Fig.~\ref{fig:100messBig} shows that for more messages, the transition between $100\%$ success rate to $0\%$ success rate happens faster.
Finally, the results of \texttt{Exact Resolution} in Fig.~\ref{fig:10mess} show that the greedy algorithms are far from always finding a solution when it exists. Moreover, we have found an instance with load $0.8$ with no assignment found by \texttt{Exact Resolution}, which gives an \textbf{upper bound on the load} for which \pma can always be solved positively.

We also investigate the behavior of the algorithms when the delay of messages is drawn in $[\tau]$ in Fig.~\ref{fig:shortroutes}. The difference from the case of large delay is that \compactpair and \compactfit are extremely efficient: they always find a solution for $99$ messages. It is expected since all $d'_i$ are equal in these settings, and they will both build a $99$-compact tuple and thus can only fail for load $1$.

\subsection{Experimental Results for Small Messages} \label{sec:perf_small}

In this section, the performance on random instances of the algorithms presented in Sec.~\ref{sec:small} is experimentally characterized. The settings are as in Sec.~\ref{sec:perf_large}, with $\tau = 1$. The evaluated algorithms are:

\begin{itemize}
  \item \firstfit
  \item \greedyuniform 
  \item \greedypotential, a greedy algorithm which leverages the notion of potential introduced for Swap. 
  It schedules messages in arbitrary order, choosing the available offset which maximizes the potential of the unscheduled messages
  \item \swapandmove 
  \item \texttt{Exact Resolution}
\end{itemize}

As in Sec.~\ref{sec:perf_large}, the success rate on random instances is much better than the bound given by the worst-case analysis of the article. In the experiment presented in Fig.~\ref{fig:tau1}, all algorithms succeed on all instances when the load is less than $0.64$. \greedyuniform behaves exactly as proved in Th.~\ref{theorem:uniform}, with a very small variance. The performance of \swapandmove and its simpler variant \greedypotential, which optimizes the potential in a greedy way, is much better than \firstfit or \greedyuniform. Amazingly, \swapandmove always finds an assignment when the load is less than $0.95$. \swapandmove is extremely close to Exact Resolution, but for $P=10$ and load $0.9$ or $1$, it fails to find some valid assignments, as shown in Fig.~\ref{fig:tau1-10mess}.

\begin{center} 
\includegraphics[scale=0.275]{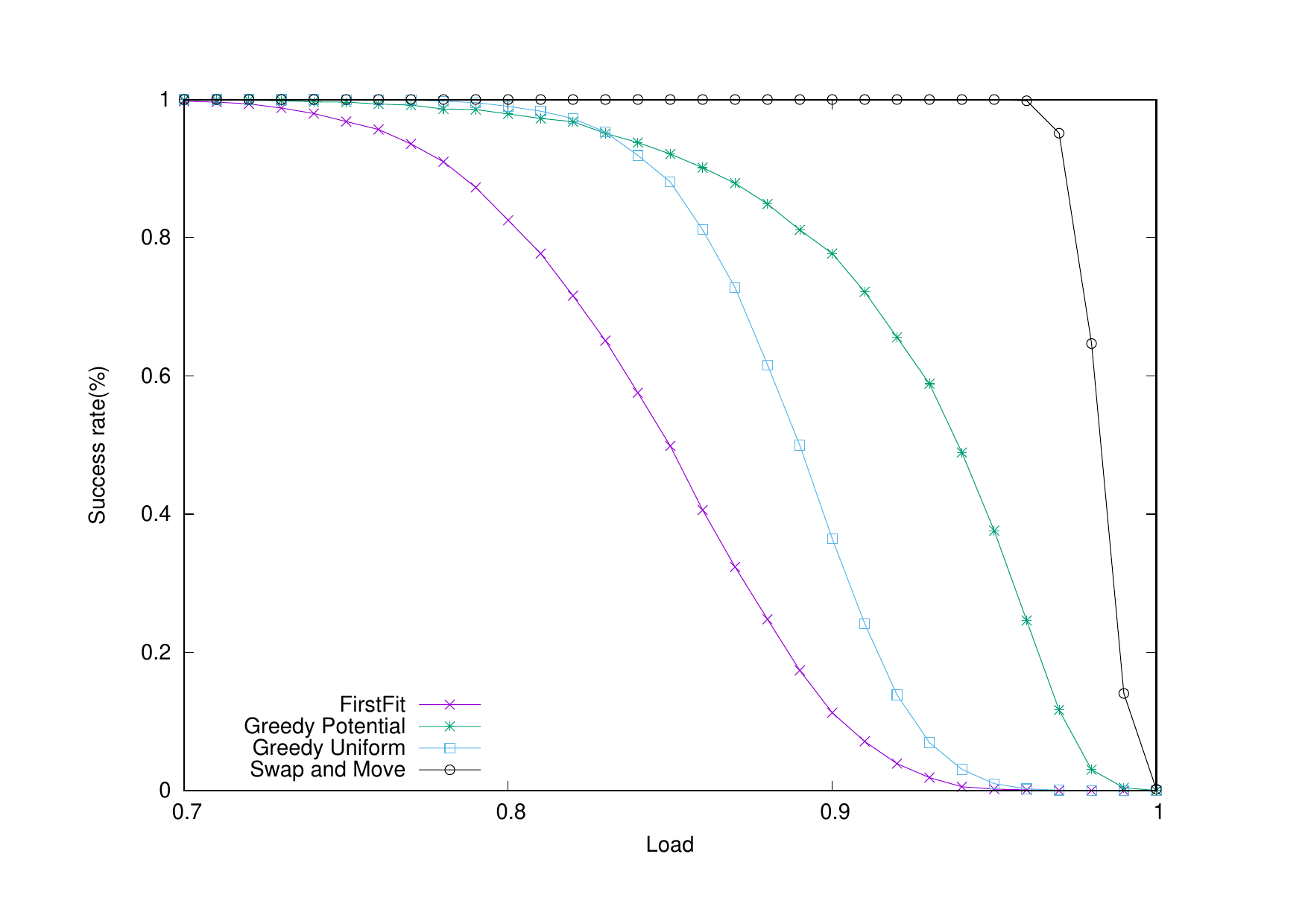} 
\end{center}
\captionof{figure}{Success rates of all algorithms for increasing loads, $\tau = 1$ and $P=100$}
\label{fig:tau1}

\begin{center}
\includegraphics[scale=0.275]{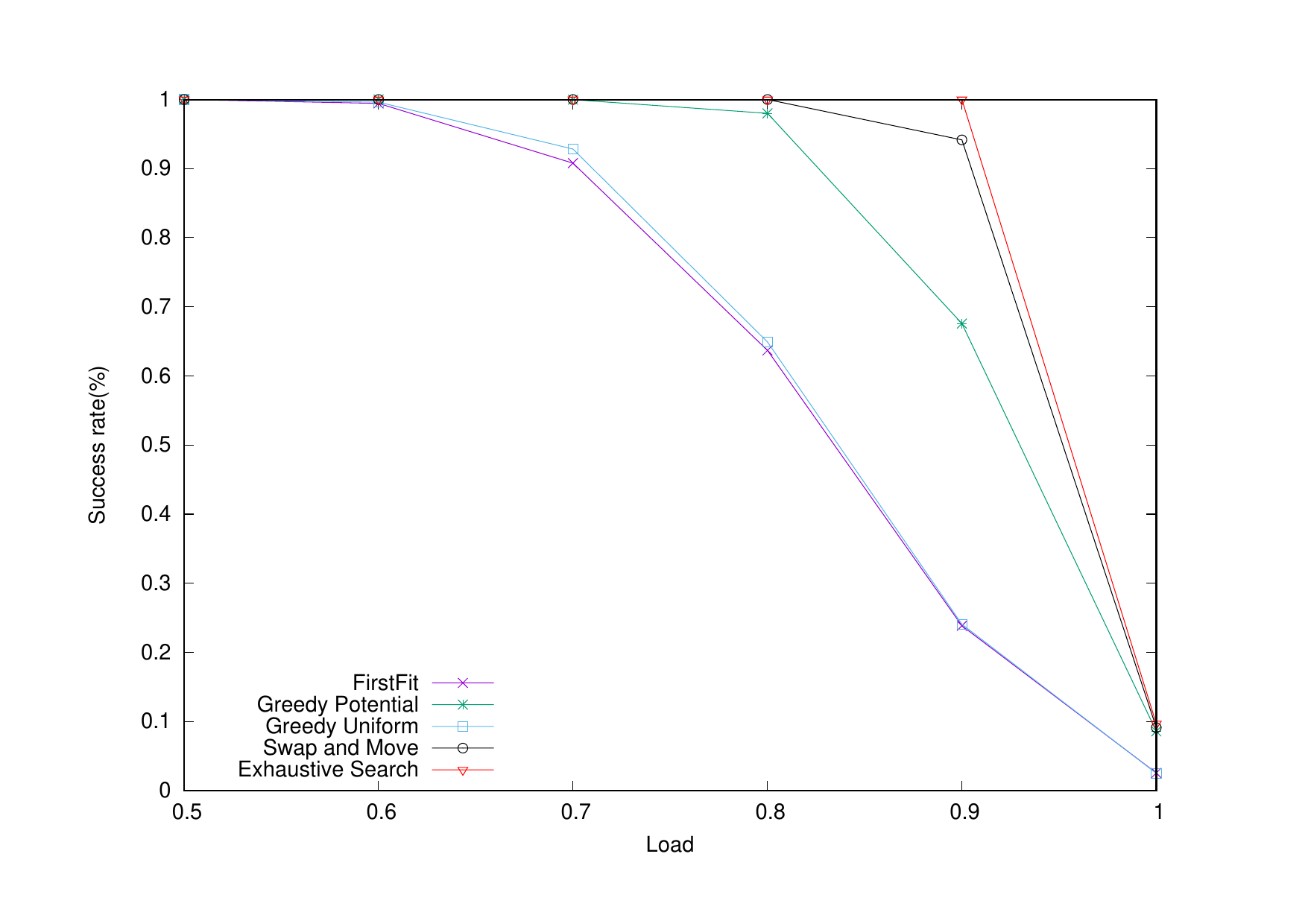}
\end{center}
\captionof{figure}{Success rates of all algorithms for increasing loads, $\tau = 1$ and $P=10$}
\label{fig:tau1-10mess}

 \medskip
 Finally, we evaluate the computation times of the algorithms to understand whether they scale to large instances. We present the computation times in Fig.~\ref{fig:timelog} and we choose to consider instances of load $1$, since they require the most computation time for a given size. The empirical complexity of an algorithm is evaluated by
 linear regression on the function that associates to $\log(n)$, the log of the computation time of the algorithm on $n$ messages.  \firstfit, \greedyuniform, and \swapandmove scale almost in the same way, with an empirical complexity slightly below $O(n^2)$, while \greedypotential has an empirical complexity of $O(n^3)$. The empirical complexity corresponds to the worst-case complexity we have proved, except for \swapandmove which is in $O(n^3)$. There are two explanations for this difference between average case complexity and worst case complexity: most of the messages are scheduled by the fast \firstfit subroutine and most Swap operations improve the potential by more than $1$, as we assume in the worst-case analysis.

\begin{figure}
 \begin{center}
\includegraphics[scale=0.275]{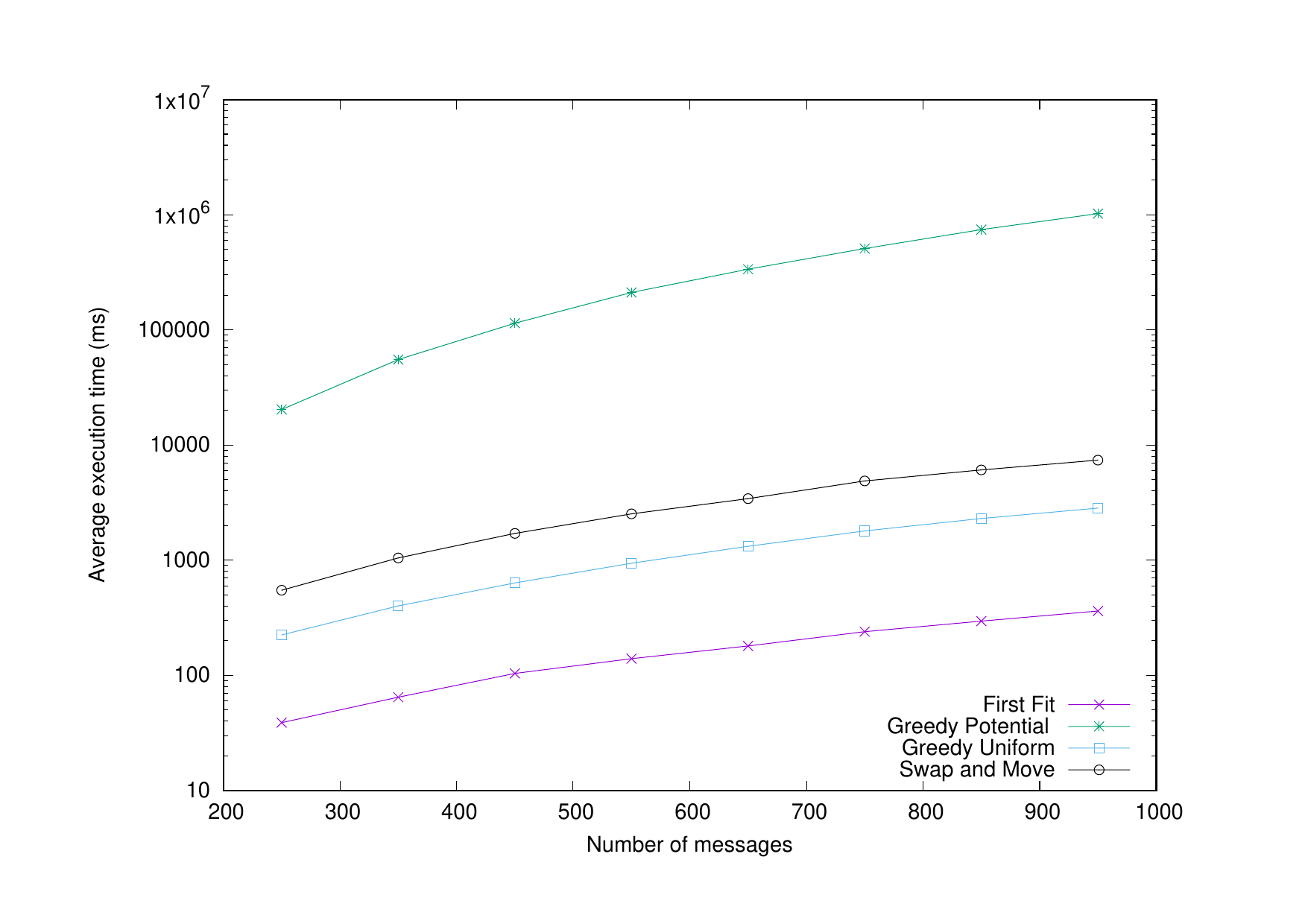}
\end{center}
\caption{Computation time (logarithmic scale) function of the number of messages of all algorithms on $10,000$ instances of load $1$}
\label{fig:timelog}
\end{figure}

\section{Conclusion}

In this article, we have proved that there is always a solution to \pma and that it can be found by a polynomial time greedy algorithm for arbitrary message size and load at most $2/5$. For messages of size $1$ and load at most $\phi - 1$, a solution is found by a polynomial time local search algorithm. Moreover, the presented algorithms find valid assignments for random instances for much higher loads as we have shown empirically but also theoretically for a randomized greedy algorithm. As a consequence, we obtain communication schemes for C-RAN with \emph{no buffering nor logical latency}, even for quite loaded fronthaul networks.

The first limitation of our model is the topology of the network. We have shown that an arbitrary topology can be transformed into a topology with two contention points. This transformation, while preserving the load, may map positive instances into negative instances. Hence, it would be interesting to design algorithms working directly on complex topologies, as we have done for the periodic assignment problem with buffering in~\cite{guiraud2021deterministic}. 

The second limitation is that all messages are of the same size. 
To model networks with different kinds of traffic or a production line with different tasks, we should relax this hypothesis. An interesting direction of research would be to adapt algorithm \swapandmove to this setting. As a first step, we could already adapt \swapandmove to a single arbitrary $\tau$ and try to improve on the bound on latency obtained using \texttt{Compact 8-tuples}.

Finally, it remains to prove that \pma is \NP-complete. If it is indeed the case, it would be interesting to understand for which load \pma can always be positively solved by exact algorithms. 

\bibliography{Sources}


\begin{thebibliography}{45}
\ifx \bisbn   \undefined \def \bisbn  #1{ISBN #1}\fi
\ifx \binits  \undefined \def \binits#1{#1}\fi
\ifx \bauthor  \undefined \def \bauthor#1{#1}\fi
\ifx \batitle  \undefined \def \batitle#1{#1}\fi
\ifx \bjtitle  \undefined \def \bjtitle#1{#1}\fi
\ifx \bvolume  \undefined \def \bvolume#1{\textbf{#1}}\fi
\ifx \byear  \undefined \def \byear#1{#1}\fi
\ifx \bissue  \undefined \def \bissue#1{#1}\fi
\ifx \bfpage  \undefined \def \bfpage#1{#1}\fi
\ifx \blpage  \undefined \def \blpage #1{#1}\fi
\ifx \burl  \undefined \def \burl#1{\textsf{#1}}\fi
\ifx \doiurl  \undefined \def \doiurl#1{\url{https://doi.org/#1}}\fi
\ifx \betal  \undefined \def \betal{\textit{et al.}}\fi
\ifx \binstitute  \undefined \def \binstitute#1{#1}\fi
\ifx \binstitutionaled  \undefined \def \binstitutionaled#1{#1}\fi
\ifx \bctitle  \undefined \def \bctitle#1{#1}\fi
\ifx \beditor  \undefined \def \beditor#1{#1}\fi
\ifx \bpublisher  \undefined \def \bpublisher#1{#1}\fi
\ifx \bbtitle  \undefined \def \bbtitle#1{#1}\fi
\ifx \bedition  \undefined \def \bedition#1{#1}\fi
\ifx \bseriesno  \undefined \def \bseriesno#1{#1}\fi
\ifx \blocation  \undefined \def \blocation#1{#1}\fi
\ifx \bsertitle  \undefined \def \bsertitle#1{#1}\fi
\ifx \bsnm \undefined \def \bsnm#1{#1}\fi
\ifx \bsuffix \undefined \def \bsuffix#1{#1}\fi
\ifx \bparticle \undefined \def \bparticle#1{#1}\fi
\ifx \barticle \undefined \def \barticle#1{#1}\fi
\bibcommenthead
\ifx \bconfdate \undefined \def \bconfdate #1{#1}\fi
\ifx \botherref \undefined \def \botherref #1{#1}\fi
\ifx \url \undefined \def \url#1{\textsf{#1}}\fi
\ifx \bchapter \undefined \def \bchapter#1{#1}\fi
\ifx \bbook \undefined \def \bbook#1{#1}\fi
\ifx \bcomment \undefined \def \bcomment#1{#1}\fi
\ifx \oauthor \undefined \def \oauthor#1{#1}\fi
\ifx \citeauthoryear \undefined \def \citeauthoryear#1{#1}\fi
\ifx \endbibitem  \undefined \def \endbibitem {}\fi
\ifx \bconflocation  \undefined \def \bconflocation#1{#1}\fi
\ifx \arxivurl  \undefined \def \arxivurl#1{\textsf{#1}}\fi
\csname PreBibitemsHook\endcsname

\bibitem{hossain2019recent}
\begin{barticle}
\bauthor{\bsnm{Hossain}, \binits{M.F.}},
\bauthor{\bsnm{Mahin}, \binits{A.U.}},
\bauthor{\bsnm{Debnath}, \binits{T.}},
\bauthor{\bsnm{Mosharrof}, \binits{F.B.}},
\bauthor{\bsnm{Islam}, \binits{K.Z.}}:
\batitle{Recent research in cloud radio access network (c-ran) for 5g cellular
  systems-a survey}.
\bjtitle{Journal of Network and Computer Applications}
\bvolume{139},
\bfpage{31}--\blpage{48}
(\byear{2019})
\end{barticle}
\endbibitem

\bibitem{gavrilovska2020cloud}
\begin{barticle}
\bauthor{\bsnm{Gavrilovska}, \binits{L.}},
\bauthor{\bsnm{Rakovic}, \binits{V.}},
\bauthor{\bsnm{Denkovski}, \binits{D.}}:
\batitle{From {C}loud {RAN} to {O}pen {RAN}}.
\bjtitle{Wirel. Pers. Commun.}
\bvolume{113}(\bissue{3}),
\bfpage{1523}--\blpage{1539}
(\byear{2020})
\end{barticle}
\endbibitem

\bibitem{mobile2011c}
\begin{botherref}
\oauthor{\bsnm{Mobile}, \binits{C.}}:
C-{RAN}: the road towards green {RAN}.
Technical report
(October 2011).
\url{shorturl.at/hLRSZ}
\end{botherref}
\endbibitem

\bibitem{checko2014cloud}
\begin{barticle}
\bauthor{\bsnm{Checko}, \binits{A.}},
\bauthor{\bsnm{Christiansen}, \binits{H.L.}},
\bauthor{\bsnm{Yan}, \binits{Y.}},
\bauthor{\bsnm{Scolari}, \binits{L.}},
\bauthor{\bsnm{Kardaras}, \binits{G.}},
\bauthor{\bsnm{Berger}, \binits{M.S.}},
\bauthor{\bsnm{Dittmann}, \binits{L.}}:
\batitle{Cloud {RAN} for mobile networks - {A} technology overview}.
\bjtitle{IEEE Communications Surveys \& Tutorials}
\bvolume{17}(\bissue{1}),
\bfpage{405}--\blpage{426}
(\byear{2014})
\end{barticle}
\endbibitem

\bibitem{ieeep802}
\begin{botherref}
\oauthor{\bparticle{of} \bsnm{IEEE~802.1}, \binits{T.-S.N.T.G.}}:
Time-sensitive networks for fronthaul
(2016).
IEEE P802.1/D0.4
\end{botherref}
\endbibitem

\bibitem{bouguen2012lte}
\begin{bbook}
\bauthor{\bsnm{Bouguen}, \binits{Y.}},
\bauthor{\bsnm{Hardouin}, \binits{E.}},
\bauthor{\bsnm{Maloberti}, \binits{A.}},
\bauthor{\bsnm{Wolff}, \binits{F.-X.}}:
\bbtitle{LTE et les R{\'e}seaux {4G}}.
\bpublisher{Editions Eyrolles},
\blocation{Paris}
(\byear{2012})
\end{bbook}
\endbibitem

\bibitem{dogra2020survey}
\begin{barticle}
\bauthor{\bsnm{Dogra}, \binits{A.}},
\bauthor{\bsnm{Jha}, \binits{R.K.}},
\bauthor{\bsnm{Jain}, \binits{S.}}:
\batitle{A survey on beyond {5G} network with the advent of {6G}: Architecture
  and emerging technologies}.
\bjtitle{IEEE Access}
\bvolume{9},
\bfpage{67512}--\blpage{67547}
(\byear{2020})
\end{barticle}
\endbibitem

\bibitem{3gpp5g}
\begin{botherref}
3GPP:
3rd Generation Partnership Project; Technical Specification Group Services and
  System Aspects; Service Requirements for the {5G} System;.
3GPP. Stage 1 (Release 16)
\end{botherref}
\endbibitem

\bibitem{romano2019imt}
\begin{botherref}
\oauthor{\bsnm{Romano}, \binits{G.}}:
{IMT}-2020 requirements and realization.
Wiley 5G Ref: The Essential 5G Reference Online,
1--28
(2019)
\end{botherref}
\endbibitem

\bibitem{siddiqui2023urllc}
\begin{botherref}
\oauthor{\bsnm{Siddiqui}, \binits{M.U.A.}},
\oauthor{\bsnm{Abumarshoud}, \binits{H.}},
\oauthor{\bsnm{Bariah}, \binits{L.}},
\oauthor{\bsnm{Muhaidat}, \binits{S.}},
\oauthor{\bsnm{Imran}, \binits{M.A.}},
\oauthor{\bsnm{Mohjazi}, \binits{L.}}:
Urllc in beyond 5g and 6g networks: An interference management perspective.
IEEE Access
(2023)
\end{botherref}
\endbibitem

\bibitem{ieee802}
\begin{botherref}
{Time-Sensitive Networking Task Group}.
\url{https://www.ieee802.org/1/pages/tsn.html}.
Accessed: 2023-10-01
\end{botherref}
\endbibitem

\bibitem{waseem2022software}
\begin{bchapter}
\bauthor{\bsnm{Waseem}, \binits{Q.}},
\bauthor{\bsnm{Din}, \binits{W.I.S.W.}},
\bauthor{\bsnm{Aminuddin}, \binits{A.}},
\bauthor{\bsnm{Mohammed}, \binits{M.H.}},
\bauthor{\bsnm{Aziza}, \binits{R.F.A.}}:
\bctitle{Software-defined networking (sdn): A review}.
In: \bbtitle{2022 5th International Conference on Information and
  Communications Technology (ICOIACT)},
pp. \bfpage{30}--\blpage{35}
(\byear{2022}).
\bcomment{IEEE}
\end{bchapter}
\endbibitem

\bibitem{polese2023understanding}
\begin{botherref}
\oauthor{\bsnm{Polese}, \binits{M.}},
\oauthor{\bsnm{Bonati}, \binits{L.}},
\oauthor{\bsnm{D’oro}, \binits{S.}},
\oauthor{\bsnm{Basagni}, \binits{S.}},
\oauthor{\bsnm{Melodia}, \binits{T.}}:
Understanding o-ran: Architecture, interfaces, algorithms, security, and
  research challenges.
IEEE Communications Surveys \& Tutorials
(2023)
\end{botherref}
\endbibitem

\bibitem{guiraud2022experimental}
\begin{bchapter}
\bauthor{\bsnm{Guiraud}, \binits{M.}},
\bauthor{\bsnm{Leclerc}, \binits{B.}},
\bauthor{\bsnm{Marc{\'e}}, \binits{O.}}:
\bctitle{An experimental platform for hard tsn}.
In: \bbtitle{2022 IEEE 19th Annual Consumer Communications \& Networking
  Conference (CCNC)},
pp. \bfpage{913}--\blpage{916}
(\byear{2022}).
\bcomment{IEEE}
\end{bchapter}
\endbibitem

\bibitem{leclerc2016transmission}
\begin{botherref}
\oauthor{\bsnm{Leclerc}, \binits{B.}},
\oauthor{\bsnm{Marcé}, \binits{O.}}:
Transmission of coherent data flow within packet-switched network.
Google Patents.
EP Patent App. EP20,140,307,006
(2016).
\url{https://www.google.com.gt/patents/EP3032781A1?cl=en}
\end{botherref}
\endbibitem

\bibitem{marce2018Coordinated}
\begin{botherref}
\oauthor{\bsnm{Marcé}, \binits{O.}},
\oauthor{\bsnm{Leclerc}, \binits{B.}},
\oauthor{\bsnm{Guiraud}, \binits{M.}}:
Coordinated timing syncronization and time-aware shaping.
Google Patents.
US10743269B2
(2018).
\url{https://patents.google.com/patent/US10743269B2/en}
\end{botherref}
\endbibitem

\bibitem{barth2018deterministic}
\begin{bchapter}
\bauthor{\bsnm{Barth}, \binits{D.}},
\bauthor{\bsnm{Guiraud}, \binits{M.}},
\bauthor{\bsnm{Leclerc}, \binits{B.}},
\bauthor{\bsnm{Marc{\'e}}, \binits{O.}},
\bauthor{\bsnm{Strozecki}, \binits{Y.}}:
\bctitle{Deterministic scheduling of periodic messages for {C}loud {RAN}}.
In: \bbtitle{2018 25th International Conference on Telecommunications (ICT)},
pp. \bfpage{405}--\blpage{410}
(\byear{2018}).
\bcomment{IEEE}
\end{bchapter}
\endbibitem

\bibitem{guiraud2021deterministic}
\begin{botherref}
\oauthor{\bsnm{Guiraud}, \binits{M.}}:
{Deterministic scheduling of periodic datagrams for low latency in {5G} and
  beyond}.
PhD thesis,
{Universit{\'e} Paris-Saclay}
(June 2021).
\url{https://tel.archives-ouvertes.fr/tel-03413419}
\end{botherref}
\endbibitem

\bibitem{DBLP:journals/corr/abs-2002-07606}
\begin{botherref}
\oauthor{\bsnm{Guiraud}, \binits{M.}},
\oauthor{\bsnm{Strozecki}, \binits{Y.}}:
Scheduling periodic messages on a shared link.
CoRR
\textbf{abs/2002.07606}
(2020)
{\href{https://arxiv.org/abs/2002.07606}{{2002.07606}}}
\end{botherref}
\endbibitem

\bibitem{Guir1905:Deterministic}
\begin{bchapter}
\bauthor{\bsnm{Barth}, \binits{D.}},
\bauthor{\bsnm{Guiraud}, \binits{M.}},
\bauthor{\bsnm{Strozecki}, \binits{Y.}}:
\bctitle{Deterministic contention management for low latency cloud {RAN} over
  an optical ring}.
In: \bbtitle{ONDM 2019 - 23rd International Conference on Optical Network
  Design and Modeling (ONDM 2019)}
(\byear{2019})
\end{bchapter}
\endbibitem

\bibitem{bartharxiv2018deterministic}
\begin{botherref}
\oauthor{\bsnm{Barth}, \binits{D.}},
\oauthor{\bsnm{Guiraud}, \binits{M.}},
\oauthor{\bsnm{Strozecki}, \binits{Y.}}:
Deterministic scheduling of periodic messages for low latency in {C}loud {RAN}.
arXiv preprint arXiv:1801.07029
(2018)
\end{botherref}
\endbibitem

\bibitem{shapiro1980scheduling}
\begin{barticle}
\bauthor{\bsnm{Shapiro}, \binits{R.D.}}:
\batitle{Scheduling coupled tasks}.
\bjtitle{Naval Research Logistics Quarterly}
\bvolume{27}(\bissue{3}),
\bfpage{489}--\blpage{498}
(\byear{1980})
\end{barticle}
\endbibitem

\bibitem{khatami2020coupled}
\begin{barticle}
\bauthor{\bsnm{Khatami}, \binits{M.}},
\bauthor{\bsnm{Salehipour}, \binits{A.}},
\bauthor{\bsnm{Cheng}, \binits{T.}}:
\batitle{Coupled task scheduling with exact delays: Literature review and
  models}.
\bjtitle{European Journal of Operational Research}
\bvolume{282}(\bissue{1}),
\bfpage{19}--\blpage{39}
(\byear{2020})
\end{barticle}
\endbibitem

\bibitem{chen2021scheduling}
\begin{barticle}
\bauthor{\bsnm{Chen}, \binits{B.}},
\bauthor{\bsnm{Zhang}, \binits{X.}}:
\batitle{Scheduling coupled tasks with exact delays for minimum total job
  completion time}.
\bjtitle{Journal of Scheduling}
\bvolume{24}(\bissue{2}),
\bfpage{209}--\blpage{221}
(\byear{2021})
\end{barticle}
\endbibitem

\bibitem{orman1997complexity}
\begin{barticle}
\bauthor{\bsnm{Orman}, \binits{A.J.}},
\bauthor{\bsnm{Potts}, \binits{C.N.}}:
\batitle{On the complexity of coupled-task scheduling}.
\bjtitle{Discrete Applied Mathematics}
\bvolume{72}(\bissue{1-2}),
\bfpage{141}--\blpage{154}
(\byear{1997})
\end{barticle}
\endbibitem

\bibitem{baptiste2010note}
\begin{barticle}
\bauthor{\bsnm{Baptiste}, \binits{P.}}:
\batitle{A note on scheduling identical coupled tasks in logarithmic time}.
\bjtitle{Discrete Applied Mathematics}
\bvolume{158}(\bissue{5}),
\bfpage{583}--\blpage{587}
(\byear{2010})
\end{barticle}
\endbibitem

\bibitem{khatami2023flow}
\begin{barticle}
\bauthor{\bsnm{Khatami}, \binits{M.}},
\bauthor{\bsnm{Salehipour}, \binits{A.}},
\bauthor{\bsnm{Cheng}, \binits{T.}}:
\batitle{Flow-shop scheduling with exact delays to minimize makespan}.
\bjtitle{Computers \& Industrial Engineering}
\bvolume{183},
\bfpage{109456}
(\byear{2023})
\end{barticle}
\endbibitem

\bibitem{zhao2020two}
\begin{barticle}
\bauthor{\bsnm{Zhao}, \binits{F.}},
\bauthor{\bsnm{He}, \binits{X.}},
\bauthor{\bsnm{Wang}, \binits{L.}}:
\batitle{A two-stage cooperative evolutionary algorithm with problem-specific
  knowledge for energy-efficient scheduling of no-wait flow-shop problem}.
\bjtitle{IEEE transactions on cybernetics}
\bvolume{51}(\bissue{11}),
\bfpage{5291}--\blpage{5303}
(\byear{2020})
\end{barticle}
\endbibitem

\bibitem{johnson1954optimal}
\begin{barticle}
\bauthor{\bsnm{Johnson}, \binits{S.M.}}:
\batitle{Optimal two-and three-stage production schedules with setup times
  included}.
\bjtitle{Naval research logistics quarterly}
\bvolume{1}(\bissue{1}),
\bfpage{61}--\blpage{68}
(\byear{1954})
\end{barticle}
\endbibitem

\bibitem{yu2004minimizing}
\begin{barticle}
\bauthor{\bsnm{Yu}, \binits{W.}},
\bauthor{\bsnm{Hoogeveen}, \binits{H.}},
\bauthor{\bsnm{Lenstra}, \binits{J.K.}}:
\batitle{Minimizing makespan in a two-machine flow shop with delays and
  unit-time operations is {NP}-hard}.
\bjtitle{Journal of Scheduling}
\bvolume{7}(\bissue{5}),
\bfpage{333}--\blpage{348}
(\byear{2004})
\end{barticle}
\endbibitem

\bibitem{leung2007scheduling}
\begin{barticle}
\bauthor{\bsnm{Leung}, \binits{J.Y.-T.}},
\bauthor{\bsnm{Li}, \binits{H.}},
\bauthor{\bsnm{Zhao}, \binits{H.}}:
\batitle{Scheduling two-machine flow shops with exact delays}.
\bjtitle{International Journal of Foundations of Computer Science}
\bvolume{18}(\bissue{02}),
\bfpage{341}--\blpage{359}
(\byear{2007})
\end{barticle}
\endbibitem

\bibitem{liu1973scheduling}
\begin{barticle}
\bauthor{\bsnm{Liu}, \binits{C.L.}},
\bauthor{\bsnm{Layland}, \binits{J.W.}}:
\batitle{Scheduling algorithms for multiprogramming in a hard-real-time
  environment}.
\bjtitle{Journal of the ACM (JACM)}
\bvolume{20}(\bissue{1}),
\bfpage{46}--\blpage{61}
(\byear{1973})
\end{barticle}
\endbibitem

\bibitem{nayak2017incremental}
\begin{barticle}
\bauthor{\bsnm{Nayak}, \binits{N.G.}},
\bauthor{\bsnm{D{\"u}rr}, \binits{F.}},
\bauthor{\bsnm{Rothermel}, \binits{K.}}:
\batitle{Incremental flow scheduling and routing in time-sensitive
  software-defined networks}.
\bjtitle{IEEE Transactions on Industrial Informatics}
\bvolume{14}(\bissue{5}),
\bfpage{2066}--\blpage{2075}
(\byear{2017})
\end{barticle}
\endbibitem

\bibitem{steiner2018traffic}
\begin{barticle}
\bauthor{\bsnm{Steiner}, \binits{W.}},
\bauthor{\bsnm{Craciunas}, \binits{S.S.}},
\bauthor{\bsnm{Oliver}, \binits{R.S.}}:
\batitle{Traffic planning for time-sensitive communication}.
\bjtitle{IEEE Communications Standards Magazine}
\bvolume{2}(\bissue{2}),
\bfpage{42}--\blpage{47}
(\byear{2018})
\end{barticle}
\endbibitem

\bibitem{dos2019tsnsched}
\begin{bchapter}
\bauthor{\bparticle{dos} \bsnm{Santos}, \binits{A.C.T.}},
\bauthor{\bsnm{Schneider}, \binits{B.}},
\bauthor{\bsnm{Nigam}, \binits{V.}}:
\bctitle{{TSNSCHED: Automated Schedule Generation for Time Sensitive
  Networking}}.
In: \bbtitle{2019 Formal Methods in Computer Aided Design (FMCAD)},
pp. \bfpage{69}--\blpage{77}
(\byear{2019}).
\bcomment{IEEE}
\end{bchapter}
\endbibitem

\bibitem{9472838}
\begin{bchapter}
\bauthor{\bsnm{Bhatia}, \binits{R.}},
\bauthor{\bsnm{Lakshman}, \binits{T.V.}},
\bauthor{\bsnm{Özkoç}, \binits{M.F.}},
\bauthor{\bsnm{Panwar}, \binits{S.}}:
\bctitle{Flowtoss: Fast wait-free scheduling of deterministic flows in time
  synchronized networks}.
In: \bbtitle{2021 IFIP Networking Conference (IFIP Networking)},
pp. \bfpage{1}--\blpage{6}
(\byear{2021})
\end{bchapter}
\endbibitem

\bibitem{korst1991periodic}
\begin{bchapter}
\bauthor{\bsnm{Korst}, \binits{J.}},
\bauthor{\bsnm{Aarts}, \binits{E.}},
\bauthor{\bsnm{Lenstra}, \binits{J.K.}},
\bauthor{\bsnm{Wessels}, \binits{J.}}:
\bctitle{Periodic multiprocessor scheduling}.
In: \bbtitle{PARLE’91 Parallel Architectures and Languages Europe},
pp. \bfpage{166}--\blpage{178}
(\byear{1991}).
\bcomment{Springer}
\end{bchapter}
\endbibitem

\bibitem{hanen1993cyclic}
\begin{bbook}
\bauthor{\bsnm{Hanen}, \binits{C.}},
\bauthor{\bsnm{Munier}, \binits{A.}}:
\bbtitle{Cyclic Scheduling on Parallel Processors: an Overview}.
\bpublisher{Universit{\'e} de Paris-Sud, Laboratoire de Recherche en
  Informatique},
\blocation{Orsay}
(\byear{1993})
\end{bbook}
\endbibitem

\bibitem{levner2010complexity}
\begin{barticle}
\bauthor{\bsnm{Levner}, \binits{E.}},
\bauthor{\bsnm{Kats}, \binits{V.}},
\bauthor{\bparticle{de} \bsnm{Pablo}, \binits{D.A.L.}},
\bauthor{\bsnm{Cheng}, \binits{T.E.}}:
\batitle{Complexity of cyclic scheduling problems: {A} state-of-the-art
  survey}.
\bjtitle{Computers \& Industrial Engineering}
\bvolume{59}(\bissue{2}),
\bfpage{352}--\blpage{361}
(\byear{2010})
\end{barticle}
\endbibitem

\bibitem{lusby2011railway}
\begin{barticle}
\bauthor{\bsnm{Lusby}, \binits{R.M.}},
\bauthor{\bsnm{Larsen}, \binits{J.}},
\bauthor{\bsnm{Ehrgott}, \binits{M.}},
\bauthor{\bsnm{Ryan}, \binits{D.}}:
\batitle{Railway track allocation: models and methods}.
\bjtitle{OR spectrum}
\bvolume{33}(\bissue{4}),
\bfpage{843}--\blpage{883}
(\byear{2011})
\end{barticle}
\endbibitem

\bibitem{serafini1989mathematical}
\begin{barticle}
\bauthor{\bsnm{Serafini}, \binits{P.}},
\bauthor{\bsnm{Ukovich}, \binits{W.}}:
\batitle{A mathematical model for periodic scheduling problems}.
\bjtitle{SIAM Journal on Discrete Mathematics}
\bvolume{2}(\bissue{4}),
\bfpage{550}--\blpage{581}
(\byear{1989})
\end{barticle}
\endbibitem

\bibitem{zhang2019solving}
\begin{barticle}
\bauthor{\bsnm{Zhang}, \binits{Y.}},
\bauthor{\bsnm{Peng}, \binits{Q.}},
\bauthor{\bsnm{Yao}, \binits{Y.}},
\bauthor{\bsnm{Zhang}, \binits{X.}},
\bauthor{\bsnm{Zhou}, \binits{X.}}:
\batitle{Solving cyclic train timetabling problem through model reformulation:
  Extended time-space network construct and alternating direction method of
  multipliers methods}.
\bjtitle{Transportation Research Part B: Methodological}
\bvolume{128},
\bfpage{344}--\blpage{379}
(\byear{2019})
\end{barticle}
\endbibitem

\bibitem{9234005}
\begin{barticle}
\bauthor{\bsnm{Zhang}, \binits{F.}},
\bauthor{\bsnm{Mei}, \binits{Y.}},
\bauthor{\bsnm{Nguyen}, \binits{S.}},
\bauthor{\bsnm{Zhang}, \binits{M.}}:
\batitle{Evolving scheduling heuristics via genetic programming with feature
  selection in dynamic flexible job-shop scheduling}.
\bjtitle{IEEE Transactions on Cybernetics}
\bvolume{51}(\bissue{4}),
\bfpage{1797}--\blpage{1811}
(\byear{2021})
\end{barticle}
\endbibitem

\bibitem{gartner2023fast}
\begin{barticle}
\bauthor{\bsnm{G{\"a}rtner}, \binits{C.}},
\bauthor{\bsnm{Rizk}, \binits{A.}},
\bauthor{\bsnm{Koldehofe}, \binits{B.}},
\bauthor{\bsnm{Guillaume}, \binits{R.}},
\bauthor{\bsnm{Kundel}, \binits{R.}},
\bauthor{\bsnm{Steinmetz}, \binits{R.}}:
\batitle{Fast incremental reconfiguration of dynamic time-sensitive networks at
  runtime}.
\bjtitle{Computer Networks}
\bvolume{224},
\bfpage{109606}
(\byear{2023})
\end{barticle}
\endbibitem

\bibitem{Schwiebert1996ANA}
\begin{barticle}
\bauthor{\bsnm{Schwiebert}, \binits{L.}},
\bauthor{\bsnm{Jayasimha}, \binits{D.}}:
\batitle{A necessary and sufficient condition for deadlock-free wormhole
  routing}.
\bjtitle{J. Parallel Distributed Comput.}
\bvolume{32},
\bfpage{103}--\blpage{117}
(\byear{1996})
\end{barticle}
\endbibitem

\end{thebibliography}

\end{document}